\newcommand{\ds}{\displaystyle}
\newcommand{\calg}{C$^*$-algebra}
\newcommand{\cs}{C$^*$}
\newcommand{\D}{\operatorname{\mathscr D}}
\newcommand{\Sch}{\operatorname{\mathscr S}}
\newcommand{\Neigh}{\operatorname{\mathcal N}}
\newcommand{\Aut}{\operatorname{Aut}}
\newcommand{\Diff}{\operatorname{Diff}}
\newcommand{\End}{\operatorname{End}}
\newcommand{\Mat}{\operatorname{Mat}}
\newcommand{\Ran}{\operatorname{Ran}}
\newcommand{\Tr}{\operatorname{Tr}}
\newcommand{\Ad}{\operatorname{Ad}}
\newcommand{\diag}{\operatorname{diag}}
\newcommand{\de}{\text{\upshape d}}
\newcommand{\Id}{\operatorname{Id}}
\newcommand{\supp}{\operatorname{supp}}
\newcommand{\Span}{\operatorname{Span}}
\newcommand{\deriv}[4][]{\frac{{#2}^{#1}#3}{{#2}{#4}^{#1}}}
\newcommand{\pder}[3][]{\deriv[#1]{\partial}{#2}{#3}}
\newcommand{\norm}[1]{\Vert{#1}\Vert}
\newcommand{\qc}[1]{{#1}_q\oplus{#1}_c}
\newcommand{\IR}{\mathbbm{R}}
\newcommand{\IC}{\mathbbm{C}}
\newcommand{\IE}{\mathbbm{E}}
\newcommand{\Fou}{\mathscr{F}}
\newcommand{\Bound}{\mathfrak{B}}
\newcommand{\Hilb}{\mathscr{H}}
\newcommand{\Poin}{\mathscr{P}}
\newcommand{\Lor}{\mathscr{L}}
\newcommand{\At}{\Big|}
\newcommand{\bv}[1]{\mathbf{#1}}
\newcommand{\hv}[1]{\hat{\bv{#1}}}
\newcommand{\Lup}{\Lor^\uparrow_+}
\newcommand{\Pup}{\Poin^\uparrow_+}
\newcommand{\A}[1]{\mathfrak{#1}}
\renewcommand{\Im}{\operatorname{Im}}
\newcommand{\sstd}{\sigma^{\text{std}}}
\newcommand{\Estd}{\mathcal E_{\text{std}}}
\numberwithin{equation}{section}
\theoremstyle{plain}
\newtheorem{theorem}{Theorem}\numberwithin{theorem}{section}
\newtheorem{proposition}[theorem]{Proposition}
\newtheorem{corollary}[theorem]{Corollary}
\newtheorem{lemma}[theorem]{Lemma}
\newtheorem{definition}[theorem]{Definition}
\begin{document}

	\title{A scale-covariant quantum space-time}
	\author{Claudio Perini}
    \email      {claude.perin@gmail.com}
    \affiliation{Institute for Gravitation and the Cosmos, Physics Department,
                 Penn State, University Park, PA 16802-6300, USA}
\author{Gabriele Nunzio Tornetta}
    \email      {g.tornetta.1@research.gla.ac.uk}
    \affiliation{School of Mathematics and Statistics, University of Glasgow,
                 15 University Gardens, G12 8QW, Scotland}

	\begin{abstract}
	A noncommutative space-time admitting dilation symmetry was briefly mentioned in the seminal work \cite{doplicher95} of Doplicher, Fredenhagen and Roberts.
In this paper we explicitly construct the model in details and carry out an in-depth analysis. The \calg\ that describes this quantum space-time is determined, 
and it is shown that it admits an action by $*$-automorphisms of the dilation group, along with the expected Poincaré covariance.
In order to study the main physical properties of this scale-covariant model, a free
scalar neutral field is introduced as a investigation tool. Our key results are then the
loss of locality and the irreducibility, or triviality, of special field algebras
associated with regions of the ordinary Minkowski space-time. It turns out, in the conclusions, that this analysis allows also to argue
on viable ways of constructing a full conformally covariant model for quantum
space-time.

	\end{abstract}
	
	\keywords{DFR model; quantum space-time; algebraic quantum field theory.}
	\maketitle
	
	
	\section{Introduction}
In this paper we study a non-commutative space-time of DFR-type, that can be obtained as the limiting scale-free case of the original DFR model \cite{doplicher95}. The main mathematical interest to study this model is that it is Poincaré \emph{and} dilation covariant, thus it possesses almost all the symmetries given by the conformal group. The issue of implementing the remaining symmetry, which is the relativistic ray inversion, is discussed in the concluding section.

The paper is organized as follows. The basic model of quantum space-time is reviewed in the subsequent section \ref{chap:basic model}. The scale-covariant model is introduced and analyzed in section \ref{chap:model}, and in section \ref{chap:field} a free neutral scalar field is used to test some properties of the scale-covariant quantum space-time.

Let us review in this introduction the main motivation behind the DFR quantum space-time model, here called the basic model. 

Classical general relativity is a well-established theory supported by practically all experiences. But the concurrence of its principles with the basic principles of quantum mechanics point to difficulties at small length scales. If we attach an operational meaning to space-time events to within a desired accuracy, a break-down of the theory is expected to occur at a very short scale.

The idea that at small distances there must be limitations on the localizability of spacetime events is very old and can be traced back to Wigner and Salecker \cite{Wigner:1957ep, Salecker:1957be}. This idea has been revived in \cite{doplicher95} to motivate the introduction of space-time uncertainties. Let us suppose that we are interested in measuring the position of an object to within the accuracy $\Delta x$. Then, according to Heisenberg's principle of indeterminacy, an uncontrolled momentum $\Delta p$ such that $\Delta p \Delta x\gtrsim\hbar$ is involved in the measuring process. Using the constant $c$, i.e. the speed of light in the vacuum, we may associate to this momentum an energy $\Delta E$ such that
\begin{equation*}
	\Delta E\Delta x\approx \hbar c.
\end{equation*}
If we make the further assumption that this uncontrolled energy $\Delta E$ will be spherically distributed in space, we can easily compute its Schwarzschild radius, namely
\begin{equation}
	r_S=\frac{2G\Delta E}{c^4}.
\end{equation}
Now, if the Schwarzschild radius associated to this measurement is larger than the required accuracy $\Delta x$, the event is hidden from a distant observer and no operational meaning can be attached to it anymore. Thus the break-down ought to occur whenever $\Delta x$ becomes comparable to $r_S$, namely
\begin{equation}
	\Delta x\approx r_S\approx\lambda_P,
\end{equation}
where $\lambda_P$ is the Planck's fundamental length,
\begin{equation}
	\lambda_P = \sqrt{\frac{G\hbar}{c^3}}.
\end{equation}
This difficulty is avoided if we postulate the following principle of gravitational stability against localization of events, which states that
\begin{quote}
\emph{the gravitational field should not be so strong to prevent the event to be seen from a distant observer - distant compared to the Planck scale.}
\end{quote}
Non-commutative geometry can be viewed as a way to address the issue of the localization of events by replacing a given classical space-time with a non-commutative space-time, where quantum-mechanical uncertainty relations between space-time events hold. Space-time uncertainty relations prevent the sharp localization of events, since the points become \emph{fuzzy}, or \emph{blurred}, at very short scales, and preserve gravitational stability against measurement.

Being motivated by the issue of localizability, which merges classical gravity with quantum physics, non-commutative geometry is certainly an aspect of quantum gravity. However it does not usually follow a general relativistic perspective, since the space-time which is quantized is \emph{fixed} and not dynamical as in the classical gravitational theory. It is possible that the gravitational stability can be derived from basic principles in a theory of quantum gravity where gravity and quantum physics are truly unified and quantum space-time is dynamical \cite{rovelli04,oriti09}.

In the non-commutative geometry approach, a suitable expression for the uncertainty relations on Minkowski space-time can be derived explicitly, and leads to the DFR basic model \cite{doplicher95}. Using the Einstein's field equations linearized around the flat solution, semi-classical considerations and the principle of gravitational stability against localization of events,  the following uncertainty relations have been derived,
\begin{subequations}\label{eq:coord_uncert}
	\begin{align}
		\Delta x_0\sum_{k=1}^3\Delta x_k&\gtrsim\lambda_P^2,\\
		\sum_{i<k=1}^3\Delta x_i\Delta x_k&\gtrsim\lambda_P^2,
	\end{align}
\end{subequations}
where $\Delta x_\mu$ is the uncertainty associated to the Cartesian coordinate $x_\mu$ of an event in Minkowski space-time.

In turn, the uncertainty relations do arise from quantum-mechanical operators. Indeed it was shown that if the Minkowski coordinates are promoted to self-adjoint operators, subjected to a natural set of Lorentz-invariant quantum conditions, they lead precisely to the uncertainty relations \eqref{eq:coord_uncert} above (see \cite{doplicher95} and next section). Non-commuting coordinates, and their generated operator algebra, are the basis of the mathematical description of non-commutative space-times. Non-commutativity forbids arbitrarily accurate simultaneous measurements of all the coordinates of an event. Thus we expect a `point' in quantum space-time to look like a \emph{blurred dot} of characteristic linear dimensions given by Planck's length $\lambda_P$. This \emph{unfocused} view of space-time should inevitably lead to the loss of the property of locality of quantum field theories on quantum space-time, and we shall see in details (cf. sec. \ref{sec:cov_and_loc}) that this is indeed the case.

	\section{The basic model of quantum space-time\label{chap:basic model}}
This section is dedicated to a review of the Doplicher-Fredenhagen-Roberts (DFR) basic model of quantum Minkowski space-time \cite{doplicher95}. We will not give all the mathematical proofs, that can all be found in \cite{doplicher95}. This section will serve as an introduction and reference to the scale-covariant model analyzed in section \ref{sec:scale-covariant}.

The first step for the quantization is the introduction of unbounded self-adjoint operators $\{q_\mu|\mu=0,\ldots,3\}$ which represent the quantum space-time coordinates. As mentioned above, the uncertainty relations come from the ansatz that these operators do not commute, i.e.
\begin{equation}\label{eq:cr}
		[q_\mu, q_\nu] \subset i\lambda_P^2Q_{\mu\nu},
\end{equation}
where the dimensionless operators $Q_{\mu\nu}$ are the self-adjoint closures of the commutators\footnote{or more correctly of  $i\lambda_P^{-2}[q_\nu,q_\mu]$.}, and from other algebraic conditions given below. 

For simplicity reasons, the commutators $Q_{\mu\nu}$ are assumed to commute with all the coordinate operators $q_\mu$, i.e. they are the generators of the center of the associated Lie algebra.
Moreover non-commutativity is evaluated by the operator
	\begin{equation}\label{eq:pseudoscalar}
		-\frac12Q_{\mu\nu}(*Q)^{\mu\nu},
	\end{equation}
which is of course proportional to $\epsilon^{\mu\nu\alpha\beta}q_\mu q_\nu q_\alpha q_\beta$, and so vanishing in the commutative case. The above object \eqref{eq:pseudoscalar} is invariant under the most general proper Poincar\'e
transformation
	\begin{equation}
		q\mapsto\Lambda q + a\Id,
	\end{equation}
but it is not invariant under time or space reflection, i.e. the improper Poincar\'e
group, for it behaves like a pseudo-scalar. On the other hand, the operator
	\begin{equation}\label{eq:invariant}
		\frac12Q_{\mu\nu}Q^{\mu\nu}
	\end{equation}
is a genuine full Lorentz scalar.

Recalling that a general skew-symmetric 2-tensor on $\IR^4$ can be represented by
means of two vectors $\bv e,\bv m\in\IR^3$, namely its electric and magnetic component
respectively, in analogy with the electromagnetic 2-form of electrodynamics, we can express the components
of a generic skew-symmetric tensor $\theta$ on $\IR^4$ in matrix notation by
	\begin{equation}
		\theta_{\mu\nu} = \begin{bmatrix}0&\bv e^T\\-\bv e&*\bv m\end{bmatrix},\qquad \bv e,\bv m\in \IR^3
	\end{equation}
where the superscript $T$ denotes matrix transposition and $*\bv m$ is the Hodge-dual
of $\bv m$ in $\IR^3$. This leads to the equivalent expressions
\begin{subequations}\label{eq:scalar_and_pscalar}
	\begin{align}
		-\frac12Q_{\mu\nu}(*Q)^{\mu\nu} &= \bv e\cdot\bv m + \bv m\cdot\bv e,\\
		\frac12Q_{\mu\nu}Q^{\mu\nu} &= \norm{\bv m}^2-\norm{\bv e}^2,
	\end{align}
\end{subequations}
for the operators \eqref{eq:pseudoscalar} and \eqref{eq:invariant} respectively. The basic model of quantum space-time arises by requiring symmetry in both $\bv e$ and $\bv m$ and it is then defined by the following quantum conditions:
\begin{subequations}\label{eq:basic}
	\begin{align}
		\frac12Q_{\mu\nu} Q^{\mu\nu}&=0,\\
		\label{eq:basic:second}\left(\frac12Q_{\mu\nu} (*Q)^{\mu\nu}\right)^2&=\Id,\\
		[q_\lambda,Q_{\mu\nu}] &= 0\qquad\forall\lambda,\mu,\nu=0,\ldots,3.
	\end{align}
\end{subequations}
The first one expresses the sought symmetry in $\bv e$ and $\bv m$. The second condition sets the characteristic length scale of non-commutativity to $1$ in natural units, that is to $\lambda_P$ in generic units. The third additional condition is just a statement of the centrality of the $Q$s.

It has been shown in \cite{doplicher95} that the quantum conditions \eqref{eq:basic} imply the uncertainty relations
\begin{subequations}
	\begin{align}
		\Delta_\omega q_0\sum_{k=1}^3\Delta_\omega q_k&\geq\frac12\lambda_P^2,\\
		\sum_{i>k=1}^3\Delta_\omega q_i\Delta_\omega q_k&\geq\frac12\lambda_P^2,
	\end{align}
\end{subequations}
where $\omega$ is any state in the domain of the operators $[q_\mu,q_\nu]$.

\subsection{Irreducible representations of the quantum coordinates}

So far we haven't specified any representation of the above operators on a Hilbert space. Here we shall review the irreducible representations of the quantum coordinates for the DFR basic model. Throughout this section we will work with a unit system such that $\lambda_P=1$.

In the spirit of von Neumann's proof of uniqueness of the Schr\"odinger's representation, we shall only consider \emph{regular} representations of the relations \eqref{eq:basic}, i.e. those that can be `integrated' into Weyl's
form
	\begin{equation}\label{eq:weyl}
		U(\alpha)=e^{i\alpha^\mu q_\mu},\qquad\alpha\in\IR^4
	\end{equation}
and that satisfy the Weyl's commutation relations
	\begin{equation}\label{eq:weyl_relations}
		U(\alpha)U(\beta)=e^{-\frac i2\sigma(\alpha,\beta)}U(\alpha+\beta),\qquad
			\sigma(\alpha,\beta)=\alpha^\mu Q_{\mu\nu}\beta^\nu.
	\end{equation}

Notice that \eqref{eq:weyl} is a sort of quantum plane wave, and \eqref{eq:weyl_relations} gives the rule for multiplying plane waves. Concerning the irreducible representations of \eqref{eq:weyl_relations}, a key role is played by the operators $Q_{\mu\nu}$, for since they are supposed to be central, Shur's lemma implies that for an irreducible representation they are proportional to the identity operator, namely
	\begin{equation}\label{eq:Q_irreducible}
		Q_{\mu\nu}=\sigma_{\mu\nu}\Id,
	\end{equation}
where $\sigma$ is a skew-symmetric 2-tensor on the usual Minkowski space-time, and its
electric and magnetic components satisfy the numerical counterpart of the conditions
\eqref{eq:basic}. Thus in an irreducible representation eq. \eqref{eq:cr} becomes
\begin{equation}\label{eq:cr_irreducible}
		[q_\mu, q_\nu] \subset i\sigma_{\mu\nu}\Id,
	\end{equation}
and the Weyl's relations \eqref{eq:weyl_relations} hold, with $Q$s replaced by the corresponding $\sigma$s. The claim that $\sigma$ is a tensor is justified by the properties of the joint spectrum $\Sigma$ of the operators $Q_{\mu\nu}$.  As a manifold, $\Sigma$ is homeomorphic to $TS^2\times\mathbb \{-1,+1\}$, i.e. the disjoint union of two tangent bundles of the 2-sphere. Further properties of $\Sigma$ are
related to its behavior under the action of the full Lorentz group given by
\begin{subequations}\label{eq:action_on_jsp}
	\begin{align}
		\Lor\times\Sigma &\to \Sigma\\
		(\Lambda,\sigma)&\mapsto \Lambda\sigma\Lambda^T.
	\end{align}
	\end{subequations}
It turns out that $\Sigma$ is a homogeneous space under this action, and setting
$\Sigma_\pm := TS^2\times\{\pm1\}$, then the action of the proper orthochronous Lorentz
subgroup $\Lup$ on each connected part $\Sigma_\pm$ is transitive.

Now observe that, since the Lorentz group acts transitively on $\Sigma$ through the action previously given,
it is possible to fix a special point $\sigma_0\in\Sigma$ and link any desired point $\sigma$ to it by means of a suitable Lorentz transformation
$\Lambda_\sigma\in \Lor$ such that 
	\begin{equation}\label{eq:sigma_transform}
		\Lambda_\sigma\sigma_0\Lambda_\sigma^T=\sigma.
	\end{equation}
A suitable choice in the case of the Doplicher-Fredenhagen-Roberts basic model is $\bv e=\bv m=(0,1,0)$, i.e. the
skew-symmetric block matrix
	\begin{equation}
		\sigma_0 = \begin{bmatrix}0 & -\Id_2\\\Id_2 & 0\end{bmatrix}.
	\end{equation}
Denoting the standard symplectic form on the plane by  $J$, namely
	\begin{equation}\label{eq:stdsympl}
		J=\begin{bmatrix}0&-1\\1&0\end{bmatrix},
	\end{equation}
then the following equivalences hold,
	\begin{equation}\label{eq:JplusJ}
		\sigma_0 \cong J\otimes\Id_2 \cong J \oplus J.
	\end{equation}
We shall denote with $q^\sigma$ the quantum coordinates in a regular irreducible representation labeled by the skew-symmetric matrix $\sigma$. Equation \eqref{eq:JplusJ} shows that we are exactly in the situation of non-relativistic quantum mechanics with a 4-dimensional phase space. Hence, by the von Neumann's uniqueness theorem it follows that all the irreducible
representations labeled by the standard matrix $\sigma_0$ are unitarily equivalent to the Schr\"odinger's representation. Explicitly, the underlying Hilbert space can be realized as $L^2(\IR)\otimes L^2(\IR)$, and the quantum coordinates are realized as
\begin{align}
q^{\sigma_0}_0&=Q\otimes \Id,\\
q^{\sigma_0}_1&=P\otimes \Id,\\
q^{\sigma_0}_2&=\Id\otimes Q,\\
q^{\sigma_0}_3&=\Id\otimes P,
\end{align}
where $Q$ is the multiplication operator by $s$ in $L^2(\IR,ds)$, and $P$ is the differentiation operator $\frac{1}{i}\frac{d}{ds}$. Although $q^\sigma$ and $q^{\sigma'}$ are not equivalent
whenever $\sigma\neq\sigma'$, if we take $\Lambda\in \Lor$ and consider the Lorentz transformation
	\begin{equation}
		q^{\sigma_0}\mapsto \Lambda q^{\sigma_0},
	\end{equation}
then the commutation relations for the irreducible representations \eqref{eq:cr_irreducible}
transform accordingly into
	\begin{equation}
		[(\Lambda q^\sigma)_\mu,(\Lambda q^\sigma)_\nu] =
			i\sigma_{\mu\nu}\Id,\qquad\sigma = \Lambda\sigma_0\Lambda^T.
	\end{equation}
Along with the transitivity of $\Lor$ on $\Sigma$, the above line implies that each irreducible
representation $q^\sigma$ must be unitarily equivalent to $\Lambda_\sigma q^{\sigma_0}$
for some $\Lambda_\sigma\in \Lor$. Hence we conclude that any irreducible representation
at $\sigma\in\Sigma$ can be expressed in terms of a Lorentz transformation of the irreducible representation at $\sigma_0\in\Sigma$.
\subsection{The \calg\ of quantum space-time and localization states\label{sec:calg_basic}}

We introduce the non-commutative Banach *-algebra $\mathcal E_0$ of functions $\mathcal C_0(\Sigma,L_1(\IR^4))$ from $\Sigma$ to $L_1(\IR^4,\text{d}^4\alpha)$ that vanish at infinity. The product is defined as
\begin{align}
		(f\times g)(\sigma,\alpha)&:=\int f(\sigma,\alpha')g(\sigma,\alpha-\alpha')e^{\frac i2\sigma(\alpha,\alpha')}\de^4\alpha'.
\end{align}
Now we are in a situation that much resembles that of ordinary quantum mechanics, namely the algebra $\mathcal E_0$ has non-degenerate representations in one-to-one correspondence with the \emph{regular} representations of the quantum conditions \eqref{eq:basic}. Thus the construction of the \calg\ describing quantum space-time may be carried out
by mathematical analogy with the C*-algebra describing the non-commutative phase space of a quantum particle.

We recall that the \calg\ of quantum spacetime is defined as the enveloping \calg\ of the Banach *-algebra $\mathcal E_0$. It is the non-commutative counterpart of the commutative C*-algebra of continuous functions on an ordinary manifold.

The result of this procedure is in the following
	\begin{theorem}[\cite{doplicher95}] \label{eq:basic_algebra} The \calg\ $\mathcal E$ of the
		basic model of quantum space-time is isomorphic to $C_0(\Sigma,\mathcal K)$,
		where $\mathcal K$ is the \calg\ of all the compact operators on a separable
		Hilbert space.
	\end{theorem}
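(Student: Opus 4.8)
The plan is to realise $\mathcal E$ as a continuous field of C*-algebras over $\Sigma$ with constant fibre $\mathcal K$, and then to trivialise this field by means of the globally defined Weyl system. The starting observation is that the twisted product on $\mathcal E_0$ does not mix different points of $\Sigma$: for fixed $\sigma$ the defining formula for $f\times g$ is precisely the $\sigma$-twisted convolution on $L_1(\IR^4)$. Hence each evaluation $f\mapsto f(\sigma,\cdot)$ is a $*$-homomorphism of $\mathcal E_0$ onto the twisted convolution algebra $\mathcal E_0^\sigma:=(L_1(\IR^4),\times_\sigma)$, and $\mathcal E_0$ is the algebra of $C_0$-sections of the field $\sigma\mapsto\mathcal E_0^\sigma$.

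First I would identify each fibre. By the quantum conditions every $\sigma\in\Sigma$ is a non-degenerate (symplectic) form on $\IR^4$: at the base point $\sigma_0\cong J\oplus J$ by \eqref{eq:JplusJ}, and the transitivity of the Lorentz action propagates non-degeneracy to all of $\Sigma$. Thus $\times_\sigma$ is the convolution of the Weyl system on a $4$-dimensional symplectic space, and the integrated representation $\pi_\sigma(h)=\int h(\alpha)\,U_\sigma(\alpha)\,\de^4\alpha$, with $U_\sigma(\alpha)=e^{i\alpha^\mu q^\sigma_\mu}$ the Weyl operators on the fixed Hilbert space $\Hilb=L^2(\IR^2)$, converts $\times_\sigma$ into operator composition and the involution into the adjoint. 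By the Stone--von Neumann theorem the regular representations of \eqref{eq:weyl_relations} at fixed $\sigma$ are all quasi-equivalent to $\pi_\sigma$, which is irreducible; consequently the enveloping C*-algebra of $\mathcal E_0^\sigma$ is isomorphic to $\mathcal K=\mathcal K(\Hilb)$, the map $\pi_\sigma$ having norm-dense range in $\mathcal K$ (Schwartz-class $h$ already mapping to trace-class operators).

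Next I would assemble these fibrewise identifications into a single map. Writing $q^\sigma=\Lambda_\sigma q^{\sigma_0}$ by means of a continuous section $\sigma\mapsto\Lambda_\sigma$ of the transitive action \eqref{eq:sigma_transform}, all the $\pi_\sigma$ act on the one fixed space $\Hilb$, so the bundle of fibre algebras is the trivial bundle $\Sigma\times\mathcal K(\Hilb)$ and no Dixmier--Douady obstruction can arise. Define $\Phi\colon\mathcal E_0\to C_0(\Sigma,\mathcal K)$ by $\Phi(f)(\sigma)=\pi_\sigma\!\big(f(\sigma,\cdot)\big)$; it is a $*$-homomorphism by the fibrewise statements above. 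I would then check that $\Phi(f)$ is genuinely a continuous section vanishing at infinity: continuity of $\sigma\mapsto\Phi(f)(\sigma)$ follows from continuity of $f$ and of $\Lambda_\sigma$ together with the $L_1$-continuity of the integrated representation, while the $C_0$ behaviour on $\Sigma$ is inherited from $f\in C_0(\Sigma,L_1)$.

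The main obstacle is the remaining isometry and surjectivity. For isometry one must show that the universal (enveloping) C*-norm on $\mathcal E_0$ equals $\sup_\sigma\norm{\pi_\sigma(f(\sigma,\cdot))}$. The inequality $\le$ is clear since each $\pi_\sigma$ is a representation; for $\ge$ I would disintegrate an arbitrary non-degenerate representation of $\mathcal E_0$ over $\Sigma$, using the central action of $C_0(\Sigma)$ by fibrewise multiplication, into fibre representations, each of which is a regular representation of the Weyl relations at some $\sigma$ and hence, by Stone--von Neumann again, quasi-equivalent to $\pi_\sigma$. This rules out any exotic representation enlarging the norm, so $\Phi$ is isometric and extends to $\mathcal E$. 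Finally $\Phi(\mathcal E_0)$ is a $C_0(\Sigma)$-submodule of $C_0(\Sigma,\mathcal K)$ with norm-dense fibres, so a Stone--Weierstrass argument for C*-bundles forces its closure to be everything; therefore $\mathcal E\cong C_0(\Sigma,\mathcal K)$. I expect the disintegration / universal-norm step to be the delicate one, since it is there that the absence of pathological representations of the \emph{section} algebra, and not merely of each fibre, must be secured.
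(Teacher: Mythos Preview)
The paper does not supply its own proof of this theorem: it is quoted from \cite{doplicher95}, and the surrounding section explicitly warns that ``we will not give all the mathematical proofs, that can all be found in \cite{doplicher95}.'' There is therefore no in-paper argument to compare your proposal against.

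That said, your outline is the standard one and matches the strategy of the original source: view $\mathcal E_0$ fibrewise over $\Sigma$, identify each fibre with the CCR algebra via Stone--von Neumann (using that the second quantum condition forces every $\sigma\in\Sigma$ to be non-degenerate), and then trivialise the resulting continuous field to obtain $C_0(\Sigma,\mathcal K)$. Your identification of the disintegration/universal-norm step as the delicate point is apt. One further caution: you invoke a \emph{global} continuous section $\sigma\mapsto\Lambda_\sigma$ of the Lorentz action on $\Sigma$. In the scale-covariant model this is granted by Corollary~\ref{cor:section}, but for the basic model (where $\Sigma\cong TS^2\times\{\pm1\}$ and the stabilisers are non-trivial) the existence of such a section is not automatic and deserves an argument; alternatively, one can bypass it by working locally and appealing to the triviality criteria for continuous fields with fibre $\mathcal K$, as in the DFR paper.
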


Following \cite{doplicher95}, we shall interpret each state $\omega\in\mathcal S(\mathcal E)$
as giving localization information for events occurring in quantum space-time. The structure of the states, as well as that of the algebra $\mathcal E$, strongly
depends on the model under consideration. In the case of the basic model that we
are considering in this section, the uncertainty
	\begin{equation}
		\sum_\mu(\Delta_\omega q^{\sigma_0}_\mu)^2
	\end{equation}
attains its minimum value on the ground state of a two dimensional harmonic
oscillator, i.e. on a Gaussian wave-packet. This allows us to parametrize states
of optimal localization by means of 4-vectors in Minkowski space-time and measures $\mu$
carried by a subset of the base space of $\Sigma$, i.e. the unit sphere $\Sigma^{(1)}$,
as shown in \cite[Proposition 3.4]{doplicher95}. Explicitly, the optimal localization states take the form
	\begin{equation}\label{eq:optimal}
		\omega_{x,\mu}(f)=\int f(\sigma,\alpha)e^{i\alpha^\mu x_\mu}
			e^{-\frac12\sum_\mu\alpha_\mu^2}\ \de^4\alpha\ \de\mu(\sigma),\qquad f\in\mathcal E_0,
	\end{equation}
where we are restricting to the Banach $*$-algebra $\mathcal E_0$ (cf \cite{doplicher95}). We shall give more details on this point for the case of the scale-covariant model, which is our central interest in this paper.



\subsection{Quantum field theory on Quantum Space-time\label{sec:qft}}

A possible way of introducing calculus on quantum space-time is discussed in \cite[§5]{doplicher95}.
For the sake of the reader we give here those definitions, for they are useful in
introducing quantum field theory on quantum space-time.

Let $\{q^\mu$, $\mu=0,\ldots,3\}$ be the unbounded self-adjoint operators associated
with the coordinates of space-time, and let $f\in\Fou[L^1(\IR^4)]$, where $\Fou$ denotes the Fourier transform. The operator
$f(q)$ can be defined as
	\begin{equation}\label{eq:vnncc}
		f(q) := \int \check f(\alpha)e^{i\alpha^\mu q_\mu}\de^4\alpha,
	\end{equation}
in the spirit of von Neumann's non-commutative functional calculus.

Derivatives with respect to the space-time coordinates may be defined in terms of
the infinitesimal generators of the group of translations, which act as automorphisms
$\tau_\xi$, $\xi\in\IR^4$, via
	\begin{equation}
		\tau_\xi(f(q))=f(q-\xi\Id).
	\end{equation}
The derivative $\partial_\mu$ of a function of the quantum coordinates is then defined as
	\begin{equation*}
		\partial_\mu f(q):=\pder{}{\xi^\mu}\At_0\tau_{-\xi}(f(q)),
	\end{equation*}
i.e.
	\begin{equation}\label{eq:derivative}
		\partial_\mu f(q):=\pder{}{\xi^\mu}\At_0f(q+\xi\Id).
	\end{equation}

Following \cite{doplicher95, piacitelli11}, the above definitions allow
us to introduce a free scalar neutral field $\phi$ on quantum space-time, leading
to the usual Fock construction based on creation and annihilation operators. Letting
$\de\Omega_m^+$ denote the invariant measure on the hyperboloid of mass $m$ contained
in the future light-cone $\bar V^+$, i.e. \cite{haag96}
	\begin{equation}
		\de\Omega_m^+(\bv k)=\delta(k^2-m^2)\theta(k^0)\de^4k,
	\end{equation}
the field $\phi$ will then be defined by
	\begin{equation}\label{eq:scalar_field}
		\phi(q):=\frac1{(2\pi)^{3/2}}\int\left[e^{ik^\mu q_\mu}\otimes a(\bv k) +
			e^{-ik^\mu q_\mu}\otimes a(\bv k)^*\right]\de\Omega_m^+(\bv k),
	\end{equation}
where the first factor is an element of the \calg\ of the quantum space-time $\mathcal E$,
and the second one is an operator acting on $\Hilb_F$, the Hilbert space of the Fock
construction. As expected, $\phi(q)$ satisfies the Klein-Gordon equation, for a simple
and direct computation shows that
	\begin{equation}
		(\square + m^2)\phi(q) = 0.
	\end{equation}

There is a natural way to define a map from the states of $\mathcal S(\mathcal E)$
of the \calg\ describing the basic model of quantum space-time to operators on the
Hilbert space $\Hilb_F$, namely
	\begin{equation}\label{eq:field_on_state}
		\phi(\omega) := (\omega\otimes\Id)(\phi(q)),
	\end{equation}
for a generic state $\omega\in\mathcal S(\mathcal E)$. With a few but easy steps
we may cast the above definition in an alternative form, i.e.
	\begin{equation}
		\phi(\omega) = \int\phi(x)\psi_\omega(x)\de^4x,
	\end{equation}
where we have introduced the Fourier transform of $\omega(e^{ik^\mu q_\mu})$, namely
	\begin{equation}\label{eq:psi}
		\psi_\omega(x)=\frac1{(2\pi)^4}\int\omega(e^{ik^\mu q_\mu})e^{-ik^\mu x_\mu}\de^4k,
	\end{equation}
and where $\phi(x)$ is the usual free scalar neutral field on classical space-time
(cf. \cite[§6]{doplicher95}). Thus we have regained the familiar expression of the
scalar neutral free field smeared out with a ``test'' function, namely
	\begin{equation}
		\phi[\psi_\omega] = \int\phi(x)\psi_\omega(x)\de^4x.
	\end{equation}
For example, for a optimal localization state, the associated function $\psi_{\omega_{\xi,\mu}}(x)$
turns out to be
	\begin{equation}\label{eq:testfunc}
		\psi_{\omega_{\xi,\mu}}(x) = \frac1{(2\pi)^2}e^{-\frac12\sum_\mu(x_\nu-\xi_\nu)^2}.
	\end{equation}

The locality of such a field theory can be tested by evaluating the commutator
$[\phi(\omega_\xi),\phi(\omega_\eta)]$ on optimal localization states
$\omega_\xi$ and $\omega_\eta$. Denoting the usual commutator $[\phi(x),\phi(y)]$
for the scalar neutral free field $\phi(x)$ on classical space-time by $i\Delta(x-y)$, that is \cite{araki00}
	\begin{equation}
		i\Delta(x-y) = \frac 1{(2\pi)^3}\int e^{ip^\mu(x_\mu-y_\mu)}
			\left[\de\Omega_m^+(\bm p) - \de\Omega_m^-(\bm p)\right],
	\end{equation}
a straightforward computation shows that \cite{doplicher95}
	\begin{equation}\label{eq:commutator}
		[\phi(\omega_\xi),\phi(\omega_\eta)] = i\int\Delta(x-y)\psi_{\omega_\xi}(x)
			\psi_{\omega_\eta}(y)\ \de^4x\ \de^4y.
	\end{equation}
Although an explicit solution can be exhibited only for the massless case $m=0$ , the
massive case has basically the same behavior, as shown in \cite{doplicher95},
where the explicit form of the commutator for the massless case is given, namely
	\begin{equation}\label{eq:qcommutator}
		[\phi(\omega_\xi),\phi(\omega_\eta)] = \frac i{4\pi\Vert\Delta\bm x\Vert}\frac1{\sqrt{8\pi}}
			\left[e^{-\frac18(\Vert\Delta\bm x\Vert+\Delta t)^2} -
				e^{-\frac18(\Vert\Delta\bm x\Vert-\Delta t)^2}\right]\Id,
	\end{equation}
where
	\begin{equation}
		\Vert\Delta\bm x\Vert = \sqrt{\sum_{k=1}^3(\xi_k-\eta_k)^2}\qquad\text{and}\qquad
			\Delta t = \xi_0-\eta_0.
	\end{equation}
It is easily seen that the commutator falls off like a Gaussian on space-like separated
4-vectors $\xi$ and $\eta$, which means that strict locality is lost on the quantum space-time.

We close this section with a remark on field algebras for quantum fields defined
over the quantum space-time. Since there is no clear and sharp notion of regions
on quantum space-time, they can no longer be used to index the (non-local) field
algebras constructed from quantum fields on quantum space-time. We might consider
their non-commutative counterpart though, i.e. the projections in the Borel completion
of the \calg\ describing the basic model of quantum space-time. Thus for each state
$\omega\in\mathcal S(\mathcal E)$ we consider its normal extension $\tilde\omega\in
\mathcal S(\tilde{\mathcal E})$, and for every projection $E$ we define \cite{doplicher09}
\begin{subequations}
	\begin{equation}
		\A F(E) = \{W(\omega)\ |\ \tilde\omega(E) = 1\}'',
	\end{equation}
where
	\begin{equation}
		W(\omega) = e^{\frac i2[\phi(\omega)^*+\phi(\omega)^{**}]},\qquad\omega\in\mathcal S(\mathcal E)
	\end{equation}
\end{subequations}
are the Weyl operators generated by the Segal field $\phi$, and $''$ is the double commutant. A net of field algebras
is then given by the association
	\begin{equation}\label{eq:proj_net}
		E \mapsto \A F(E),\qquad E \in \tilde{\mathcal E}.
	\end{equation}
Isotony of this net is to be understood in the sense,
\begin{subequations}
	\begin{equation}
		E_1 < E_2 \qquad\Rightarrow\qquad\A F(E_1) \subset \A F(E_2),\qquad E_1,E_2\in\tilde{\mathcal E},
	\end{equation}
while covariance takes the form (see next section for the meaning of $\alpha_L$ and
$\tau_L$)
	\begin{equation}
		\alpha_L\A F(E) = \A F(\tau_L E),\qquad\forall L\in\Pup.
	\end{equation}
\end{subequations}

\subsection{The action of the Poincaré group\label{sec:action}}

Let $f$ be an element of the Banach $*$-algebra $\mathcal E_0$, and let us define,
according to \cite{doplicher95}, the following action of the full Poincaré group
$\Poin$ on $\mathcal E_0$
	\begin{equation}\label{eq:action_algebra}
		(\tau_{(\Lambda, a)} f)(\sigma,\alpha)=\det(\Lambda)e^{-i\alpha^\mu a_\mu}
			f(\Lambda^{-1}\sigma{\Lambda^{-1}}^T,\Lambda^{-1}\alpha),
	\end{equation}
where $(\Lambda,a)$ denotes the generic element of $\Poin$. Then $\tau_g$ is an isometry
with respect to the norm on $\mathcal E_0$, for each $g\in\Poin$, i.e.
	\begin{equation*}
		\Vert \tau_{(\Lambda, a)}f\Vert = \Vert f\Vert,\qquad\forall (\Lambda,a)\in\Poin,
	\end{equation*}
and moreover
	\begin{equation}
		\tau_{(\Lambda',a')} \circ \tau_{(\Lambda,a)} = \tau_{(\Lambda'\Lambda, \Lambda'a + a')},
	\end{equation}
from which it follows that $\tau:\Poin\to\Aut(\mathcal E_0)$ is a group homomorphism. Thus the above action can be extended naturally to the enveloping \calg\ $\mathcal E$
of $\mathcal E_0$, providing an action of the Poincaré group by automorphisms on the \calg\ 
of quantum space-time.

An action of the Poincaré group on the unbounded operators $\{q_\mu,\mu=0,\ldots,3\}$
is defined by transposition. Considering representations of the operators $Q_{\mu\nu}$ and $q_{\mu\nu}$
on some Hilbert space, we take the following representation $\pi$ of $\mathcal E_0$,
	\begin{equation}
		\pi(f_1\otimes f_2) = f_1(Q)\int\de^4\alpha f_2(\alpha) e^{i\alpha^\mu q_\mu},
	\end{equation}
for each $f_1\otimes f_2\in\mathcal E_0$, with $f_1\in C_0(\Sigma)$ and $f_2\in
L^1(\IR^4,\de^4\alpha)$. Under the action of \eqref{eq:action_algebra} the last relation becomes
\begin{equation}
	\pi(\tau_{(\Lambda,a)}f_1\otimes f_2) = f'_1(Q)\int\de^4\alpha\det(\Lambda)e^{-i\alpha^\mu a_\mu} f_2(\Lambda^{-1}\alpha) e^{i\alpha^\mu q_\mu},
\end{equation}
where $f_1'(\sigma)=f_1(\Lambda^{-1}\sigma{\Lambda^{-1}}^T)$. But
\begin{equation*}
	\int\de^4\alpha e^{-i\alpha^\mu a_\mu} f_2(\Lambda^{-1}\alpha) e^{i\alpha^\mu q_\mu}=\int\de^4\alpha f_2(\alpha) e^{i\alpha^\mu[\Lambda^{-1}(q - a)_\mu]},
\end{equation*}
and this allows us to define the transpose action on $q_{\mu}$ as
\begin{equation}\label{eq:tau_on_q}
	\tau^{-1}_{(\Lambda,a)}q_\mu = (\Lambda q)_\mu + a_\mu\Id.
\end{equation}

Relativistic covariance of the fields on quantum space-time is implemented in the
usual way, i.e. by defining the unitary representation of the Poincar\'e group over
the Hilbert space $\Hilb_F$ of the Fock representation. Denoting such unitary representation
by $U(L)$, $L\in\Poin$, we set $\alpha_L=\Ad U(L)$, and thus we have \cite{piacitelli11}
	\begin{equation}\label{eq:alpha_action}
		\alpha_L(a(\bv k)) = e^{i(\Lambda k)^\mu a_\mu}a(\bv k_\Lambda),
	\end{equation}
where $\bv k_\Lambda$ denotes the spatial part of $\Lambda k$.

If we now take the tensor product of both $\tau_L$ and $\alpha_L$, i.e. $\tau_L\otimes\alpha_L$,
with $L\in\Poin$, we get an action of the full Poincar\'e group $\Poin$ on
$\mathcal E\otimes\Bound(\Hilb_F)$, i.e. on the scalar field $\phi(q)$ defined in eq. \eqref{eq:scalar_field}. Hence for a generic element $L\in\Poin$ of the full Poincaré group we have
	\begin{align*}
		(\tau_L\otimes\alpha_L)\phi(q)&=\int\left[(\tau_Le^{ik^\mu q_\mu})\otimes
				(\alpha_La(\bv k))+\text{h. c.}\right]\de\Omega_m^+(\bv k)\\
			&=\int\left[e^{ik^\mu q_\mu}\otimes a(\bv k)+\text{h. c.}\right]\de\Omega_m^+(\bv k),
	\end{align*}
i.e.
	\begin{equation}
		(\tau_L\otimes\alpha_L)\phi(q) = \phi(q),
	\end{equation}
as it should be for a scalar neutral field (cf. \cite{doplicher95}). The previous equation expresses in compact form the full Poincar\'e covariance of the basic model.

	\section{The scale-covariant model\label{chap:model}}
\label{sec:scale-covariant}
This section is completely devoted to the introduction and discussion of the scale-%
covariant model of quantum space-time. This alternative model was already mentioned
in \cite{doplicher95}, and corresponds to the limiting case $\lambda_P\to0$ while keeping the non-commutativity. Scale-covariance is to be expected, for the only relevant
scale ruling the quantum structure of space-time, namely $\lambda_P$, has been removed.

As a first step we present the joint spectrum of the new central elements $R$, which plays a fundamental role in the analysis of irreducible representations. We then proceed to the construction of the \calg\ for the model
and the determination of its symmetry group.

\subsection{The quantum conditions\label{sec:sc_quantum_cond}}

Now we define the scale-covariant model as a limiting case of the basic model. In order to reveal the presence of the Planck's length, we can
switch again to generic units, thus obtaining
	\begin{equation}\label{eq:commutator_generic}
		[q_\mu, q_\nu] \subset i \lambda_P^2 Q_{\mu\nu}.
	\end{equation}
We can now perform the aforementioned limit $\lambda_P\rightarrow 0$, while keeping fixed the quantity
\begin{equation}\label{eq:new_central}
		R_{\mu\nu}:=\lambda_P^2 Q_{\mu\nu}.
	\end{equation}
The commutator \eqref{eq:commutator_generic} now reads
	\begin{equation}\label{eq:scale_cov_cr}
		[q_\mu, q_\nu] \subset iR_{\mu\nu},
	\end{equation}
and is not affected by the limit. However, the quantum conditions \eqref{eq:basic} now take the following form,
	\begin{subequations}\label{eq:scale_cov_model}
		\begin{align}
			\frac12R_{\mu\nu}R^{\mu\nu} &=0,\\
			\left(\frac12R_{\mu\nu}(*R)^{\mu\nu}\right)^2&=\lambda_P^8,\\
			[q_\lambda,R_{\mu\nu}] &= 0\qquad\forall\lambda,\mu,\nu=0,\ldots,3,
		\end{align}
	\end{subequations}
so that the second quantity is taken to vanish in the limit $\lambda_P\rightarrow 0$.

In order to determine the structure and the behavior of the joint spectrum of the central elements $R$ under the action of the Lorentz 
group we represent a point
$\sigma\in\Sigma_0$ in terms of its electric and magnetic components as discussed in the previous section. The new quantum conditions \eqref{eq:scale_cov_model} for $\lambda_P\rightarrow 0$ lead
to
	\begin{subequations}\label{eq:scale_cov_em}
		\begin{align}
			\norm{\bv m}^2 - \norm{\bv e}^2 &= 0,\\
			\bv e\cdot\bv m &= 0.
		\end{align}
	\end{subequations}
In other words $\bv m$ and $\bv e$ are now orthogonal and of equal norm in $\mathbb E^3$. 

One can notice that the classical space-time is a special case of this scale-covariant model, as the choice $\sigma=0$, i.e. $\bv e = \bv m =0$, satisfies the
above quantum conditions \eqref{eq:scale_cov_em}. Moreover this point is also a 
degenerate orbit under the action \eqref{eq:action_on_jsp} of the Lorentz group. Since we have no interest in the commutative case in this paper, in order to simplify the exposition we shall rule out this case and redefine the joint spectrum of the elements $R$ as their actual spectrum
with the point $\sigma=0$ removed.

As for the basic model, we choose a special point $\sigma_0\in\Sigma_0$, consider
the irreducible representations of the \calg\ describing the new scale-covariant 
model at $\sigma_0$, and then use it to move to any other representation at a 
different point $\sigma\in\Sigma$ by means of the action \eqref{eq:action_on_jsp}.
It turns out that, like in the basic model, $\Sigma_0$ is a single orbit under the
action of the Lorentz group, as shown by the following
	\begin{proposition} \label{prop:transitive} The action of the Lorentz group
		\eqref{eq:action_on_jsp} on the space $\Sigma_0$ is transitive.
	\end{proposition}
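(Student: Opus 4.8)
The plan is to prove the stronger statement that already the proper orthochronous subgroup $\Lup\subset\Lor$ acts transitively on $\Sigma_0$, from which the claim for the full group follows at once. As in the basic model I represent a point $\sigma\in\Sigma_0$ by its electric and magnetic components $(\bv e,\bv m)$; under the action \eqref{eq:action_on_jsp} the antisymmetric matrix $\sigma$ transforms as a rank-$2$ tensor, i.e. exactly as the electromagnetic field strength, so that a spatial rotation $\diag(1,R)$ with $R\in SO(3)$ sends $(\bv e,\bv m)\mapsto(R\bv e,R\bv m)$, while a boost mixes the two vectors in the familiar way. By the quantum conditions \eqref{eq:scale_cov_em} a point of $\Sigma_0$ is precisely a pair with $\bv e\cdot\bv m=0$ and $\norm{\bv e}=\norm{\bv m}=:\rho>0$. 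I fix once and for all the reference point $\sigma_0$ with $\bv e=\hv x$ and $\bv m=\hv y$, which manifestly lies in $\Sigma_0$, and aim to carry an arbitrary $\sigma$ onto it.

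First I would use a rotation to bring $\sigma$ to a normal form. Since $\hv e:=\bv e/\rho$ and $\hv m:=\bv m/\rho$ are orthonormal in $\IE^3$, the ordered triple $(\hv e,\hv m,\hv e\times\hv m)$ is a right-handed orthonormal frame, so there is a unique $R\in SO(3)$ carrying it to $(\hv x,\hv y,\hv z)$. After applying $\diag(1,R)$ the point has components $\bv e=\rho\,\hv x$ and $\bv m=\rho\,\hv y$, with the common perpendicular $\hv e\times\hv m$ now aligned with $\hv z$; the two defining conditions of $\Sigma_0$ are preserved because rotations act isometrically and simultaneously on $\bv e$ and $\bv m$.

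The crux is the boost step. Boosting along $\hv z$ with rapidity $\zeta$ leaves the $z$-components inert and acts on the transverse parts by the standard rule $\bv e_\perp\mapsto\gamma(\bv e_\perp+\bv v\times\bv m)$ and $\bv m_\perp\mapsto\gamma(\bv m_\perp-\bv v\times\bv e)$, with $\bv v=(\tanh\zeta)\hv z$. Substituting $\bv e=\rho\,\hv x$ and $\bv m=\rho\,\hv y$, every cross term feeds back along the original axis, and I expect to obtain $\bv e\mapsto\gamma(1-\tanh\zeta)\rho\,\hv x=e^{-\zeta}\rho\,\hv x$ together with $\bv m\mapsto e^{-\zeta}\rho\,\hv y$; in other words a boost along $\hv e\times\hv m$ acts as a pure dilation of the pair $(\bv e,\bv m)$ by the Doppler factor $e^{-\zeta}$. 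Since $e^{-\zeta}$ sweeps all of $(0,\infty)$ as $\zeta$ ranges over $\IR$, the choice $\zeta=\ln\rho$ rescales $\rho$ to $1$ and lands exactly on $\sigma_0$. Composing the rotation with this boost yields the required element of $\Lup$, and transitivity of $\Lor$ follows a fortiori.

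The single point that must be checked with care—and where the geometry of $\Sigma_0$ is genuinely used—is this last computation: it is precisely the two null conditions $\bv e\cdot\bv m=0$ and $\norm{\bv e}=\norm{\bv m}$ that force the boost along the common perpendicular to collapse to a uniform rescaling rather than a true mixing of $\bv e$ and $\bv m$. Invariantly, this reflects the fact that such a $\sigma$ is a simple bivector spanning a degenerate two-plane, the boost being the Lorentz transformation that rescales its unique null direction; but the explicit frame computation seems the most economical route. The only residual care I would take is to confirm the sign conventions in the field-transformation law, so as to be certain the dilation factor is $e^{-\zeta}$ rather than $e^{+\zeta}$—an immaterial ambiguity, since both signs of the rapidity are available and cover all positive scalings regardless.
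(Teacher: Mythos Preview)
Your proof is correct and follows essentially the same strategy as the paper: a spatial rotation to align the orthonormal frame $(\hv e,\hv m,\hv e\times\hv m)$, followed by a boost along $\hv e\times\hv m$ which, thanks to the null conditions \eqref{eq:scale_cov_em}, acts as a pure dilation on $(\bv e,\bv m)$ (the paper writes the factor as $\sqrt{(1-\beta)/(1+\beta)}$, which is your $e^{-\zeta}$). The only cosmetic difference is that you work directly in $\Lup$ and thereby anticipate the paper's Corollary~\ref{cor:transitive}, whereas the paper first allows improper rotations and time reflections and only afterwards restricts to the proper orthochronous subgroup.
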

	\begin{proof} Let us fix an arbitrary point $\sigma_0\in\Sigma_0$, $\sigma_0=
		(\bv e_0,\bv m_0)$ and consider any other 
		point $\sigma\in\Sigma_0$, $\sigma=(\bv e,\bv m)$. Setting $\bv n_0 = \bv e_0
		\times\bv m_0$ and $\bv n = \bv e\times\bv m$, we now define the endomorphism $R
		\in\End(\IE^3)$ on $\IE^3$ given by\footnote{Since both $\{\hv e_0, \hv m_0, \hv 
		n_0\}$ and $\{\hv e, \hv m, \hv n\}$ are orthonormal basis of $\IE^3$, they 
		coincide with their own dual basis.}
			\begin{equation*}
				R := \lambda\hv e_0\otimes\hv e + \mu\hv m_0\otimes\hv m + \nu\hv n_0
					\otimes\hv n,\qquad\lambda,\mu,\nu\in\{\pm1\}.
			\end{equation*}
		It is easy to verify that $R$ maps $\{\hv e_0, \hv m_0, \hv n_0\}$ onto $\{\lambda
		\hv e, \mu\hv m, \nu\hv n\}$, i.e.
			\begin{equation*}
				R(\hv e_0) = \lambda\hv e,\qquad R(\hv m_0) = \mu\hv m,\qquad R(\hv n_0) = 
				\nu\hv n,
			\end{equation*}
		and that $RR^T = \Id_3$. Moreover
			$$\norm{R\bv x}=\norm{\bv x}, \qquad \forall \bv x\in\IE^3$$
		and therefore $R\in O(3)$. Since $\{\hv e_0, \hv m_0, \hv n_0\}$ and $\{\hv e,
		\hv m, \hv n\}$ are both orthonormal basis of $\IR^3$, then
			$$\exists\omega\in\{-1,1\}\qquad|\qquad\hv e_0\wedge\hv m_0\wedge\hv n_0 =
				\omega\hv e\wedge\hv m\wedge\hv n,$$
		whence $\det(R)=\omega\lambda\mu\nu$. If $\tilde R$ denotes the inclusion of $R$ 
		in $\Lor$ given by
			$$\tilde R = \pm1\oplus R = \begin{bmatrix}\pm1 & 0\\0 & R\end{bmatrix},$$
		then
			\begin{equation}\label{eq:rotation}
				\tilde R(\hv e_0,\hv m_0)\tilde R^T = (\pm\lambda R\hv e_0, \det(R)\mu R
					\hv m_0)\equiv(\pm\lambda\hv e,\omega\lambda\nu\hv m),
			\end{equation}
		and by setting $\lambda = \pm1$, $\nu = \pm\omega$ we get
			$$\tilde R(\hv e_0,\hv m_0)\tilde R^T \equiv(\hv e, \hv m).$$
		It remains to show that there is an element $D\in\Lor$ that is capable of 
		stretching $\bv e_0$ into $\bv e_0'$ in such a way that $\norm{\bv e_0'}=
		\norm{\bv e}$. If we take the boost $B_\beta$ along the direction of $\bv e_0
		\times\bv m_0$ and let it act on $\sigma_0$, we find
			$$B_\beta\sigma_0B_\beta^T = \sqrt{\frac{1-\beta}{1+\beta}}\,\sigma_0,$$
		i.e.
			\begin{equation}\label{eq:dilation}
				(\bv e_0,\bv m_0)\stackrel{B_\beta}\longmapsto(\lambda_\beta \bv e_0,
					\lambda_\beta\bv m_0),\qquad\lambda_\beta=\sqrt{\frac{1-\beta}{1+\beta}}.
			\end{equation}
		As $\beta$ ranges in $(-1,1)$, $\lambda$ varies in $(0,+\infty)$ and therefore 
		there must be a $\beta^*\in(-1,1)$ such that $\lambda_{\beta^*}\norm{\bv e_0}=
		\norm{\bv e}$. Setting then $D=B_{\beta^*}$ we  conclude that
			$$\forall\sigma_0,\sigma\in\Sigma_0\quad\exists \tilde R, D\in\Lor \quad|\quad
				(\tilde RD)\sigma_0(\tilde RD)^T=\sigma.$$
		Finally, since $\Lor0=\{0\}\subset\Sigma$, then $\Sigma_0=\Sigma\smallsetminus
		\{0\}$ must be stable under the action of $\Lor$. 
	\end{proof}
An immediate consequence of the above proposition is that $\Lor\sigma=\Sigma_0$ for 
any $\sigma\in\Sigma_0$ or, in other words, that $\Sigma_0$ is a single orbit under 
the action of the full Lorentz group $\Lor$.
	\begin{corollary}\label{cor:transitive} For every pair of points $\sigma_0,\sigma\in
		\Sigma_0$ there is a proper and orthochronous Lorentz transformation $\Lambda\in
		\Lup$ such that $\sigma=\Lambda\sigma_0\Lambda^T$.
	\end{corollary}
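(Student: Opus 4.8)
The plan is to take the Lorentz transformation supplied by Proposition~\ref{prop:transitive} and correct its connected component, replacing it by one that lies in $\Lup$. Fix $\sigma_0,\sigma\in\Sigma_0$. By Proposition~\ref{prop:transitive} there is some $\Lambda\in\Lor$ with $\Lambda\sigma_0\Lambda^T=\sigma$; if already $\Lambda\in\Lup$ there is nothing to do. Otherwise $\Lambda$ sits in one of the other three connected components, and I would multiply it on the right by an element $S$ of the stabilizer $\{S\in\Lor\mid S\sigma_0 S^T=\sigma_0\}$ chosen in the \emph{same} component as $\Lambda$. The four components form the quotient group $\Lor/\Lup$, the Klein four-group, in which every element is its own inverse; hence the image of $\Lambda S$ in $\Lor/\Lup$ is the square of the image of $\Lambda$, i.e. the identity, so $\Lambda S\in\Lup$. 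Meanwhile $(\Lambda S)\sigma_0(\Lambda S)^T=\Lambda(S\sigma_0 S^T)\Lambda^T=\Lambda\sigma_0\Lambda^T=\sigma$ is unaffected, which finishes the proof.

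It therefore suffices to produce a stabilizer of $\sigma_0$ in each of the three non-identity components $\Lum,\Ldp,\Ldm$. The first building block is $-\Id_4$, which is proper and non-orthochronous, hence in $\Ldp$, and which stabilizes \emph{every} point since $(-\Id_4)\sigma_0(-\Id_4)^T=\sigma_0$. The second is a spatial reflection $\tilde R=1\oplus R$, where $R\in O(3)$ is the reflection in the hyperplane orthogonal to $\bv m_0$: this $R$ fixes $\bv e_0$ and $\bv n_0=\bv e_0\times\bv m_0$, sends $\bv m_0\mapsto-\bv m_0$, and has $\det R=-1$. Recalling the transformation law used in the proof of Proposition~\ref{prop:transitive}, namely $\bv e_0\mapsto R\bv e_0$ and $\bv m_0\mapsto\det(R)\,R\bv m_0$, one finds $\bv e_0\mapsto\bv e_0$ and $\bv m_0\mapsto(-1)(-\bv m_0)=\bv m_0$, so $\tilde R$ stabilizes $\sigma_0$; being improper and orthochronous it lies in $\Lum$. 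Finally $(-\Id_4)\tilde R$, a product of two stabilizers, again stabilizes $\sigma_0$ and is improper and non-orthochronous, hence lies in $\Ldm$. This exhibits a stabilizer in every component and, via the first paragraph, completes the argument for arbitrary $\sigma_0$.

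The one step needing genuine care is verifying that the reflection $\tilde R$ really fixes $\sigma_0$: since $\bv m$ is a pseudovector, the naive guess $\bv m_0\mapsto R\bv m_0=-\bv m_0$ is corrected by the factor $\det(R)=-1$ arising from the identity $R(*\bv m_0)R^T=\det(R)\,{*}(R\bv m_0)$ valid for $R\in O(3)$, and it is precisely this extra sign that renders $\bv m_0$ invariant. Once the three stabilizing elements are in place the component bookkeeping is routine.

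As a remark, a less computational route would be topological: $\Sigma_0=\{(\bv e,\bv m)\mid\bv e\perp\bv m,\ \norm{\bv e}=\norm{\bv m}>0\}$ is connected, as it fibers over $\IR^3\smallsetminus\{0\}$ with circle fibers, and $\Lup$ is a connected subgroup sharing the Lie algebra of $\Lor$; since $\Lor$ acts transitively by Proposition~\ref{prop:transitive}, the $\Lup$-orbits are full-dimensional, hence open, and an open partition of the connected space $\Sigma_0$ must be a single set, so $\Lup$ already acts transitively. I would nonetheless prefer the explicit stabilizer argument above, as it matches the constructive style of the preceding proof and avoids invoking connectedness of $\Sigma_0$.
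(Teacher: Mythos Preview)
Your argument is correct but takes a genuinely different route from the paper. The paper does not introduce stabilizers at all: it simply revisits the explicit construction in the proof of Proposition~\ref{prop:transitive} and observes that the free sign parameters can be fixed so that the resulting transformation already lies in $\Lup$. Concretely, the boost $D=B_{\beta^*}$ is in $\Lup$ automatically, and choosing the $+1$ block in $\tilde R=1\oplus R$ together with $\lambda=1$, $\mu=1$, $\nu=\omega$ forces $\det R=\omega\lambda\mu\nu=1$, so $R\in SO(3)$ and $\tilde R\in\Lup$. Your approach---take an arbitrary $\Lambda\in\Lor$ from the proposition and right-multiply by a stabilizer of $\sigma_0$ in the same component, using that $\Lor/\Lup$ is the Klein four-group---is more structural and transfers verbatim to any transitive $\Lor$-action whose isotropy meets every component; the paper's approach is more economical and delivers the explicit $\Lambda_\sigma$ needed later. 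It is worth noting that the non-trivial stabilizers you exhibit (in particular $-\Id_4$) are in direct conflict with the paper's next corollary, which asserts that the action of $\Lor$ on $\Sigma_0$ is free; your computation is right and that subsequent claim is not. The topological alternative you sketch at the end is also sound, though it quietly uses that $\Lup$-orbits are open (equivalently, that the orbit map has full rank), so the explicit stabilizer argument is indeed the cleaner choice here.
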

	\begin{proof} Let $R$ be as in the proof of proposition \ref{prop:transitive} and
		consider the inclusion
			$$\tilde R = 1 \oplus R.$$
		Then we must take $\lambda=1$, and consequently $\nu=\omega$, so that we remain
		with $\det(R)=\mu$. Of course we must set $\mu=1$ in order to have a proper
		rotation, i.e. $R \in SO(3)$ and so $\tilde R\in\Lup$.
	\end{proof}
	\begin{corollary} The action of $\Lor$ on $\Sigma_0$ is free.
	\end{corollary}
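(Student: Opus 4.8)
Because transitivity has already been secured in Corollary \ref{cor:transitive}, all point stabilisers are mutually conjugate, so freeness is equivalent to the triviality of the stabiliser of a single representative. The plan is therefore to fix one convenient $\sigma_0=(\bv e_0,\bv m_0)\in\Sigma_0$ and to determine the subgroup $\{\Lambda\in\Lor:\Lambda\sigma_0\Lambda^T=\sigma_0\}$ completely, the goal being to show that it reduces to $\Id$ alone. I would adopt the normalisation of Proposition \ref{prop:transitive}, taking $\bv e_0=\norm{\bv e_0}\,\hv e_0$ and $\bv m_0=\norm{\bv m_0}\,\hv m_0$ with $\hv e_0\perp\hv m_0$, and orienting the spatial frame so that $\bv n_0=\bv e_0\times\bv m_0$ lies along the third axis.

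Next I would convert the matrix identity $\Lambda\sigma_0\Lambda^T=\sigma_0$ into geometric constraints on $\Lambda$. Since the two Lorentz invariants $\tfrac12R_{\mu\nu}R^{\mu\nu}$ and $\tfrac12R_{\mu\nu}(*R)^{\mu\nu}$ both vanish on $\Sigma_0$, the tensor $\sigma_0$ is a genuinely null $2$-form, and any stabilising $\Lambda$ must preserve this null structure; in particular it must fix the principal null direction singled out by $\bv n_0$. The strategy is then to restrict first to the subgroup of $\Lor$ fixing that null direction, and afterwards to feed in the extra data carried separately by $\bv e_0$ and $\bv m_0$ — their equal lengths and their mutual orthogonality — in the hope that these residual constraints collapse the subgroup to the identity.

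The main obstacle, and indeed the point on which the whole corollary turns, is exactly this last collapse, and a dimension count shows why it cannot be taken for granted. The locus $\Sigma_0$ is carved out of the six real parameters $(\bv e,\bv m)$ by the two homogeneous conditions $\norm{\bv m}=\norm{\bv e}$ and $\bv e\cdot\bv m=0$, so it is four-dimensional, whereas $\dim\Lup=6$; the orbit--stabiliser relation $\dim\Lup=\dim\Sigma_0+\dim(\text{stabiliser})$ then pins the stabiliser at dimension $6-4=2$. Those two residual dimensions are carried precisely by the parabolic \emph{null rotations} that fix the principal null direction of $\sigma_0$ — the transformations that leave a null $2$-form entirely unmoved — and eliminating them is the critical case that any proof of freeness must confront. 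I would resolve their action by an explicit parabolic-orbit computation, most transparently in the self-dual picture where the null field corresponds to a squared spinor $o_Ao_B$ and a null rotation fixes $o_A$: it is this single calculation that decides whether the stabiliser is genuinely trivial, and hence whether the statement holds as worded.
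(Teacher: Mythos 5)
Your strategy is sound, and the parabolic-orbit computation you defer, once carried out, settles the matter --- \emph{against} the corollary. Your dimension count is already decisive: $\Sigma_0$ is the smooth $4$-manifold cut out of $(\bv e,\bv m)\in\IR^6\smallsetminus\{0\}$ by the two independent conditions $\norm{\bv e}=\norm{\bv m}$ and $\bv e\cdot\bv m=0$, and Proposition \ref{prop:transitive} makes it a single orbit of the $6$-dimensional group $\Lor$, so every stabilizer is a closed subgroup of dimension $6-4=2$; a transitive smooth action of a $6$-dimensional Lie group on a $4$-dimensional manifold can never be free. The explicit check confirms this: the two vanishing invariants mean every $\sigma\in\Sigma_0$ is a null $2$-form, $\sigma^{\mu\nu}=k^\mu a^\nu-k^\nu a^\mu$ with $k$ null, $a$ spacelike, $k\cdot a=0$, and for \emph{any} spacelike $v$ with $v\cdot k=0$ the null rotation $N_v\in\Lup$ fixes $k$ and sends $a\mapsto a+(v\cdot a)\,k$, hence $N_v\,\sigma\,N_v^T=k\wedge\bigl(a+(v\cdot a)k\bigr)=\sigma$, while $N_v\neq\Id$ for $v\neq0$. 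So the stabilizer $H_\sigma$ contains a two-parameter abelian group of null rotations at every point: the action is transitive but \emph{not} free, exactly as your count predicted.

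Comparing your route with the paper's proof locates the error there precisely. That proof asserts that Proposition \ref{prop:transitive} shows every $L\in\Lor$ factors as $L=\tilde RD$, with $\tilde R$ a (possibly improper) spatial rotation and $D$ a boost along $\hv e\times\hv m$, whence the normal form $L(\bv e,\bv m)L^T=\lambda_L(\pm R_L\bv e,\det(R_L)R_L\bv m)$. But the proposition only produces \emph{some} such product carrying $\sigma_0$ to a prescribed $\sigma$; it says nothing about an arbitrary $L$. Indeed the products $\tilde RD$ sweep out only a $4$-parameter family inside the $6$-dimensional $\Lor$, and the null rotations are precisely the transformations the normal form misses: polar decomposition writes a general $L$ as a rotation times a boost along an $L$-dependent axis, and only boosts along $\hv e\times\hv m$ act on $(\bv e,\bv m)$ as pure dilations. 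Note also that the failure propagates: the uniqueness step ``thus it must be $\Lambda\Lambda_\sigma=\Lambda_{\Lambda\sigma\Lambda^T}$'' in Corollary \ref{cor:identity} tacitly invokes freeness, and with $H_{\sigma_0}$ two-dimensional one only obtains $\Lambda\Lambda_\sigma=\Lambda_{\Lambda\sigma\Lambda^T}\,h$ for some null rotation $h\in H_{\sigma_0}$. Transitivity and the existence of a continuous section $\sigma\mapsto\Lambda_\sigma$ over $\Sigma_0\cong\Lup/H_{\sigma_0}$ survive; freeness, as worded, does not.
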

	\begin{proof} Let $\sigma\in\Sigma_0$ be any point and consider its representation
		in terms of electric and magnetic components, $\sigma=(\bv e, \bv m)$. Then
		$\bv e\cdot\bv m=0$ and $\norm{\bv e}=\norm{\bv m}$. The proof of proposition
		\ref{prop:transitive} shows that
			$$\forall L\in\Lor,\quad\exists \tilde R, D\in\Lor\quad | \quad L=\tilde RD,$$
		where $\tilde R$ is a rotation and $D$ a dilation. Hence there must exist $\lambda_L\in\IR^+$,
		$R_L\in O(3)$ such that $$L(\bv e, \bv m)L^T = \lambda_L(\pm R_L\bv e, \det(R_L)R_L\bv m).$$
		Let $H_\sigma$ be the stabilizer of the point $\sigma\in\Sigma_0$, and let $h\in H_\sigma$.
		Then
			$$h\sigma h^T = \sigma,$$
		which means
			$$\lambda_h(\pm R_h\bv e, \det(R_h)R_h\bv m) = (\bv e,\bv m).$$
		But the only rotation in $\IR^3$ with at least two fixed axes is the identity
		$\Id_{3}$, and therefore
			$$H_\sigma = \{\Id_\Lor\},\qquad\forall\sigma\in\Sigma_0,$$
		i.e. all the stabilizers of $\Sigma_0$ are trivial.
	\end{proof}
	\begin{corollary} \label{cor:section} Given a point $\sigma_0\in\Sigma_0$, there
		exists a continuous map
		\begin{subequations}
			\begin{align}
				\Sigma_0 &\to\Lup\\
				\sigma &\mapsto \Lambda_\sigma,
			\end{align}
		\end{subequations}
		such that $\sigma = \Lambda_\sigma\sigma_0\Lambda_\sigma^T$.
	\end{corollary}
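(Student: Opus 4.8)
The plan is to make the construction in the proof of Proposition \ref{prop:transitive} and Corollary \ref{cor:transitive} depend continuously on the target point $\sigma$, so that the section is produced explicitly rather than by an abstract splitting argument. Conceptually, since $\Lup$ acts continuously and transitively on the manifold $\Sigma_0$, the orbit map $\Lambda \mapsto \Lambda\sigma_0\Lambda^T$ is a continuous surjection $\Lup \to \Sigma_0$, and the corollary asks for a continuous right inverse of it; the construction below exhibits one globally, so that no appeal to the triviality of an associated principal bundle is required.

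First I would write every $\sigma = (\bv e, \bv m) \in \Sigma_0$ in terms of the orthonormal frame it canonically determines. Because $\norm{\bv e} = \norm{\bv m} > 0$ and $\bv e \cdot \bv m = 0$ everywhere on $\Sigma_0$, the vectors $\hv e = \bv e/\norm{\bv e}$, $\hv m = \bv m/\norm{\bv m}$ and $\hv n = \hv e \times \hv m$ form a right-handed orthonormal basis of $\IE^3$ that is read off directly from the components of $\sigma$. Crucially, since this frame is not selected by an arbitrary lift but is determined by $\sigma$ itself, the assignment $\sigma \mapsto (\hv e, \hv m, \hv n)$ is single-valued and continuous, with no monodromy or sign ambiguity to resolve.

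Next I would assemble the Lorentz transformation exactly as in Corollary \ref{cor:transitive}. Let $R_\sigma \in SO(3)$ be the unique rotation sending the fixed reference frame $(\hv e_0, \hv m_0, \hv n_0)$ to $(\hv e, \hv m, \hv n)$; since both frames are positively oriented (each satisfying $\hv n = \hv e \times \hv m$), the orientation sign $\omega$ is constantly $+1$, so $\det R_\sigma = 1$ and $\tilde R_\sigma = 1 \oplus R_\sigma \in \Lup$, with $R_\sigma$ depending continuously on the frame and hence on $\sigma$. Let $D_\sigma = B_{\beta^*(\sigma)}$ be the boost along $\bv e_0 \times \bv m_0$ whose rapidity solves $\lambda_{\beta^*} = \sqrt{(1-\beta^*)/(1+\beta^*)} = \norm{\bv e}/\norm{\bv e_0}$; by \eqref{eq:dilation} this equation has a unique solution $\beta^*(\sigma) \in (-1,1)$ depending continuously on $\norm{\bv e}$. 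Setting $\Lambda_\sigma := \tilde R_\sigma D_\sigma$ then gives $\Lambda_\sigma \in \Lup$ with $\Lambda_\sigma \sigma_0 \Lambda_\sigma^T = \sigma$, by the computation already carried out in Proposition \ref{prop:transitive}.

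Finally, continuity of $\sigma \mapsto \Lambda_\sigma$ follows because it is a composition of continuous maps: frame extraction, the frame-matching rotation, the rapidity solve, and multiplication in $\Lup$. The one point deserving care, which I expect to be the main (though mild) obstacle, is the global continuity of the frame-matching rotation $R_\sigma$; this is handled by the observation that $\bv e$ and $\bv m$ never vanish on $\Sigma_0$, so that $\hv e, \hv m, \hv n$ vary continuously and the two orientations agree throughout, keeping $R_\sigma$ in the identity component of $O(3)$ and thus $\Lambda_\sigma$ within $\Lup$.
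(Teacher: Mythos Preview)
Your proposal is correct and follows essentially the same approach as the paper: both take $\Lambda_\sigma = \tilde R_\sigma D_\sigma$ from the explicit construction in Proposition \ref{prop:transitive} (specialized as in Corollary \ref{cor:transitive}) and observe that each factor depends continuously on $\sigma$. Your write-up is more detailed than the paper's one-line argument, in particular your remark that both frames $(\hv e,\hv m,\hv n)$ are right-handed by construction (so $\omega\equiv+1$ and $R_\sigma$ stays in $SO(3)$ globally) makes explicit a point the paper leaves implicit.
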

The explicit form of the map $\sigma\mapsto\Lambda_\sigma$ is given by the proof of 
proposition \ref{prop:transitive}. Let $\sigma_0\in\Sigma_0$ be a fixed point such 
that $\sigma_0=(\bv e_0, \bv m_0)$ and $\sigma\in\Sigma_0$ a generic point, and let
$R_\sigma$ and $D_\sigma$ be the rotation $\tilde R$ and the dilation $D$ of the
proof of proposition \ref{prop:transitive} respectively. Then $\Lambda_\sigma =
R_\sigma D_\sigma$, and since both $\sigma\mapsto R_\sigma$ and $\sigma\mapsto
D_\sigma$ are continuous, so is $\sigma\mapsto\Lambda_\sigma$.
	\begin{corollary}\label{cor:identity} For any $\Lambda\in\Lor$ the following identity
			\begin{equation*}
				\Lambda\Lambda_\sigma = \Lambda_{\Lambda\sigma\Lambda^T}
			\end{equation*}
		holds.
	\end{corollary}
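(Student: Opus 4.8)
The plan is to reduce the identity to the \emph{uniqueness} of the Lorentz transformation carrying the base point $\sigma_0$ to a prescribed point of $\Sigma_0$. The preceding corollary shows that the action \eqref{eq:action_on_jsp} of $\Lor$ on $\Sigma_0$ is free, so for each $\tau\in\Sigma_0$ there is at most one $\Lambda'\in\Lor$ with $\Lambda'\sigma_0\Lambda'^T=\tau$; combined with Corollary~\ref{cor:section}, which actually produces such a transformation, this $\Lambda'$ is unique and coincides with $\Lambda_\tau$ by definition. Thus $\Lambda_\tau$ is pinned down \emph{uniquely} by the single equation $\Lambda_\tau\sigma_0\Lambda_\tau^T=\tau$, and any element of $\Lor$ satisfying that equation must equal it.

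First I would fix $\Lambda\in\Lor$ and $\sigma\in\Sigma_0$ and set $\tau:=\Lambda\sigma\Lambda^T$, which again lies in $\Sigma_0$: indeed $\sigma\neq0$ and $\Lambda$ is invertible, so $\tau\neq0$, and $\Sigma_0=\Sigma\smallsetminus\{0\}$ is $\Lor$-stable by the final line of the proof of Proposition~\ref{prop:transitive}. The right-hand side of the asserted identity, $\Lambda_{\Lambda\sigma\Lambda^T}=\Lambda_\tau$, is by definition the unique mover of $\sigma_0$ onto $\tau$, so it suffices to verify that the candidate $\Lambda\Lambda_\sigma$ also moves $\sigma_0$ onto $\tau$.

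This is a one-line computation built on the defining property $\Lambda_\sigma\sigma_0\Lambda_\sigma^T=\sigma$ of the section:
\begin{equation*}
	(\Lambda\Lambda_\sigma)\,\sigma_0\,(\Lambda\Lambda_\sigma)^T
		= \Lambda\,(\Lambda_\sigma\sigma_0\Lambda_\sigma^T)\,\Lambda^T
		= \Lambda\,\sigma\,\Lambda^T = \tau .
\end{equation*}
Invoking the uniqueness coming from freeness then forces $\Lambda\Lambda_\sigma=\Lambda_\tau=\Lambda_{\Lambda\sigma\Lambda^T}$, which is the claim.

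The step that deserves the most care, and which I expect to be the only real obstacle, is the invocation of uniqueness \emph{across the connected components} of the Lorentz group. The section of Corollary~\ref{cor:section} is valued in the proper orthochronous subgroup $\Lup$, whereas for a general $\Lambda\in\Lor$ the product $\Lambda\Lambda_\sigma$ need not be proper orthochronous. The point to make explicit is that the freeness corollary establishes triviality of the stabiliser in the \emph{full} Lorentz group $\Lor$, so the element carrying $\sigma_0$ onto $\tau$ is already unique within $\Lor$; the argument therefore does not compare two elements of $\Lup$, but rather identifies two elements of $\Lor$ that act identically on $\sigma_0$, and no separate orientation or time-direction bookkeeping is needed.
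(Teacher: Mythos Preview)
Your proof is essentially the paper's: both verify that $\Lambda\Lambda_\sigma$ carries $\sigma_0$ to $\Lambda\sigma\Lambda^T$ and then conclude by uniqueness. The paper leaves the uniqueness step implicit (it simply writes the two expressions for $\Lambda\sigma\Lambda^T$ and says ``thus it must be''), whereas you spell out that this inference rests on the freeness corollary and take the extra care to note that freeness in the full group $\Lor$ is what makes the comparison across components legitimate.
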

	\begin{proof} Since the map $\sigma\mapsto\Lambda_\sigma$ is such that $\sigma =
		\Lambda_\sigma\sigma_0\Lambda_\sigma^T$, we have
			$$\Lambda\sigma\Lambda^T = \Lambda(\Lambda_\sigma\sigma_0\Lambda_\sigma^T)\Lambda^T.$$
		On the other hand we have
			$$\Lambda\sigma\Lambda^T = \Lambda_{\Lambda\sigma\Lambda^T}\sigma_
				0\Lambda_{\Lambda\sigma\Lambda^T}^T,$$
		thus it must be $\Lambda\Lambda_\sigma = \Lambda_{\Lambda\sigma\Lambda^T}$.
	\end{proof}

\subsection{Representations of the coordinates}

The above results enable us to select a suitable special point $\sigma_0\in\Sigma_0$, e.g.
	\begin{equation}\label{eq:sigma0_mat}
		(\bv e_0,\bv m_0) = ((1,0,0),(0,-1,0))\quad\iff\quad\sigma_0=
			\begin{bmatrix}
				 0 &  1 & 0 & 0\\
				-1 &  0 & 0 & 1\\
				 0 &  0 & 0 & 0\\
				 0 & -1 & 0 & 0
			\end{bmatrix}.
	\end{equation}
As a symplectic form, $\sigma_0$ is degenerate, i.e. $\det(\sigma_0) = 0$, and as
a consequence there must be a basis of vectors in $\IR^4$ such that
	\begin{equation}\label{eq:sigma0}
		\sigma_0 \cong J \oplus 0_2,
	\end{equation}
where $J$ is the standard symplectic form on the plane $\IR^2$ given by
\eqref{eq:stdsympl}, and $0_2$ is the null $2\times2$ matrix. Since the
structure of the matrix $J\oplus 0_2$ is simpler than the original $\sigma_0$, 
it is convenient throughout the sequel to assign it the symbol
	\begin{equation}\label{eq:sstd}
		\sstd := \begin{bmatrix}
				0 & -1 & 0 & 0\\
				1 &  0 & 0 & 0\\
				0 &  0 & 0 & 0\\
				0 &  0 & 0 & 0
			\end{bmatrix},
	\end{equation}
and call it the \emph{standard} symplectic form.
It must be noted though that $\sstd$ is not a point of the joint spectrum $\Sigma_0$, 
but rather a conjugate element to each of its points. Hence we must always
\emph{lift}, in the appropriate sense, each derived result back onto the joint spectrum $\Sigma_0$. From now on we agree on employing the symbol $A$ to refer to
the non-singular matrix $A\in\Mat_{4\times4}(\IR)$ that concretely realizes the 
conjugation
	\begin{equation}\label{eq:A}
		\sstd = A\sigma_0A^T,
	\end{equation}
namely
	\begin{equation}\label{eq:explicitA}
		A=\begin{bmatrix}
				0 & 1 & 0 & 0\\
				1 &  0 & 0 & 0\\
				0 &  0 & 1 & 0\\
				1 &  0 & 0 & 1
			\end{bmatrix}.
	\end{equation}
In order to derive an irreducible representation for the coordinates $q$ corresponding to the 
point $\sigma\in\Sigma_0$ we start by considering those at the special point
$\sigma_0\in\Sigma_0$, i.e.
	\begin{equation}\label{eq:cr_sigma0}
		[q^{\sigma_0}_\mu,q^{\sigma_0}_\nu] = i(\sigma_0)_{\mu\nu}\Id.
	\end{equation}
Conjugating both sides of the above relation with the matrix $A$ we get
	\begin{equation}
		[(Aq^{\sigma_0})_\mu,(Aq^{\sigma_0})_\nu] = i\sstd_{\mu\nu}\Id,
	\end{equation}
and thus we can introduce the representation $q^{\sstd}_\mu = Aq^{\sigma_0}_\mu$, 
whose explicit form is now easier to find. For, according to the expression \eqref{eq:sstd} of $\sstd$, we must have a direct sum between a pair of Schr\"odinger operators $(p,x)$ and a pair of central operators, namely
	\begin{equation}\label{eq:qsstd}
		q^{\sstd} = \begin{bmatrix}p\\x\\a\Id\\b\Id\end{bmatrix}.
	\end{equation}
This shows that, as a consequence of the degeneracy of $\sstd$, there are two more 
parameters that label the irreducible representations of the coordinates $q$. Using 
the inverse $A^{-1}$ we can lift $q^{\sstd}$ to a covariant representation at $\sigma_0
\times(a,b)$. Explicitly we have \cite{perini}
	\begin{equation}\label{eq:qsigma0}
		q^{\sigma_0\times(a,b)} = \begin{bmatrix}x\\p\\a\Id\\b\Id-x\end{bmatrix},
	\end{equation}
and any other representation at a point $\sigma$ different from $\sigma_0$ is equivalent to the representation $\Lambda q^{\sigma_0\times(a,b)}$, i.e.
	\begin{equation}
		q^{\sigma\times(a,b)}\cong \Lambda q^{\sigma_{0}\times(a,b)},
	\end{equation}
for a suitable choice of $\Lambda$ (see corollary \ref{cor:transitive}).

Of course, on a scale-covariant model one might believe that it is possible to go beyond the Poincaré group, and consider, for instance, the scale transformations
	\begin{equation}
		q^{\sigma\times(a,b)}_\mu \mapsto\lambda q^{\sigma\times(a,b)}_\mu,\qquad
			\lambda\in(0,+\infty),
	\end{equation}
as new symmetries. Since we already have an explicit irreducible representation of
the coordinates at $\sigma_0\in\Sigma_0$, namely \eqref{eq:qsigma0}, we may consider 
the dilations on $q^{\sigma_0\times(a,b)}$ alone, for any other representation follows 
from this one, as discussed above. If we substitute $\lambda q$ for $q$ the commutation relations \eqref{eq:cr_sigma0} become
	\begin{equation}
		[\lambda q^{\sigma_0\times(a,b)}_\mu,\lambda q^{\sigma_0\times(a,b)}_\nu] =
			i(\lambda^2\sigma_0)_{\mu\nu}\Id.
	\end{equation}
This last relation requires the two representations $\lambda q^{\sigma_0\times(a,b)}$ 
and $q^{\lambda^2\sigma_0\times(a',b')}$ to be equivalent for a suitable choice of the 
parameters $(a',b')$ as functions of $(a,b)$. But the dilations on $\Sigma_0$ are 
implemented by Lorentz transformations (cf. proposition \ref{prop:transitive}) and 
therefore there must exists an element $D\in\Lup$ such that
	\begin{equation}
		q^{\lambda^2\sigma_0\times(a',b')} \cong Dq^{\sigma_0\times(a',b')},
	\end{equation}
i.e.
	\begin{equation}
		q^{\sigma_0\times(a,b)} \cong \frac1\lambda Dq^{\sigma_0\times(a',b')}.
	\end{equation}
A computation \cite{perini} shows that the new parameters $a'$ and $b'$ in terms of $a$, $b$ and $\lambda$ are given by
\begin{subequations}
	\begin{align}
		a'&=\lambda a,\\
		b'&=\lambda^3 b.
	\end{align}
\end{subequations}

\subsection{The \calg\ of the model\label{eq:scale_cov_calg}}

In order to construct the \calg\ of the scale-covariant model we may repeat the steps we have followed in the case of the basic model (see section \ref{sec:calg_basic}). 
We start with the vector space $L^1(\IR^4,\de^4\alpha)$. We then define $\mathcal E_0 = C_0(\Sigma, \A F)$ and turn it into a Banach $*$-algebra as in the case of the basic model, i.e. we introduce the multiplication, involution and norm given by, respectively
\begin{subequations}
	\begin{align}
		(f\times g)(\sigma,\alpha)&:=\int f(\sigma,\alpha')g(\sigma,\alpha-\alpha')e^{\frac i2\sigma(\alpha,\alpha')}\de^4\alpha',\\
		f^*(\sigma,\alpha)&:=\overline{f(\sigma,-\alpha)},\\
		\Vert  f\Vert &:=\sup_{\sigma\in\Sigma}\Vert f(\sigma,\ \cdot\ )\Vert_1.
	\end{align}
\end{subequations}
The multiplication $\times$ above depends on the point $\sigma\in\Sigma_0$, but the result of corollary \ref{cor:section} allows us to fix it to the special point $\sigma_0$. In fact, given the map $\Lambda_\sigma$, the element $\sigma\in\Sigma_0$, interpreted as a bilinear form on $\IR^4$, may be written as
\begin{equation}\label{eq:first_step}
	\sigma = \sigma_0\circ(\Lambda_\sigma\otimes\Lambda_\sigma).
\end{equation}
Then, for every pair of elements $f,g\in\mathcal E_0$, we have
\begin{align*}
	(f \times g)(\sigma,\alpha) &= \int f(\sigma,\alpha')g(\sigma,\alpha-\alpha')e^{\frac i2\sigma(\alpha,\alpha')}\de^4\alpha'\\
		&= \int f(\sigma,\alpha')g(\sigma,\alpha-\alpha')e^{\frac i2\sigma_0(\Lambda_\sigma\alpha,\Lambda_\sigma\alpha')}\de^4\alpha'\\
		&= \int f(\sigma,\Lambda_\sigma^{-1}\alpha')g(\sigma,\alpha-\Lambda_\sigma^{-1}\alpha')e^{\frac i2\sigma_0(\Lambda_\sigma\alpha,\alpha')}\de^4\alpha',
\end{align*}
and therefore
\begin{equation}\label{eq:fixed_mult}
	(f \times g)(\sigma,\Lambda_\sigma^{-1}\alpha)=\int f(\sigma,\Lambda_\sigma^{-1}\alpha')g(\sigma,\Lambda_\sigma^{-1}(\alpha-\alpha'))e^{\frac i2\sigma_0(\alpha,\alpha')}\de^4\alpha'.
\end{equation}
This suggests the introduction of a map $T$ between algebras given by
\begin{equation}\label{eq:Tiso}
	(Tf)(\sigma,\alpha) := f(\sigma,\Lambda_\sigma^{-1}\alpha),
\end{equation}
that is evidently one-to-one, continuous and $*$-preserving. It becomes multiplication preserving as well provided that the image algebra of $\mathcal E_0$ under $T$ is endowed with the fixed multiplication derived above (cf. equation \eqref{eq:fixed_mult}), for in this case \cite{perini}
\begin{equation}\label{eq:last_step}
	T(f\times g) = Tf\times Tg.
\end{equation}
Thus \eqref{eq:Tiso} defines a $*$-isomorphism of the algebras $\mathcal E_0$ and $\mathcal E_{\sigma_0}:=(\Ran T,\times_{\sigma_0})$. The steps from \eqref{eq:first_step} to \eqref{eq:last_step} may be repeated once more in order to replace $\sigma_0$ with $\sstd$. Indeed, we have
\begin{equation}
	\sigma_0 = \sstd\circ(A^{-1}\times A^{-1}),
\end{equation}
and thus the map
\begin{equation}
	(Sf)(\sigma,\alpha) := |\det(A)|f(\sigma,A\alpha),\qquad\forall f\in\Ran T
\end{equation}
is a new $*$-isomorphism between algebras, specifically between $(\Ran T,\times_{\sigma_0})$ and $(\Ran(S \circ T),\times_{\sstd})$. Let $\Estd$ denote the latter algebra, i.e.
\begin{equation}
	\Estd := \left(C_0(\Sigma_0, L^1(\IR^4)),\times_{\sstd}\right),
\end{equation}
then the joint spectrum of the central elements $R_{\mu\nu}$ and the two commuting coordinates of $q^{\sstd}$ may be \emph{reconstructed} by taking the Fourier transform of each element $f\in\Estd$ relatively to the two variables associated with the two commuting coordinates. By the Riemann-Lebesgue theorem then $\Fou_2 : L^1(\IR^4)\to C_0(\IR^2,L^1(\IR^2))$ is an injective $*$-homomorphism and therefore it injects the Banach $*$-algebra $\Estd$ into the Banach $*$-algebra
\begin{equation}
	\Estd^{(2)} := C_0(\Sigma_0\times\IR^2,L^1(\IR^2)).
\end{equation}

To be more definite, we shall introduce the following notation, that will occur again in the sequel. A point $x\in\IR^4$ may equally be represented as the element \begin{equation}\label{eq:qc_notation}
	x_q\oplus x_c\in \IR^2\oplus\IR^2,
\end{equation}
the subscripts $c$ and $q$ standing for \emph{classic} and \emph{quantum} respectively. The former is associated with the two commuting coordinates, while the latter to the non-commuting ones. Thus, for every $f\in\Estd$, we have
\begin{equation}
	\Fou_2[f](\sigma,x_c,\alpha_q) := \int f(\sigma,\alpha_q\oplus\alpha_c)e^{i\langle x_c,\alpha_c\rangle}\de^2\alpha_c,
\end{equation}
where $\langle\ \cdot\ ,\ \cdot\ \rangle : \IR^2\times\IR^2\to\IR$ denotes the Euclidean inner product on $\IR^2$. Now, with the above defined notation, we have (cf. \eqref{eq:sigma0} and \eqref{eq:sstd})
\begin{align*}
	\sstd(\alpha,\beta) &= \sstd(\alpha_q\oplus\alpha_c,\beta_q\oplus\beta_c)\\
		&= (J\oplus 0_2)(\alpha_q\oplus\alpha_c,\beta_q\oplus\beta_c)\\
		&= J(\alpha_q,\beta_q),
\end{align*}
and this suggests the introduction of a \emph{twisted} multiplication on $\Estd^{(2)}$ such that the map $\Fou_2$ defined above becomes multiplication preserving. From the well known properties of the Fourier transform, the multiplication of two elements $f,g\in\Estd$ may also be written as
\begin{align*}
	(f\times g)(\sigma,\alpha) &= \int f(\sigma,\alpha')g(\sigma,\alpha-\alpha')e^{\frac i2\sstd(\alpha,\alpha')}\de^4\alpha\\
		&= \int f(\sigma,\qc{\alpha'})g(\sigma,\qc{\alpha}-\qc{\alpha'})e^{\frac i2J(\alpha_q,\alpha'_q)}\de^2\alpha_q\de^2\alpha_c\\
		&= \int \Fou_2[f](\sigma,k_c,\alpha'_q)\Fou_2[g](\sigma,k_c,\alpha_q-\alpha'_q)e^{\frac i2J(\alpha_q,\alpha'_q)}e^{i\langle k_c,\alpha_c\rangle}\de^2\alpha_q\de^2k_c,
\end{align*}
and if we now apply $\Fou_2$ to the left hand side, we end up with the identity
\begin{equation}
	\Fou_2[f\times g]=\Fou_2[f]\times \Fou_2[g],\qquad\forall f,g\in\Estd,
\end{equation}
which proves that $\Fou_2$ is a $*$-homomorphism between algebras.

Since both $\Estd$ and $\Estd^{(2)}$ admit maximal \cs-seminorms that are actually norms \cite{perini}, their enveloping \calg{s} are mere completions with respect to them. Moreover $\Estd$ is dense in its enveloping \calg\ while $\Fou_2[\Estd]$ is dense in the enveloping \calg\ of $\Estd^{(2)}$, and these considerations imply the existence of a $*$-isomorphism $\psi$ such that
\begin{equation}
	\psi:\text C^*(\Estd)\to \text C^*(\Estd^{(2)})
\end{equation}
is an isometric $*$-preserving one-to-one mapping between algebras. Finally, in deriving the explicit form of the enveloping \calg\ of $\Estd^{(2)}$, we end up with the analogue of theorem \ref{eq:basic_algebra} for the scale-covariant model of quantum space-time, namely \cite{perini}
\begin{theorem} The enveloping \calg\ of the Banach $*$-algebra $\Estd^{(2)}$ with respect to its unique maximal \cs-norm is isomorphic to $C_0(\Sigma_0\times\IR^2,\mathcal K)$, where $\mathcal K$ is the \calg\ of all the compact operators on a fixed separable Hilbert space.
\end{theorem}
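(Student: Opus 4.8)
The plan is to exploit the fact that, after the reductions furnished by the $*$-isomorphisms $T$, $S$ and $\Fou_2$, the twisted product on $\Estd^{(2)}$ couples only the two \emph{quantum} variables $\alpha_q$ and leaves the remaining coordinates $(\sigma,x_c)\in\Sigma_0\times\IR^2$ as inert parameters, the twisting cocycle $e^{\frac i2 J(\alpha_q,\alpha_q')}$ being independent of the base point. Writing $Y:=\Sigma_0\times\IR^2$ for this parameter space, I would first record that $\Estd^{(2)}=C_0(Y,A)$ is the algebra of $C_0$-sections of the \emph{trivial} field over $Y$ whose constant fibre is the Banach $*$-algebra $A:=(L^1(\IR^2),\times_J)$, the $J$-twisted convolution algebra of the plane. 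The strategy is then a two-step factorisation: first compute the enveloping \calg\ of the single fibre $A$, and then show that the enveloping-\calg\ functor commutes with the passage to $C_0$-sections of this trivial field, so that $\text C^*(\Estd^{(2)})\cong C_0(Y,\text C^*(A))$.

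For the fibre, I would observe that $A$ is precisely the integrated form of the Weyl commutation relations for a single canonical pair: its non-degenerate representations are in one-to-one correspondence with the regular representations of the Weyl system on $\IR^2$ carrying the symplectic form $J$. By von Neumann's uniqueness theorem, invoked already for the basic model, all such irreducible representations are unitarily equivalent to the Schr\"odinger representation on $L^2(\IR)$, and the integrated operators $\pi(f)$, $f\in L^1(\IR^2)$, are compact and norm-dense in $\mathcal K(L^2(\IR))$. Hence $A$ has, up to equivalence, a single irreducible representation and $\text C^*(A)\cong\mathcal K(L^2(\IR))$, exactly as the fibre $\mathcal K$ in the basic-model Theorem \ref{eq:basic_algebra}.

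For the section step, I would note that multiplication by functions in $C_0(Y)$ depending on $(\sigma,x_c)$ alone is central and respects the product (which never mixes distinct base points), so $C_0(Y)$ embeds centrally into the multiplier algebra of $\text C^*(\Estd^{(2)})$, making the latter a $C_0(Y)$-algebra. Every non-degenerate representation therefore disintegrates along the spectrum $Y$ of this central subalgebra into a direct integral of representations of the fibre $A$; combined with the fibre computation, this forces each irreducible representation of $\Estd^{(2)}$ to factor as evaluation at some $\zeta\in Y$ followed by the essentially unique Schr\"odinger representation of $A$. Consequently the maximal \cs-seminorm, which is a norm by the previously stated facts, is computed fibrewise, $\norm{f}=\sup_{\zeta\in Y}\norm{f(\zeta)}_{\mathcal K}$, and since the elementary sections $g\otimes a$ with $g\in C_0(Y)$, $a\in A$ are dense, the completion of $\Estd^{(2)}$ in this norm is $C_0(Y)\otimes\mathcal K=C_0(Y,\mathcal K)$. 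Threading this through the isomorphism $\psi:\text C^*(\Estd)\to\text C^*(\Estd^{(2)})$ established above yields the asserted form $C_0(\Sigma_0\times\IR^2,\mathcal K)$.

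The part I expect to require the most care is the justification that the maximal \cs-seminorm is genuinely fibrewise: the disintegration of an arbitrary representation of the $C_0(Y)$-algebra over the spectrum of its central subalgebra, together with the verification that the resulting field of compact-operator algebras is the \emph{trivial} bundle rather than some twisted one. Here the triviality is automatic, because the construction presents $\Estd^{(2)}$ from the outset as honest sections $C_0(Y,A)$ of a constant fibre, so that there is a global trivialisation and the Dixmier--Douady obstruction vanishes and no gluing subtleties arise; the remaining content is then the standard, if not entirely routine, direct-integral decomposition.
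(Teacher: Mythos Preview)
The paper does not actually prove this theorem in the text: it is stated with a citation to \cite{perini}, and the whole chain of reductions $T$, $S$, $\Fou_2$ leading up to it is precisely designed to put the algebra in the form where the result becomes the obvious analogue of Theorem~\ref{eq:basic_algebra} for the basic model. Your proposal follows exactly this intended path---trivial bundle over $Y=\Sigma_0\times\IR^2$ with constant fibre $(L^1(\IR^2),\times_J)$, von Neumann uniqueness identifying the fibre completion with $\mathcal K$, and the $C_0(Y)$-algebra structure forcing the maximal \cs-norm to be fibrewise---so there is nothing to contrast: this is the approach the paper's construction is manifestly setting up, and presumably what the cited reference does, mirroring the DFR argument for the basic model.

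One small remark on your last paragraph: you are right that the triviality of the resulting bundle of compact-operator algebras needs no Dixmier--Douady argument here, since the algebra is \emph{given} as honest sections of a constant field; but you should also note that even the continuity of the field of \cs-norms $\zeta\mapsto\norm{f(\zeta)}_{\mathcal K}$ is automatic for the same reason, so that the completion really is $C_0(Y,\mathcal K)$ rather than merely bounded sections. This is implicit in what you wrote but worth making explicit if you flesh the argument out.
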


\subsection{The group of automorphisms}

In this section we describe how to implement the space-time symmetries on the \calg\ describing the scale-covariant model of quantum space-time constructed above.

As expected, the action of the Poincaré group $\Poin$ on the Banach $*$-algebra $\mathcal E_0$ is implemented exactly as in the case of the Banach $*$-algebra $\mathcal E_0$ describing the basic model (cf. section \ref{sec:action}), namely
\begin{equation}
	(\tau_{(\Lambda, a)} f)(\sigma,\alpha)=\det(\Lambda)e^{-i\alpha^\mu a_\mu}f(\Lambda^{-1}\sigma{\Lambda^{-1}}^T,\Lambda^{-1}\alpha),\qquad\forall f\in\mathcal E_0.
\end{equation}
Even the proof that this action defines a group of automorphisms on the enveloping \calg\ of the Banach $*$-algebra describing the scale-covariant model of quantum space-time requires only a few trivial modifications.

From now on we will use $\mathcal E_0$ to denote the Banach $*$-algebra $C_0(\Sigma_0, L^1(\IR^4,\de^4\alpha))$ describing the scale-covariant model. The symbol $\mathcal E$ will denote its completion with respect to its maximal \cs-norm, i.e. its enveloping \calg.

The scale-covariant model is characterized by the existence of a group of automorphisms on $\mathcal E_0$ which naturally extends to a group of automorphisms on $\mathcal E$ that implements the scale-covariance. The mapping
\begin{subequations}
\begin{equation}
	\Delta:\IR^+\to\End(\mathcal E_0),
\end{equation}
defined by
\begin{equation}
	(\Delta_\lambda f)(\sigma,\alpha)=\lambda^4 f(\lambda^{-2}\sigma,\lambda\alpha),\qquad f\in\mathcal E_0
\end{equation}
\end{subequations}
is a homomorphism between groups, since
\begin{align*}
	((\Delta_\lambda\circ\Delta_\mu)f)(\sigma,\alpha) &= \lambda^4(\Delta_\mu f)(\lambda^{-2}\sigma,\lambda\alpha)\\
		&= (\lambda\mu)^4f((\lambda\mu)^{-2}\sigma,(\lambda\mu)\alpha)\\
		&= (\Delta_{\lambda\mu}f)(\sigma,\alpha),\qquad\forall f\in\mathcal E_0,\;\forall\lambda,\mu\in\IR^+
\end{align*}
i.e.
\begin{equation}
	\Delta_\lambda\circ\Delta_\mu = \Delta_{\lambda\mu},\qquad\forall\lambda,\mu\in\IR^+.
\end{equation}
Moreover each map $\Delta_\lambda$, $\lambda\in\IR^+$, is a $*$-automorphism of $\mathcal E_0$ \cite{perini}, since
\begin{align*}
	\exists\Delta_\lambda^{-1} &=\Delta_{\lambda^{-1}},\qquad \forall \lambda\in\IR^+,\\
	(\Delta_\lambda f^*) &= (\Delta_\lambda f)^*,\qquad\forall f\in\mathcal E_0,\\
	\norm{\Delta_\lambda f} &= \norm f,\qquad \forall f\in\mathcal E_0,
\end{align*}
and therefore $(\Ran \Delta, \circ)$ is a group of $*$-automorphisms implementing the scale-covariance on the Banach $*$-algebra $\mathcal E_0$. Since \cite{perini}
\begin{equation}
	\norm{\Delta_\lambda f}_{\text{\cs}} \equiv \norm{f}_{\text{\cs}},\qquad\forall f\in\mathcal E_0,
\end{equation}
where $\norm{\ \cdot\ }_{\text{\cs}}$ denotes the maximal \cs-norm on $\mathcal E_0$, each $\Delta_\lambda$ extends uniquely to $\mathcal E$ and therefore the action of the group $(\Ran \Delta, \circ)$ on $\mathcal E_0$ extends to an action by $*$-automorphisms on the enveloping \calg\ $\mathcal E$.

An action of the above group of automorphisms on the coordinates $q$ may be defined as well as a transposition of the action on the elements of the \calg\ describing the scale-covariant model of quantum space-time. We again consider (cf. section \ref{sec:action}) the representations of elements of the form
\begin{equation}
	\pi(f_1\otimes f_2) = f_1(R)\int f_2(\alpha)e^{i\alpha^\mu q_\mu}\de^4\alpha,
\end{equation}
where $q$ and $R$ are being used as a shorthand for $\pi(q)$ and $\pi(R)$ respectively, and $f_1\in C_0(\Sigma_0)$, $f_2\in L^1(\IR^4,\de^4\alpha)$ are completely arbitrary. Under the action of $\Delta_\lambda$ we have
\begin{equation}
	\pi\left(\Delta_\lambda(f_1\otimes f_2)\right) =  f_1'(R)\cdot \lambda^4\int f_2(\lambda\alpha)e^{i\alpha^\mu q_\mu}\de^4\alpha,
\end{equation}
where
\begin{equation}
	f_1'(\sigma) = f_1(\lambda^{-2}\sigma),\qquad\forall\sigma\in\Sigma_0.
\end{equation}
But
\begin{equation}
	\lambda^4\int f_2(\lambda\alpha)e^{i\alpha^\mu q_\mu}\de^4\alpha = \int f_2(\alpha)e^{i\alpha^\mu(\lambda^{-1} q)_\mu}\de^4\alpha,
\end{equation}
from which we get the transpose action of $\Delta_\lambda$ on the coordinates $q$, namely
\begin{equation}\label{eq:tdilation}
	\Delta_\lambda^{-1}q = \lambda q.
\end{equation}

We now focus on the determination of the global structure of the symmetry group of the coordinates $q_\mu$, and specifically its multiplication law. Therefore we shall consider the composition between a dilation $\Delta_\lambda^{-1}$ and a Poincaré transformation $\tau_{(\Lambda,a)}^{-1}$, that, we remind, is such that (cf. section \ref{sec:action})
\begin{equation*}
	\tau_{(\Lambda,a)}^{-1}q_\mu = (\Lambda q)_\mu + a_\mu\cdot\Id.
\end{equation*}
It is easily verified that one obtains the family of automorphisms
\begin{equation}
	\tilde\tau_{(\lambda,\Lambda,a)}^{-1}:=\tau_{(\Lambda,a)}^{-1}\circ\Delta_{\lambda}^{-1},\qquad\lambda\in\IR^+,(\Lambda,a)\in\Poin,
\end{equation}
and in order to determine the composition law of the new global group, we evaluate
\begin{equation}
	(\tilde\tau_{(\lambda',\Lambda',a')}^{-1}\circ\tilde\tau_{(\lambda,\Lambda,a)}^{-1})q_\mu,
\end{equation}
which leads to
\begin{align*}
	(\tilde\tau_{(\lambda',\Lambda',a')}^{-1}\circ\tilde\tau_{(\lambda,\Lambda,a)}^{-1})q_\mu
		&= \tilde\tau_{(\lambda',\Lambda',a')}^{-1}(\lambda(\Lambda q)_\mu + a_\mu)\\
		&= \lambda'\Lambda'\left[\lambda(\Lambda q)_\mu + a_\mu\right] + a'_\mu\\
		&= (\lambda'\lambda)(\Lambda'\Lambda q)_\mu+ \lambda'(\Lambda'a)_\mu + a'_\mu,
\end{align*}
i.e.
\begin{equation}
	(\tilde\tau_{(\lambda',\Lambda',a')}^{-1}\circ\tilde\tau_{(\lambda,\Lambda,a)}^{-1}) \equiv \tilde\tau^{-1}_{(\lambda'\lambda,\Lambda'\Lambda,\lambda'\Lambda'a + a')}.
\end{equation}
The sought composition law is then
\begin{equation}
	(\lambda',\Lambda',a')(\lambda,\Lambda,a) = (\lambda'\lambda,\Lambda'\Lambda,\lambda'\Lambda'a + a'),
\end{equation}
which allows us to conclude that \cite{perini}
\begin{theorem} The scale-covariant model of quantum space-time admits the symmetry group
\begin{equation}\label{eq:semidirect}
	\mathscr G = \mathscr D \rtimes \Poin,
\end{equation}
where $\mathscr D$ denotes the dilation subgroup of $\mathscr G$.
\end{theorem}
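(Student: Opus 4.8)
The composition law for the family $\tilde\tau_{(\lambda,\Lambda,a)}$ derived just above already carries all the information needed, so my plan is to read off the internal semidirect structure from it rather than revisit the automorphisms themselves. First I would observe that the group axioms come essentially for free: since each $\tilde\tau_{(\lambda,\Lambda,a)}$ is a composition of the invertible automorphisms $\tau_{(\Lambda,a)}$ and $\Delta_\lambda$, associativity is inherited from composition of maps, the identity automorphism supplies the unit $(1,\Id,0)$, and a one-line check against the rule $(\lambda',\Lambda',a')(\lambda,\Lambda,a)=(\lambda'\lambda,\Lambda'\Lambda,\lambda'\Lambda'a+a')$ shows that the inverse of $(\lambda,\Lambda,a)$ is $(\lambda^{-1},\Lambda^{-1},-\lambda^{-1}\Lambda^{-1}a)$. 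Hence $\mathscr G$, as the set of triples $(\lambda,\Lambda,a)$ with this law, is a genuine group.

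Next I would isolate the two relevant subgroups: the dilations $\mathscr D=\{(\lambda,\Id,0):\lambda\in\IR^+\}$, manifestly isomorphic to $(\IR^+,\cdot)$, and the Poincaré subgroup $\Poin=\{(1,\Lambda,a):(\Lambda,a)\in\Poin\}$, on which the law restricts to the familiar $(1,\Lambda',a')(1,\Lambda,a)=(1,\Lambda'\Lambda,\Lambda'a+a')$. The key structural step is to identify the normal factor. The assignment $\rho:(\lambda,\Lambda,a)\mapsto\lambda$ is a homomorphism onto $\mathscr D$, since the $\lambda$-component multiplies with no cross terms, and its kernel is exactly $\Poin$; therefore $\Poin\triangleleft\mathscr G$ and $\mathscr G/\Poin\cong\mathscr D$. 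Combined with $\mathscr D\cap\Poin=\{(1,\Id,0)\}$ and the factorization $(\lambda,\Lambda,a)=(1,\Lambda,a)(\lambda,\Id,0)$, which gives $\mathscr G=\Poin\,\mathscr D$, this exhibits $\mathscr G$ as the internal semidirect product of $\Poin$ and $\mathscr D$.

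To finish I would compute the action of $\mathscr D$ on $\Poin$ that labels the product, namely conjugation by a dilation. A direct substitution yields $(\lambda,\Id,0)(1,\Lambda,a)(\lambda^{-1},\Id,0)=(1,\Lambda,\lambda a)$, so dilations fix the Lorentz part and rescale the translations, $(\Lambda,a)\mapsto(\Lambda,\lambda a)$; this is the homomorphism $\mathscr D\to\Aut(\Poin)$ defining the semidirect product. I do not expect any serious obstacle, since the composition law is already established and the remaining content is group-theoretic bookkeeping. The only point demanding care is the direction of the product: it is the translations, and hence all of $\Poin$, that form the normal subgroup, whereas $\mathscr D$ is not normal — conjugating a pure dilation $(\lambda,\Id,0)$ by a translation $(1,\Id,b)$ produces $(\lambda,\Id,(1-\lambda)b)$, which lies in $\mathscr D$ only when $b=0$. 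This is precisely why $\mathscr D$ plays the role of the acting quotient rather than of an ideal in the decomposition $\mathscr G=\mathscr D\rtimes\Poin$.
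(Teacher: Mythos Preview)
Your argument is mathematically sound, and in fact more careful than the paper's own proof on the crucial point of which factor is normal. The paper argues that $\mathscr D$ is normal by computing $(\lambda',\Lambda',a')(\lambda,\Id,0)(\lambda',\Lambda',a')^{-1}$ and obtaining $(\lambda,\Id,0)$; but if you redo that conjugation with the composition law $(\lambda',\Lambda',a')(\lambda,\Lambda,a)=(\lambda'\lambda,\Lambda'\Lambda,\lambda'\Lambda'a+a')$, the translation component comes out as $(1-\lambda)a'$, not $0$ --- exactly the phenomenon you exhibit when conjugating a dilation by a pure translation. So the paper's claim that $\mathscr D$ is normal rests on an arithmetic slip, and your observation that $\mathscr D$ is \emph{not} normal is correct.

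Your route --- exhibiting the surjection $\rho:(\lambda,\Lambda,a)\mapsto\lambda$ with kernel $\Poin$, hence $\Poin\triangleleft\mathscr G$, together with the trivial intersection and the factorisation $(\lambda,\Lambda,a)=(1,\Lambda,a)(\lambda,\Id,0)$ --- is the clean way to establish the internal semidirect structure, and it buys you the explicit action of dilations on $\Poin$ by $(\Lambda,a)\mapsto(\Lambda,\lambda a)$. The only residual issue is notational: under the most common convention $N\rtimes H$ has $N$ normal, so with $\Poin$ normal one would write $\mathscr G=\Poin\rtimes\mathscr D$ (equivalently $\mathscr D\ltimes\Poin$) rather than $\mathscr D\rtimes\Poin$ as in the statement. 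Your final paragraph reads the symbol the opposite way; that is a convention choice, but it is worth flagging that the statement as printed, read with the standard convention, asserts the wrong factor to be normal.
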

\begin{proof}
It follows from the construction of $\mathscr G$ that $\Poin\cap\mathscr D=\{0\}$. Moreover $\mathscr D$ is stable under conjugation with each element of $\mathscr G$, i.e. it is a normal subgroup, since
\begin{equation}
	(\lambda,\Lambda,a)^{-1} = (\lambda^{-1},\Lambda^{-1},-\lambda^{-1}\Lambda^{-1}a)
\end{equation}
and
\begin{align*}
	(\lambda',\Lambda',a')(\lambda,\Id,0)(\lambda',\Lambda',a')^{-1} &= (\lambda',\Lambda',a')(\lambda{\lambda'}^{-1},{\Lambda'}^{-1},-{\lambda'}^{-1}{\Lambda'}^{-1}a')\\
		&= (\lambda,\Id,0),
\end{align*}
which proves that $\mathscr G$ is the semidirect product of $\mathscr D$ and $\Poin$.
\end{proof}

	\section{The free scale-covariant scalar neutral field\label{chap:field}}

At this point we have provided all the necessary background to proceed with the construction of a free massless scalar neutral field on the
scale-covariant model of quantum space-time. We recall that the constraint on the mass comes from the scale symmetry, for a non vanishing mass introduces an intrinsic scale, namely the Compton length
\begin{equation}
	\lambda_C = \frac \hbar{mc},
\end{equation}
that would break the scale-covariance of the field theory. On a 4-dimensional space-time we therefore expect to somehow recover the ordinary behavior
\begin{equation}\label{eq:scaling}
	\begin{cases}x\mapsto\lambda x\\\phi\mapsto\lambda^{-1}\phi\end{cases}
\end{equation}
and an explicit proof that this is indeed the case is given later in this section.

\subsection{The construction of the field}

The construction of the field is carried out as discussed in section \ref{sec:qft}
for the case of the basic model. We thus consider the invariant measure on the boundary
$\partial \bar V^+$ of the future light-cone $\bar V^+$, given by
	\begin{equation}
		\de\Omega_0^+(k)=\delta(k^2)\theta(k^0)\de^4k,
	\end{equation}
and define the massless field as
	\begin{equation}
		\phi(q) := \frac1{(2\pi)^{3/2}}\int\left[e^{ik^\mu q_\mu}\otimes a(\bv k) +
			e^{-ik^\mu q_\mu}\otimes a(\bv k)^*\right]\de\Omega_0^+(\bv k).
	\end{equation}

The map from the state space $\mathcal S(\mathcal E)$ of the \calg\ $\mathcal E$
describing the scale-covariant model of quantum space-time to operators on the Hilbert
space $\Hilb_F$ of the Fock construction is defined as in eq. \eqref{eq:field_on_state},
namely
  $$\phi(\omega) := (\omega\otimes\Id)(\phi(q)),\qquad\forall\omega\in\mathcal S(\mathcal E).$$
In order to simplify some of the computations we shall refer to the special representation
$q^{\sstd\times x_c}$, $x_c\in\IR^2$. Recalling the definition of the representation
$q^{\sigma\times x_c}$ in terms of the standard point $\sstd$, that of the matrix $A$ (see
eq. \eqref{eq:A}) and that of $q^{\sigma\times x_c}$ in terms of $q^{\sigma_0\times x_c}$,
we have
	\begin{align}
		q^{\sigma\times x_c} &\cong \Lambda_\sigma q^{\sigma_0\times x_c}\nonumber\\
			&\cong \Lambda_\sigma A^{-1}q^{\sstd\times x_c},
	\end{align}
where $\Lambda_\sigma$ is the map described in corollary \ref{cor:section}. We shall
now determine an identity that will be useful in the following computations, regarding
the Minkowskian inner product $k^\mu(q^{\sigma\times x_c})_\mu$. Employing matrix
notation for the sake of simplicity we have
	\begin{equation}\label{eq:inner_identity}
		k^\mu(q^{\sigma\times x_c})_\mu \cong (B_\sigma k)^\mu(q^{\sstd\times x_c})_\mu,
			\qquad B_\sigma = \eta {A^{-1}}^T\eta\Lambda_\sigma^{-1},
	\end{equation}
where $B_\sigma$ is such that $\det(B_\sigma)\neq0$ $\forall\sigma\in\Sigma_0$, since
$\eta=\diag(1,-1,-1,-1)$, $\Lambda_\sigma$ and $A$ are all non-singular. The explicit
expression of $\phi(\omega)$, namely
	\begin{equation}
		\phi(\omega)=\frac1{(2\pi)^{3/2}}\int\omega\left(e^{ik^\mu(q^{\sigma\times x_c})_\mu}\right)
			a(\bv k)\delta(k^2)\de^4k,
	\end{equation}
can now be manipulated by using \eqref{eq:inner_identity} in order to replace $q^{\sigma\times x_c}$
with $q^{\sstd\times x_c}$. In this way we determine
	\begin{align}\label{eq:pre_smeared}
		\phi(\omega) &= \frac1{(2\pi)^{3/2}}\int\omega\left(e^{i(B_\sigma k)^\mu(q^{
				\sstd\times x_c})_\mu}\right)a(\bv k)\delta(k^2)\de^4k\nonumber\\
			&= \frac1{(2\pi)^{3/2}}\int\omega\left(e^{ik^\mu(q^{\sstd\times x_c})_\mu}\right)
				a(\bv k_{B_\sigma^{-1}})\delta((B_\sigma^{-1}k)^2)|\det(B_\sigma^{-1})|\de^4k,
	\end{align}
where $\bv k_{B_\sigma^{-1}}$ is the spatial part of the 4-vector $B_\sigma^{-1}k$.
The states of optimal localizations can be parametrized by vectors $\xi\in\IR^4$,
but in contrast to the basic model (cf. section \ref{sec:calg_basic}) there is
just one pair of Schr\"odinger operators, for two of the four quantum coordinates
are actually \emph{classical}, i.e. they are among the generators of the center of
the algebra. This means that it suffices to consider the ground state of a one-dimensional
harmonic oscillator, whose Hamiltonian is proportional to the Euclidean norm $\norm{\bv e}$
in $\IE^3$ of the electric (or, equivalently, the magnetic) part of the point $\sigma\in\Sigma_0$.
Furthermore the value of $\norm{\bv e}$ plays the role of a \emph{scalable} fundamental
length for the scale-covariant model and as such it must show up in the uncertainty
of the measurements of the non-commuting coordinates. This follows
directly from
	\begin{equation}\label{eq:state_covariant}
		\omega_\xi\left(e^{ik^\mu(q^{\sstd\times x_c})_\mu}\right) = e^{ik^\mu\xi_\mu}
			e^{-\frac12\norm{\sstd}\norm{k_q}^2_2},
	\end{equation}
where the norm $\norm\sigma$ of a skew-symmetric bilinear form $\sigma$ is defined as
	\begin{equation}
		\norm{\sigma}^2:=\frac14\Tr(\sigma^T\sigma)=\frac12(\norm{\bv e}^2 + \norm{\bv m}^2),
	\end{equation}
and we are once again employing the notation developed towards the end of section
\ref{eq:scale_cov_calg}, equation \eqref{eq:qc_notation}, page \pageref{eq:qc_notation},
$\norm{\ \cdot\ }_2$ denoting the Euclidean norm in $\IR^2$.
The Fourier transform of the above expression \eqref{eq:state_covariant} allows us
to introduce the usual family of test functions describing localization data,
	\begin{equation}\label{eq:psi_scale_cov}
		\psi_{\omega_\xi}(x) = \delta^{(2)}(x_c-\xi_c)\cdot\frac 1{2\pi\norm{\sstd}}
			e^{-\frac1{2\norm\sstd}\norm{x_q-\xi_q}_2^2},
	\end{equation}
indexed by vectors $\xi\in\IR^4$, associated with states of optimal localization
$\omega_\xi$, $\xi\in\IR^4$.
It must be remarked though that, strictly speaking, there is no absolute meaning
of optimal localization states on the scale-covariant model of quantum space-time,
for the actual absolute minimum of the uncertainty is reached at the spectral point
$0\in\Sigma$, which is associated with the ordinary commutative space-time. To see
this it suffices to consider a state $\omega$ in the domain of the operators $[q_\mu,q_\nu]$,
for in this case the uncertainty must satisfy \cite{perini}
  $$\sum_{\mu=0}^3(\Delta_\omega q_\mu)^2 \geq \sqrt 2\int\limits_\Sigma\norm\sigma\de\mu_\omega(\sigma).$$
If $\omega$ is a pure state, then $\de\mu_\omega(\sigma) = \delta(\sigma-\sigma_\omega)\de\sigma$ for some $\sigma_\omega\in\Sigma$, and therefore
  $$\sum_{\mu=0}^3(\Delta_\omega q_\mu)^2 \geq \sqrt 2\norm{\sigma_\omega},$$
which shows that the lower bound of the uncertainty depends on points from $\Sigma$.
Thus optimal localization states on the scale-covariant model of quantum space-time
are relative to a fixed point of the joint spectrum $\Sigma_0$.

For the evaluation of the Fourier transform of the remaining terms in \eqref{eq:pre_smeared},
a new identity for the inner product $k^\mu x_\mu$ can be determined that would simplify
both calculations and notation. Switching again to matrix form, we have
	\begin{equation}
		k^\mu x_\mu = (B_\sigma^{-1}k)^\mu (\tilde B_\sigma x)_\mu,\qquad \tilde B_\sigma
			= \eta B_\sigma^T\eta,
	\end{equation}
where $\tilde B_\sigma=\Lambda_\sigma A^{-1}$ is a non-singular matrix. Hence the
aforementioned Fourier transform is given by
	$$\int a(\bv k_{B_\sigma^{-1}})\delta((B_\sigma^{-1}k)^2)e^{-ik^\mu x_\mu}|\det(B_\sigma^{-1})|\de^4k
		=\phi(\tilde B_\sigma x),$$
$\phi(\tilde B_\sigma x)$ being the usual massless scalar neutral field on classical
space-time, evaluated at the point $\tilde B_\sigma x$. Thanks to the above results,
the map $\phi(\omega)$ takes the more familiar form of a field smeared out with a
test function, namely
	\begin{equation}\label{eq:smeared}
		\phi(\omega) = \int \phi(\tilde B_\sigma x)\psi_{\omega}(x)\de^4 x.
	\end{equation}
In order to refer to the general representation of the coordinates $q_\mu$ at $\sigma\in\Sigma_0$,
we introduce the functions $\psi_{\omega}^{(\sigma)}$ associated with
$\omega(e^{ik^\mu(q^{\sigma\times x_c})_\mu})$ given by
	\begin{equation}\label{eq:psi_sigma}
		\psi^{(\sigma)}_\omega(x) := |\det(A)|\psi_\omega(\tilde B_\sigma^{-1} x).
	\end{equation}
The identity \eqref{eq:smeared} then becomes
	\begin{equation}\label{eq:smeared_cov}
		\phi(\omega) = \int \phi(x)\psi^{(\sigma)}_\omega(x)\de^4 x.
	\end{equation}
\emph{En passant} we observe that, although \eqref{eq:psi_sigma} has been given as
a definition here, this can be derived in a different way, as a consequence of what
seems to be a more natural definition, namely
	\begin{equation}
		\psi_\omega^{(\sigma)}(x) := \frac1{(2\pi)^4}\int e^{-ik^\mu x_\mu}\omega\left(e^{ik^\mu
			(q^{\sigma\times x_c})_\mu}\right)\de^4k.
	\end{equation}
Using the unitary equivalence between $q^{\sigma\times x_c}$ and $\Lambda_\sigma
A^{-1}q^{\sstd\times x_c}$, it is then easy to prove that
	\begin{equation}
		\psi^{(\sigma)}_\omega(x) = |\det(A)|\psi_\omega(A\Lambda_\sigma^{-1} x),
	\end{equation}
i.e. equation \eqref{eq:psi_sigma}.

\subsection{Covariance and locality\label{sec:cov_and_loc}}

We will now explore the properties of covariance of the scalar neutral field constructed
on the scale-covariant model of quantum space-time as discussed above. With respect
to the action of the Poincaré group, this field, say $\phi(q)$, satisfies covariance
in the same way as the scalar field constructed on the basic model of quantum space-time
(see section \ref{sec:qft}). The new feature that wasn't yet explored at this point of our argumentation is the
behavior under dilations, i.e. under the transformation laws of the type \eqref{eq:scaling}.
For a scalar field in a four dimensional space-time we expect the following ``classical''
behavior,
\begin{equation}\label{eq:scaling_dimension}
	\phi'(x') = \frac1\lambda\phi(x),\qquad x'=\lambda x,
\end{equation}
to hold for any $\lambda\in\IR^+$, which is a special case, viz. $x'(x)=\lambda x$
of the more general transformation law for a scalar field $\phi$
of scaling dimension $\Delta$ in a $d$-dimensional space-time, namely \cite{difrancesco96}
\begin{equation}
	\phi'(x') = \det(\text D_x x')^{-\Delta/d}\phi(x),
\end{equation}
where $x'\in\Diff(\IR^4)$ is a generic invertible diffeomorphism on $\IR^4$. Since the action of the group of dilations on the coordinates of quantum space-time is known (cf. eq. \eqref{eq:tdilation}), we only lack a group of automorphisms on the space of all the linear operators acting on the Hilbert space $\Hilb_F$ of the Fock construction in order to completely implement the scale covariance. Applying \eqref{eq:tdilation} to the scalar field $\phi(q)$ one has
	\begin{align*}
		(\Delta_\lambda^{-1}\otimes\Id)\phi(q) &= 
				\int\left[e^{ik^\mu (\lambda q)_\mu}\otimes a(\bv k) + \text{h.c.}\right]\de\Omega_0^+(\bv k),\\
			&= \lambda^{-2}\int\left[e^{ik^\mu q_\mu}\otimes a(\lambda^{-1}\bv k) +
				\text{h.c.}\right]\frac{\de^3\bv k}{2k_0},
	\end{align*}
and thus covariance under dilations is provided by the definition of the endomorphisms
$\delta_\lambda$ such that
	\begin{equation}\label{eq:delta}
		\delta_\lambda a(\bv k) := \lambda^{-1}a(\lambda^{-1}\bv k),\qquad\forall\bv k\in\IR^3,
	\end{equation}
for each $\lambda\in\IR^+$. But an inverse $\delta_\lambda^{-1}$ always exists
$\forall\lambda\in\IR^+$, whose explicit expression is of course given by
	\begin{equation}
		\delta_\lambda^{-1} a(\bv k) = \lambda a(\lambda\bv k),\qquad\forall\bv k\in\IR^3,\lambda\in\IR^+.
	\end{equation}
Therefore the family of maps $\delta_\lambda$, $\lambda\in\IR^+$ forms a group of
automorphisms on the set of all the creation and annihilation operators, since it
is easily seen that
\begin{equation}
	\delta_\lambda\circ\delta_\mu = \delta_{\lambda\mu}
\end{equation}
holds for any $\lambda,\mu\in\IR^+$. By the definition of the $\delta_\lambda$s we thus have the sought scale-covariance
law for the massless scalar neutral field on the scale-covariant model of quantum
space-time,
	\begin{equation}
		(\Delta_\lambda^{-1}\otimes\delta_\lambda^{-1})\phi(q) = \frac1\lambda\phi(q),\qquad
			\forall\lambda\in\IR^+.
	\end{equation}

Another property worth investigating is the locality of such a field theory, and
a massless scalar field like the one introduced above is a good candidate tool to
carry out such analysis. To this end we take two states of optimal localization,
namely $\omega_\xi$ and $\omega_\eta$, indexed by vectors $\xi,\eta\in\IR^4$, and
determine the commutator of the fields evaluated on them, viz.
	\begin{equation}
		[\phi(\omega_\xi),\phi(\omega_\eta)].
	\end{equation}
Exploiting all the results derived above and carrying out the computations we end
up with
	\begin{align*}
		[\phi(\omega_\xi),\phi(\omega_\eta)] &= i\int\Delta(\tilde B_\sigma(x-y))
				\psi_{\omega_\xi}(x)\psi_{\omega_\eta}(y)\de^4x\de^4y\\
			&= \frac{1}{(2\pi)^3}\int e^{ik^\mu(A^{-1}(\xi - \eta))_\mu}
					e^{-\norm{\sstd}\norm{({A^{-1}}^\dagger k)_q}_2^2}
					\left[\de\Omega_0^+(\bv k)-\de\Omega_0^-(\bv k)\right].
	\end{align*}
If the spatial part of $A^{-1}(\xi - \eta)$ is aligned along the direction of the
$x^2$ coordinate and if we introduce the parametrization
\begin{subequations}
	\begin{align}
			p^1 &= \norm{\bv p}\cos\phi\sin\theta,\\
			p^2 &= \norm{\bv p}\cos\theta,\\
			p^3 &= \norm{\bv p}\sin\phi\sin\theta,
	\end{align}
\end{subequations}
then, using the explicit expression of the matrix $A$ \eqref{eq:explicitA}, we determine
	\begin{align}
		[\phi(\omega_\xi),\phi(\omega_\eta)] = &\frac i{4\pi\norm{\Delta\bv x}}
				\sqrt{\frac2{8\pi\norm{\sstd}}}\cdot\nonumber\\
			\cdot&\frac1\pi\int\limits_0^\pi\de\phi\int\limits_{-1}^1\de a
			\left[\frac{\Delta t-\norm{\Delta\bv x}\cdot a}{2\norm{\sstd}F(a,\phi)^{3/2}}\right]
			\norm{\Delta\bv x} e^{-\frac{(\Delta t-\norm{\Delta\bv x}\cdot a)^2}{4\norm{\sstd}F(a,\phi)}},
	\end{align}
where $a = \cos\theta$, $F(a,\phi) = 2 - a^2 + 2\sqrt{1 - a^2}\sin\phi$, and
	\begin{equation}
		\Vert\Delta\bv x\Vert = \sqrt{\sum_{k=1}^3(A^{-1}(\xi - \eta))_k^2},\qquad
			\Delta t = (A^{-1}(\xi - \eta))_0.
	\end{equation}
The function $F(a,\phi)$ is continuous on $[-1,1]\times[0,\pi]$ and such that
$F([-1,1]\times[0,\pi]) = [1,4]$, hence there must be
a point $(a^*, \phi^*)\in[-1,1]\times[0,\pi]$ such that
	\begin{align}\label{eq:scale_cov_comm}
		[\phi(\omega_\xi),\phi(\omega_\eta)] = &2\frac i{4\pi\norm{\Delta\bv x}}
				\sqrt{\frac2{8\pi\norm{\sstd}}}\cdot\nonumber\\
			&\cdot\left[\frac{\Delta t-\norm{\Delta\bv x}\cdot a^*}{2\norm{\sstd}{F^*}^{3/2}}\right]
				\norm{\Delta\bv x} e^{-\frac{(\Delta t-\norm{\Delta\bv x}\cdot a^*)^2}{4\norm{\sstd}F^*}},
	\end{align}
with $F^* = F(a^*,\phi^*)$. The global structure of the above expression is very similar
to that of equation \eqref{eq:qcommutator}, expressing the commutator for the scalar
neutral field on the basic model of quantum space-time. Moreover the behavior along
space-like directions is also similar, falling off like a Gaussian, but this time
there is an explicit dependence on the point $\sigma\in\Sigma_0$ of the joint spectrum
of the operators $R_{\mu\nu}$ in \eqref{eq:scale_cov_cr}, i.e. on the representation of the commutation
relations among the coordinates. Since $\norm\sigma$ can be adjusted by the action
of some boosts in a special direction, namely $\hv e\times\hv m$, where $\sigma=(\bv e,\bv m)$,
the width of the Gaussian in \eqref{eq:scale_cov_comm}, which is roughly estimated by $\sqrt{\norm\sigma}$,  varies accordingly, which is in accordance with the fact that
there are no states of \emph{absolute} optimal localization, as remarked above.

\subsection{Nets of field algebras\label{sec:net}}

In the ordinary theory of Algebraic Quantum Field Theory, nets of field algebras
are usually indexed by families of regions of the usual, i.e. commutative, Minkowski
space-time. Since there is no such concept of region on a quantum space-time, a natural
generalization of this construction is obtained by considering families of projections
in the Borel completion of the \calg\ of the quantum model of space-time as index
sets for field algebras, as argued in \cite{doplicher09}.

In this paper however we have followed a different approach, for the commutativity
of two of the four coordinates allows us to define a net that can be indexed
by ordinary regions of the Minkowski space-time, although in a very special yet
\emph{non-trivial} way. In a sense, some triviality is to be expected whenever one
considers regions in commutative space-time which are too much \emph{arbitrary}, as
it will be clarified later on. Moreover scale-covariance of the net, in the usual sense of Wightman's axioms, is to be expected
to hold, and indeed this is one of the main results of this subsection.

It has already been argued that, as a consequence of the degeneracy of each point
$\sigma\in\Sigma_0$, interpreted as a symplectic form on $\IR^4$, a pair of coordinates
behaves classically, i.e. generate the center of the algebra. This is even more clear when looking at the expression of the
(generalized) function $\psi_\omega$ associated with a state $\omega$ (see eq. \eqref{eq:psi_scale_cov})
of optimal localization of the \calg\ $\mathcal E$ describing the scale-covariant
model of quantum space-time, where the presence of the $\delta^{(2)}$ distribution
gives it the \emph{look-and-feel} of a classical pure state. An obvious generalization
of \eqref{eq:psi_scale_cov} to a wider class of states $\omega\in\mathcal S(\mathcal E)$
is then given by
	\begin{equation}\label{eq:psi_struct}
		\psi_{\omega_{f,\xi}}(x) = f(x_c)\frac{1}{2\pi\norm{\sstd}}
			e^{-\frac1{2\norm{\sstd}}\norm{x_q-\xi}_2^2},
	\end{equation}
for any $\xi\in\IR^2$ and positive function $f\in\mathscr D(D)$, $D\subset\IR^2$ being
a bounded subset. This structure motivates the introduction of the following notation
\begin{subequations}\label{eq:states}
	\begin{align}
		&\tilde\Omega = \{\omega_{f,\xi}\in\mathcal S(\mathcal E)\ |\ \xi\in\IR^2,
			f\in\D(D),f>0,D\subset\IR^2\text{ bounded}\}\\
		&\Omega(\{x_q\}\times D) = \{\omega_{f,\xi}\in\tilde\Omega\ |\ \xi = x_q,\supp f\subset D\},
	\end{align}
\end{subequations}
with $x_q\in\IR^2$ and $D\subset\IR^2$ bounded. We shall also define and use the
following abuse of notation
	\begin{equation}\label{eq:supp}
		\supp \psi_{\omega_{f,\xi}} := \{\xi_q\}\times\supp f,\qquad\omega_{f,\xi}\in \tilde\Omega,
	\end{equation}
and it'll prove convenient to associate triples $(\sigma,x_q,D)$, consisting of a point
$\sigma$ of the joint spectrum $\Sigma_0$, an element $x_q\in\IR^2$ and a bounded
region $D\subset\IR^2$, with those special regions in $\IR^4$ of the form
	\begin{equation}
		(\sigma, x_q,D) := \Lambda_\sigma A^{-1}(\{x_q\}\times D)\subset\IR^4
	\end{equation}
of classical Minkowski space-time. We shall denote the set of all these regions $(\sigma,x_q,D)$
with $\mathcal R$ or, in other words, we define
	\begin{equation}\label{eq:regions}
		\mathcal R := \{(\sigma,x_q,D)\ |\ \sigma\in\Sigma_0,x_q\in\IR^2,D\subset\IR^2\text{ bounded}\}.
	\end{equation}
The above set can be given a preordered set structure by endowing it with the following
preorder relation. If $R=(\sigma,x_q,D)$ and $R'=(\sigma',x'_q,D')$ are any two regions
in $\mathcal R$, then
	\begin{equation}\label{eq:preorder}
		R \subset R' \qquad\iff\qquad x_q \equiv x'_q\quad\wedge\quad D \subset D'.
	\end{equation}
It is easy to see that
	\begin{proposition} The family of regions $\mathcal R$ endowed with the preorder
		relation \eqref{eq:preorder} is a directed set.
	\end{proposition}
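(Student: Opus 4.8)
The plan is to check the two defining properties of a directed set: that \eqref{eq:preorder} is a preorder (reflexive and transitive) and that every pair of elements of $\mathcal R$ admits an upper bound. Nonemptiness is immediate, since one may take any $\sigma\in\Sigma_0$, any $x_q\in\IR^2$ and any bounded $D\subset\IR^2$. The first thing I would record is that the relation $R\subset R'$ depends only on the pair $(x_q,D)$ and is blind to the spectral label $\sigma$; this observation simplifies everything that follows.

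For the preorder, reflexivity is trivial from $x_q\equiv x_q$ and $D\subset D$. Transitivity follows by chaining the two defining conditions: if $R\subset R'$ and $R'\subset R''$ then $x_q\equiv x'_q$ and $x'_q\equiv x''_q$ give $x_q\equiv x''_q$, while $D\subset D'$ and $D'\subset D''$ give $D\subset D''$, so $R\subset R''$. Both steps use only reflexivity and transitivity of equality on $\IR^2$ together with those of set inclusion on the bounded subsets of $\IR^2$. For the existence of upper bounds, given $R=(\sigma,x_q,D)$ and $R'=(\sigma',x'_q,D')$ with a common quantum center $x_q\equiv x'_q$, I would set
\begin{equation*}
	R'' := (\sigma,\, x_q,\, D\cup D'),
\end{equation*}
note that $D\cup D'$ is bounded as a finite union of bounded sets so that $R''\in\mathcal R$, and conclude from $D\subset D\cup D'$ and $D'\subset D\cup D'$ that $R\subset R''$ and $R'\subset R''$. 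The $\sigma$-slot of $R''$ may be assigned freely, since it is inert in \eqref{eq:preorder}.

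The step that requires care, and which I regard as the only real content of the statement, is the role of the quantum center. Because any upper bound $R''$ of $R$ and $R'$ is forced by \eqref{eq:preorder} to satisfy $x_q\equiv x''_q\equiv x'_q$, the relation couples two regions only when their centers coincide, and the directedness is thereby reduced to the elementary fact that the bounded subsets of $\IR^2$ form a poset directed upward by inclusion, with unions furnishing the bounds. I would therefore organize the argument so that, after fixing the common center $x_q$, everything rests on this standard directedness of $(\{D\subset\IR^2 : D\text{ bounded}\},\subset)$; the remaining verifications are then pure bookkeeping dictated by the definition of $\mathcal R$.
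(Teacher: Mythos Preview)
Your argument has a genuine gap at precisely the point you flag as ``the only real content of the statement.'' You correctly observe that any common upper bound of $R=(\sigma,x_q,D)$ and $R'=(\sigma',x'_q,D')$ forces $x_q=x'_q$. But this observation cuts the other way: whenever $x_q\neq x'_q$ --- and such pairs certainly exist in $\mathcal R$, since $x_q$ ranges freely over $\IR^2$ --- the pair $(R,R')$ admits \emph{no} upper bound whatsoever under \eqref{eq:preorder}. A directed set, in the standard sense, requires that \emph{every} pair of elements possess an upper bound, not merely those pairs that happen to share a quantum center. Your final paragraph slides from ``upper bounds exist only within a fixed fiber $x_q=\text{const}$'' to ``directedness reduces to directedness of the bounded subsets of $\IR^2$,'' and that slide is not licit: what you have actually established is that each fiber $\{(\sigma,x_q,D):\sigma\in\Sigma_0,\ D\subset\IR^2\text{ bounded}\}$ is directed, which is strictly weaker than $\mathcal R$ itself being directed.

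The paper offers no proof beyond ``it is easy to see,'' so there is nothing to compare your reasoning against. Taken at face value with the standard definition, the proposition appears to fail for exactly the reason above; presumably the authors have in mind the weaker (and, for their purposes, sufficient) assertion that the fibers over each fixed $x_q$ are directed, since only this is needed for the net $R\mapsto\A F(\pi(R))$. If you wish to rescue the statement as written, you would need either a different reading of the relation \eqref{eq:preorder} or a nonstandard notion of directedness; as things stand, your construction $R''=(\sigma,x_q,D\cup D')$ works only once $x_q=x'_q$ has been assumed, and that assumption cannot be discharged in general.
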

We shall also consider the properties of $\mathcal R$ under the action of the symmetry
group \eqref{eq:semidirect}. Towards the end of this section we will make use of
the following
	\begin{lemma}\label{lem:covariance} The family of regions of classical space-time
		$\mathcal R$ is stable under the action of the symmetry group $\mathscr G$ of
		the scale-covariant model of quantum space-time given by \eqref{eq:semidirect}.
	\end{lemma}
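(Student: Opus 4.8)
The plan is to read off the induced affine action of the symmetry group on $\IR^4$ from the transpose actions \eqref{eq:tau_on_q} and \eqref{eq:tdilation} on the coordinates, namely $g\cdot x=\lambda\Lambda x+a$ for $g=(\lambda,\Lambda,a)\in\mathscr G$ (the precise convention, e.g.\ whether one takes this map or its inverse, being immaterial for a stability statement), and to show that the image of a region $(\sigma,x_q,D)=\Lambda_\sigma A^{-1}(\{x_q\}\times D)$ under it is again a region of the same form. Since $\mathscr G$ is generated by dilations, Lorentz transformations and translations, I would first dispose of each of these three families and then compose; in fact the computation is short enough to be run directly for a general $g$.

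For the Lorentz factor I would commute $\Lambda$ past the section $\Lambda_\sigma$ by means of Corollary \ref{cor:identity}, giving $\Lambda\Lambda_\sigma=\Lambda_{\Lambda\sigma\Lambda^T}$; the label then becomes $\Lambda\sigma\Lambda^T$, which lies in $\Sigma_0$ because $\Sigma_0$ is a single Lorentz orbit (Proposition \ref{prop:transitive}), while $x_q$ and $D$ are left untouched. The dilation factor is even easier, since $\lambda\Id$ is a scalar and therefore commutes with every matrix:
\begin{equation*}
	\lambda\,\Lambda_\sigma A^{-1}(\{x_q\}\times D)=\Lambda_\sigma A^{-1}\bigl(\{\lambda x_q\}\times\lambda D\bigr)=(\sigma,\lambda x_q,\lambda D),
\end{equation*}
with $\lambda D$ still bounded. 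For the translation I would pull $a$ through $\Lambda_\sigma A^{-1}$ and split the resulting vector along the quantum--classical decomposition \eqref{eq:qc_notation}: writing $\qc v=A\Lambda_\sigma^{-1}a$, the classical half merely shifts the bounded set, $D\mapsto D+v_c$, and the quantum half shifts the point, $x_q\mapsto x_q+v_q$, so that $\Lambda_\sigma A^{-1}(\{x_q\}\times D)+a=(\sigma,x_q+v_q,D+v_c)\in\mathcal R$.

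Assembling these for a general $g=(\lambda,\Lambda,a)$, I would obtain
\begin{equation*}
	g\cdot(\sigma,x_q,D)=\bigl(\Lambda\sigma\Lambda^T,\ \lambda x_q+v_q,\ \lambda D+v_c\bigr),\qquad \qc v=A\Lambda_{\Lambda\sigma\Lambda^T}^{-1}a,
\end{equation*}
which is manifestly an element of $\mathcal R$, proving the lemma. The one genuinely delicate point — and the step I would treat most carefully — is that a combined Poincaré--dilation transformation must not mix the two \emph{classical} directions, along which a region extends through the bounded set $D$, with the two \emph{quantum} directions, along which it is the single point $x_q$; otherwise the product form $\{x_q\}\times D$ is destroyed and the image leaves $\mathcal R$. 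This is exactly what the equivariance of the section in Corollary \ref{cor:identity} secures: conjugation by $\Lambda$ carries the adapted frame $\Lambda_\sigma A^{-1}$ at $\sigma$ onto the adapted frame $\Lambda_{\Lambda\sigma\Lambda^T}A^{-1}$ at $\Lambda\sigma\Lambda^T$ with no leftover transformation. Were there instead a nontrivial residual element fixing $\sstd$, it could contribute an off-diagonal quantum--classical block (of the type $\begin{bmatrix}P&Q\\0&T\end{bmatrix}$ with $Q\neq0$) and spoil the splitting; the triviality of the stabilizers of $\Sigma_0$ established above is therefore precisely what makes the family $\mathcal R$ stable.
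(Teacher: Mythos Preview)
Your proof is correct and follows essentially the same route as the paper: treat the three factors of $\mathscr G$ separately, using Corollary~\ref{cor:identity} to absorb a Lorentz transformation into the section, commuting the scalar dilation through $\Lambda_\sigma A^{-1}$, and pulling the translation through $\Lambda_\sigma A^{-1}$ and splitting along $\qc{\IR^2}$. Your extra paragraph explaining why the quantum--classical splitting is preserved (and tracing this to the triviality of stabilizers underlying Corollary~\ref{cor:identity}) is a useful gloss that the paper leaves implicit, but it does not change the argument.
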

	\begin{proof} Since the group $\mathscr G$ is the semidirect product of three factors,
	we can consider the action of each separately.

	\textit{Lorentz transformations.} For a pure Lorentz transformation $\Lambda\in\Lup$
	and a region $(\sigma,x_q,D)\in\mathcal R$ we have
		$$\Lambda(\sigma,x_q,D) = (\Lambda\sigma\Lambda^T,x_q,D)$$
	where use has been made of corollary \ref{cor:identity}. Since $\Sigma_0$ is stable
	under $\Lup$, it follows that $(\Lambda\sigma\Lambda^T,x_q,D)\in\mathcal R$, and
	this proves stability of $\mathcal R$ under $\Lup$.

	\textit{Translations.} For a translation $T_a(\sigma, x_q,D)=(\sigma, x_q,D) + a$,
	$a\in\IR^4$ we observe that $\Lambda_\sigma A^{-1}$ is invertible, and so
		$$(\sigma,x_q,D) + a = \Lambda_\sigma A^{-1}\left(\{x_q + (A\Lambda_\sigma^{-1}a)_q\}
			\times(D + (A\Lambda_\sigma^{-1}a)_c)\right),$$
	i.e. the stability of $\mathcal R$ under translations.

	\textit{Dilations.} Finally, for a dilation $\Delta_\lambda(\sigma,x_q,D)=\lambda(\sigma,x_q,D)$,
	$\lambda\in\IR^+$, we have
		\begin{equation*}
			\lambda (\sigma,x_q,D) = (\sigma,\lambda x_q,\lambda D),
		\end{equation*}
	which ultimately shows the stability of $\mathcal R$ under $\mathscr G$.
	\end{proof}

It is clear from \eqref{eq:states} that states of optimal localization are independent
from points of the joint spectrum $\sigma\in\Sigma_0$, for one of the roles played
$\Sigma_0$ is to parametrize the irreducible representations of the commutation relations
among the coordinates $q$. One then expects the field algebras associated with the
regions $(\sigma,x_q,D)\in\mathcal R$ to be equivalent for each $\sigma\in\Sigma_0$.
As a consequence of this consideration we may take the bicommutant of the algebra
over $\IC$ generated by the family of Weyl's operators
	\begin{equation}
		W(\omega) = e^{\frac i2 [\phi(\omega)^* + \phi(\omega)^{**}]},\qquad \omega\in\Omega(\{x_q\}\times D),
	\end{equation}
where $\phi(\omega)^{**}\equiv\overline{\phi(\omega)}$ is the closure of $\phi(\omega)$, and denote it with
	\begin{equation}
		\A F(\{x_q\}\times D) := \{W(\omega)\ |\ \omega\in\Omega(\{x_q\}\times D)\}''.
	\end{equation}
We then have a net of field algebras indexed by regions from $\mathcal R$ by simply
defining
\begin{subequations}\label{eq:special_net}
	\begin{equation}
		(\sigma,x_q,D) \mapsto \A F(\{x_q\}\times D),\qquad(\sigma,x_q,D)\in\mathcal R,
	\end{equation}
or, more formally,
	\begin{equation}
		R \mapsto \A F\left(\pi(R)\right), \qquad R\in\mathcal R
	\end{equation}
where $\pi:\mathcal R \to 2^{\IR^4}$ is defined by
	\begin{equation}
		\pi((\sigma,x_q,D)) = \{x_q\} \times D.
	\end{equation}
\end{subequations}

We shall now discuss three main aspects of the net constructed above, namely \emph{isotony},
\emph{locality} and \emph{covariance}. The first property is satisfied by construction,
for it is obvious that
	\begin{equation}
		R_1\subset R_2\quad\Rightarrow\quad \A F(\pi(R_1))\subset\A F(\pi(R_2)),\qquad R_1,R_2\in\mathcal R.
	\end{equation}
Locality is inevitably lost, as shown in the previous sections, for the commutator
of the fields falls off like a Gaussian on space-time separations (cf. eq. \eqref{eq:scale_cov_comm}).
It remains to investigate the property of covariance of the net \eqref{eq:special_net}.
To this end we note that the scale-covariant field $\phi$ of the previous section
satisfies a law of covariance that can be considered as the analogue of Wightman's
second axiom for quantum field theory \cite{wightman64}. Let $L\in\Poin^\uparrow_+$,
$L=(\Lambda,a)$, be any proper Poincaré transformation, and let $\alpha_L$ be the
adjoint action of $\Poin^\uparrow_+$ on the operators on the Hilbert space $\Hilb_F$
of the Fock construction (cf. eq. \eqref{eq:alpha_action}). If $\omega\in\mathcal S(\mathcal E)$
is any state of the \calg\ describing quantum space-time, then
	\begin{align*}
		\alpha_L\phi(\omega) &= \frac1{(2\pi)^{3/2}}\int\left[\omega(e^{ik^\mu q_\mu})\alpha_La(\bv k) +
				\text{h.c.}\right]\de\Omega_m^+(\bv k)\\
			&= \frac1{(2\pi)^{3/2}}\int\left[\omega(e^{ik^\mu (\Lambda q + a\Id)_\mu})a(\bv k) +
				\text{h.c.}\right]\de\Omega_m^+(\bv k),
	\end{align*}
and moreover, by defining the action $\tau_L$ on states by means of
	\begin{equation}
		(\tau_L\omega)(e^{ik^\mu q_\mu}):=\omega(e^{ik^\mu(\tau_Lq)_\mu})
	\end{equation}
we may set
	\begin{equation}\label{eq:alpha_omega}
		\alpha_L\phi(\omega)=\phi(\tau_L^{-1}\omega),
	\end{equation}
or using the shorthand notation $\omega_L:=\tau_L\omega$, also
	\begin{equation}
		\alpha_L\phi(\omega)=\phi(\omega_{L^{-1}}).
	\end{equation}
As claimed earlier, this relation looks like, and so can be regarded as, the analogue
of Wightman's second axiom for quantum field theory \cite{wightman64} for the case
of a scalar neutral field. For a state $\omega_\xi$ of optimal localization, indexed
by $\xi\in\IR^4$, the above relation becomes
	\begin{align*}
		(\tau_L^{-1}\omega_\xi)(e^{ik^\mu q_\mu}) &= \omega_\xi(e^{ik^\mu(\Lambda q+ a\Id)_\mu})\\
			&= \omega_{L\xi}(e^{ik^\mu q_\mu}),\qquad\forall L\in\Pup,
	\end{align*}
i.e.
	\begin{equation}
		\tau_L^{-1}\omega_\xi = \omega_{L\xi},\qquad\forall L\in\Pup.
	\end{equation}
The above result allows us to transfer the action of the proper Poincaré group $\Pup$
from states $\omega\in\mathcal S(\mathcal E)$ to the associated functions $\psi_\omega(x)$
defined by \eqref{eq:psi}, and ultimately to their ``support'' (see \eqref{eq:supp}).
Using \eqref{eq:psi} we determine
	\begin{align*}
		\psi_{\omega_{L^{-1}}}(x) &= \frac1{(2\pi)^4}\int e^{-ik^\mu x_\mu}(\tau_L^{-1}\omega)
				(e^{ik^\mu q_\mu})\de^4k\\
			&= \frac1{(2\pi)^4}\int e^{-ik^\mu (\Lambda^{-1}(x-a))_\mu}\omega(e^{ik^\mu q_\mu})\de^4k,
	\end{align*}
i.e.
	\begin{equation}
		\psi_{\omega_{L^{-1}}}(x) = \psi_\omega(L^{-1}x),\qquad\forall L\in\Pup,
	\end{equation}
and thus we set
	\begin{equation}\label{eq:poin_psi}
		(\tau_L\psi_\omega)(x) := \psi_\omega(Lx),\qquad\forall L\in\Pup.
	\end{equation}
Exploiting the equivalent form \eqref{eq:smeared_cov} for $\phi(\omega)$, we may
give yet another form for the covariance of the field $\phi$. Using \eqref{eq:psi_sigma}
and the result of corollary \ref{cor:identity}, we determine
	\begin{align*}
		(\alpha_L\phi)(\omega) &= \phi(\omega_{L^{-1}})\\
			&= \int\phi(Lx)\psi_\omega^{(\sigma)}(x)\de^4x,\qquad\forall L\in\Pup,
	\end{align*}
and hence
	\begin{equation}
		(\alpha_L\phi)(x) = \phi(Lx),\qquad\forall L\in\Pup.
	\end{equation}

Restricting now our attention to states $\omega$ from $\Omega(\{\xi_q\}\times D)$, for some $\xi_q\in\IR^2$ and $D\subset\IR^2$ bounded, we can propagate the action of the Poincaré group to subsets $R\in\mathcal R$ of $\IR^4$, i.e. those used to index the net of field algebras \eqref{eq:special_net}. With the abuse of notation \eqref{eq:supp}, we see from \eqref{eq:psi_sigma} that the relation
	\begin{equation}
		\supp\psi_\omega^{(\sigma)} = \Lambda_\sigma A^{-1}\supp\psi_\omega
	\end{equation}
between supports must hold, and this proves the covariance for the net under the action of the Poincaré group in the usual sense, for if we now consider $\Lambda\sigma\Lambda^T$, then
	\begin{align}
		\supp\psi_\omega^{(\Lambda\sigma\Lambda^T)} &= \Lambda_{\Lambda\sigma\Lambda^T}
				A^{-1}\supp\psi_\omega\nonumber\\
			&= \Lambda\supp\psi_\omega^{(\sigma)}.
	\end{align}
But according to definition \eqref{eq:psi_sigma}, a generic Poincaré transformation
$L\in\Pup$, $L=(\Lambda,a)$ is such that
	\begin{equation}
		\psi^{(\sigma)}_\omega(L^{-1}x) = \psi^{(\Lambda\sigma\Lambda^T)}_\omega(x-a),
	\end{equation}
and thus
	\begin{equation}\label{eq:supp_poincare}
		\supp\psi_{\tau_L^{-1}\omega}^{(\sigma)} = \Lambda\supp\psi_\omega^{(\sigma)} + a.
	\end{equation}

By repeating all the above steps for the action of the group of dilations we can
prove covariance of the net \eqref{eq:special_net} under the action of the full group
of automorphisms \eqref{eq:semidirect}. In fact, recalling the definition \eqref{eq:delta},
one has
	\begin{align*}
		\delta_\lambda\phi(\omega) &= \frac1{(2\pi)^{3/2}}\int\left[\omega(e^{ik^\mu q_\mu})
				\delta_\lambda a(\bv k)+\text{h.c.}\right]\de\Omega_0(\bv k)\\
			&= \frac\lambda{(2\pi)^{3/2}}\int\left[\omega(e^{ik^\mu (\lambda q)_\mu})
				a(\bv k)+\text{h.c.}\right]\de\Omega_0(\bv k),
	\end{align*}
and hence, for any $\lambda\in\IR^+$,
	\begin{equation}\label{eq:delta_omega}
		\delta_\lambda\phi(\omega) = \lambda\phi(\Delta^{-1}_\lambda\omega),
	\end{equation}
where $\Delta_\lambda^{-1}\omega$ is such that
	\begin{equation}
		(\Delta_\lambda\omega)(e^{ik^\mu q_\mu}):= \omega(e^{ik^\mu(\Delta_\lambda q)_\mu}),
			\qquad\lambda\in\IR^+.
	\end{equation}
For a state $\omega_\xi$, $\xi\in\IR^4$ of optimal localization, the previous identity \eqref{eq:delta_omega} implies the covariance in the form
	\begin{align*}
		(\Delta_\lambda^{-1}\omega_\xi)(e^{ik^\mu q_\mu}) &= \omega_\xi(e^{ik^\mu(\Delta_\lambda^{-1}q)_\mu})\\
			&= \omega_\xi(e^{ik^\mu(\lambda q)_\mu})\\
			&= \omega_{\lambda\xi}(e^{ik^\mu q_\mu}),
	\end{align*}
i.e.
	\begin{equation}
		\Delta_\lambda^{-1}\omega_\xi = \omega_{\lambda\xi},\qquad\forall \lambda\in\IR^+.
	\end{equation}
The action of a dilation on the function $\psi_\omega(x)$ associated with the state $\omega\in\mathcal S(\mathcal E)$ is, as before, a consequence of its definition \eqref{eq:psi}, viz.
	\begin{align*}
		\psi_{\Delta_\lambda^{-1}\omega}(x) &= \frac1{(2\pi)^4}\int e^{-ik^\mu x_\mu}
				(\Delta_\lambda^{-1}\omega)(e^{ik^\mu q_\mu})\de^4k\\
			&= \lambda^{-4}\psi_\omega(\lambda^{-1}x),
	\end{align*}
and thus we may define the action
	\begin{equation}\label{eq:delta_psi}
		(\Delta_\lambda\psi_\omega)(x) = \lambda^4\psi_\omega(\lambda x),\qquad\forall\lambda\in\IR^+.
	\end{equation}
In order to retrieve the analogue of \eqref{eq:scaling_dimension} for the scalar
field on quantum space-time we recall that
	\begin{equation*}
		\phi(\omega) = \int\phi(x)\psi_\omega(x)\de^4x,
	\end{equation*}
whence
	\begin{align*}
		\delta_\lambda\phi(\omega) &= \lambda\phi(\Delta_\lambda^{-1}\omega)\\
			&= \lambda\int\phi(\lambda x)\psi_\omega(x)\de^4x,
	\end{align*}
which suggests the introduction of the family of maps
	\begin{equation}
		(\delta_\lambda\phi)(x)=\lambda\phi(\lambda x),\qquad\lambda\in\IR^+.
	\end{equation}
Setting $\phi'=\delta_\lambda^{-1}\phi$ and $x'=\lambda x$, the last relation implies
	\begin{equation}
		\phi'(x')=\lambda^{-1}\phi(x),
	\end{equation}
which has the same structure of \eqref{eq:scaling_dimension}.

Considering now the functions \eqref{eq:psi_sigma} associated with states $\omega$
acting on the representation of the commutation relation among the coordinates at
$\sigma\in\Sigma_0$, and applying \eqref{eq:delta_psi}, we finally determine
	\begin{equation}
		\psi^{(\sigma)}_{\Delta^{-1}\omega}(x) = \lambda^{-4}\psi_\omega^{(\sigma)}(\lambda^{-1}x),
	\end{equation}
which implies covariance for the support \eqref{eq:supp} in the usual sense, i.e.
	\begin{equation}\label{eq:supp_dilations}
		\supp\psi^{(\sigma)}_{\Delta^{-1}_\lambda\omega}=\lambda\supp\psi^{(\sigma)}_\omega.
	\end{equation}

The last step towards the proof of covariance of the net \eqref{eq:special_net} under
the action of the full group \eqref{eq:semidirect} is performed by considering the
composition between a dilation and a Poincaré transformation. Taking any element
$(\lambda,\Lambda,a)\in\mathscr G$, one gets
	\begin{equation}\label{eq:supp_full_group}
		\supp\psi^{(\sigma)}_{(\tau^{-1}_L\circ\Delta^{-1}_\lambda)\omega}=
			\lambda\Lambda\supp\psi^{(\sigma)}_\omega + a.
	\end{equation}

In order to summarize the former considerations in this section, we state and
prove the following
	\begin{theorem}\label{thm:net_prop} The net of field algebras \eqref{eq:special_net}
		has the following properties.
			\begin{enumerate}[i.]
				\item \parbox{2.5cm}{isotony:} $R_1\subset R_2\quad\Rightarrow\quad
					\A F(\pi(R_1))\subset \A F(\pi(R_2)),\quad R_1,R_2\in\mathcal R$;
				\item \parbox{2.5cm}{covariance:} $g\A F(\pi(R))=\A F(\pi(gR)),\qquad
					\forall R\in\mathcal R,g\in\mathscr G$.
			\end{enumerate}
	\end{theorem}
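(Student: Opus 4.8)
The plan is to dispose of isotony directly and then reduce covariance to the three generating types of the semidirect product $\mathscr G=\mathscr D\rtimes\Poin$. Isotony is forced by the construction: if $R_1\subset R_2$ then the preorder \eqref{eq:preorder} gives a common $x_q$ and $D_1\subset D_2$, so that $\Omega(\{x_q\}\times D_1)\subset\Omega(\{x_q\}\times D_2)$ by \eqref{eq:states}. The generating families of Weyl operators then nest, and monotonicity of the bicommutant yields $\A F(\pi(R_1))\subset\A F(\pi(R_2))$.

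For covariance I would first observe that every symmetry is implemented by a \emph{spatial} automorphism of $\Bound(\Hilb_F)$: the Poincaré part acts by $\alpha_L=\Ad U(L)$ and the dilations by $\delta_\lambda=\Ad V(\lambda)$, with $V(\lambda)$ the Fock unitary underlying \eqref{eq:delta}. Since an inner automorphism commutes with the double commutant, $\Ad U\,(\{W(\omega)\}'')=\{\Ad U\,W(\omega)\}''$, it is enough to track the generating Weyl operators. Writing $\Phi(\omega)=\frac12[\phi(\omega)^*+\phi(\omega)^{**}]$ so that $W(\omega)=e^{i\Phi(\omega)}$, the Poincaré covariance \eqref{eq:alpha_omega} gives $\alpha_L W(\omega)=W(\tau_L^{-1}\omega)$ with no extra factor, so the task reduces to showing that as $\omega$ ranges over $\Omega(\pi(R))$ the transformed states range over $\Omega(\pi(gR))$.

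The bridge between the abstract states and the indexing regions is the identity $\supp\psi_\omega^{(\sigma)}=\Lambda_\sigma A^{-1}\supp\psi_\omega$ together with the support laws \eqref{eq:supp_poincare}, \eqref{eq:supp_dilations} and \eqref{eq:supp_full_group}, which I would match against the geometric action on $\mathcal R$ furnished by Lemma \ref{lem:covariance}. For a pure Lorentz $\Lambda$ one has $\Lambda(\sigma,x_q,D)=(\Lambda\sigma\Lambda^T,x_q,D)$, so $\pi(gR)=\pi(R)$, and \eqref{eq:supp_poincare} with $a=0$ shows the generators are merely permuted, leaving the algebra fixed. For a translation $T_a$, \eqref{eq:supp_poincare} with $\Lambda=\Id$ together with the computation $(\sigma,x_q,D)+a=(\sigma,x_q+(A\Lambda_\sigma^{-1}a)_q,D+(A\Lambda_\sigma^{-1}a)_c)$ of Lemma \ref{lem:covariance} reproduces exactly $\pi(T_aR)$. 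For a dilation, \eqref{eq:supp_dilations} gives $\supp\psi^{(\sigma)}_{\Delta_\lambda^{-1}\omega}=\lambda\,\supp\psi^{(\sigma)}_\omega$, matching $\Delta_\lambda(\sigma,x_q,D)=(\sigma,\lambda x_q,\lambda D)$; the general element then assembles through \eqref{eq:supp_full_group} and the composition law of $\mathscr G$, giving $g\A F(\pi(R))=\A F(\pi(gR))$.

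The hard part will be the dilation sector. First, \eqref{eq:delta_omega} carries an explicit factor, $\delta_\lambda W(\omega)=e^{i\lambda\Phi(\Delta_\lambda^{-1}\omega)}$, so the image is not at once a generator of the target algebra; I expect to absorb the factor by noting that $\Phi$ is linear in the smearing function and that $\Omega(\{x_q\}\times D)$ contains every positive rescaling of each profile $f$, whence $e^{i\lambda\Phi(\omega_{f,\xi})}=W(\omega_{\lambda f,\xi})$ lies again in the same family. Second, and more delicate, under boosts and dilations the Gaussian width of the base function $\psi_\omega$ is altered, so the transformed state is not literally an optimal-localization state of the reference width $\norm{\sstd}$, but one relative to a rescaled point of $\Sigma_0$. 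The point to secure is that the net depends only on the $\psi^{(\sigma)}$-support region and not on this width---this is precisely the $\sigma$-independence of the field algebras anticipated before \eqref{eq:special_net}---so that arguing with the support laws \eqref{eq:supp_poincare}--\eqref{eq:supp_full_group} rather than with the functions themselves is legitimate. Once this width-invariance is in hand, the three cases combine via the semidirect product to give covariance on all of $\mathscr G$.
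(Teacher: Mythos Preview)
Your strategy is sound and uses the same ingredients as the paper---the support laws \eqref{eq:supp_poincare}, \eqref{eq:supp_dilations}, \eqref{eq:supp_full_group} and Lemma~\ref{lem:covariance}---but the paper organizes the argument differently and more economically. Instead of splitting $g$ into Lorentz, translation and dilation factors, the paper treats a general $g=(\lambda,\Lambda,a)\in\mathscr G$ in one stroke: from $(g\phi)(\omega)=\lambda\,\phi(g^{-1}\omega)$ and $\supp\psi_{g^{-1}\omega}=g\,\supp\psi_\omega$ it reads off $g^{-1}\omega\in\Omega(\pi(gR))$ and hence $g\A F(\pi(R))\subset\A F(\pi(gR))$. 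The reverse inclusion is then obtained by the group trick $\A F(\pi(gR))=gg^{-1}\A F(\pi(gR))\subset g\A F(\pi(R))$, i.e.\ one applies the already-established one-sided inclusion with $g^{-1}$ at the region $gR$. This avoids your case-by-case bookkeeping and, in particular, the need to argue that $\omega\mapsto g^{-1}\omega$ is a bijection between the two state families.

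The two concerns you flag are real but are absorbed more cheaply than you suggest. For the scalar prefactor, note that $\Phi(g^{-1}\omega)$ is essentially self-adjoint, so the entire one-parameter unitary group $t\mapsto e^{it\Phi(g^{-1}\omega)}$ already lies in $\{W(g^{-1}\omega)\}''$; thus $e^{i\lambda\Phi(g^{-1}\omega)}\in\A F(\pi(gR))$ without any rescaling of $f$. As for the Gaussian width, the paper's argument is run purely at the level of supports via \eqref{eq:supp_full_group}, and the map $\pi$ in \eqref{eq:special_net} deliberately forgets the $\sigma$-label; once one only claims an \emph{inclusion} of generators (not an equality of state families) and then closes up with the group trick, the width issue does not need to be confronted head-on.
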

	\begin{proof} The first property is obvious and was discussed earlier, therefore
		it remains to show that covariance holds. To this end we take a state
		$\omega\in\Omega(\pi(R))$, $R\in\mathcal R$ and a $g\in\mathscr G$,
		$g=(\lambda,\Lambda,a)$, and so we get (see eqs \eqref{eq:alpha_omega} and
		\eqref{eq:delta_omega})
			\begin{equation*}
				(g\phi)(\omega) = \lambda\phi(g^{-1}\omega).
			\end{equation*}
		Considering the functions $\psi_\omega$ and $\psi_{g^{-1}\omega}$ associated
		with $\omega$ and $g^{-1}\omega$ respectively we also have (see eq. \eqref{eq:supp_full_group}
		and lemma \ref{lem:covariance})
			\begin{equation*}
				\supp\psi_{g^{-1}\omega} = g\supp\psi_\omega,
			\end{equation*}
		whence $g^{-1}\omega\in\Omega(\pi(gR))$, and therefore
			\begin{equation*}
				g\A F(\pi(R)) \subset \A F(\pi(gR)).
			\end{equation*}
		But $\A F(\pi(gR))=gg^{-1}\A F(\pi(gR))$, and thus
			\begin{equation*}
				\A F(\pi(gR)) \subset g\A F(\pi(R)),
			\end{equation*}
		which proves the assertion.
	\end{proof}

We now turn our attention to the properties of the vacuum state $\Omega\in\Hilb_F$
relatively to the net \eqref{eq:special_net}. The embedding \eqref{eq:embedding}
allows us to define a family of real Hilbert subspaces of $\Hilb$, namely
	\begin{equation}
		K(\{x_q\}\times D) := \{E\psi_\omega\ |\ \omega\in\Omega(\{x_q\}\times D)\},
	\end{equation}
for any $x_q\in\IR^2$ and $D\subset\IR^2$ bounded. We may use the set of regions
$\mathcal R$ to index this family of subspaces, for we may introduce the mapping
	\begin{equation}
		R \mapsto K(\pi(R)),\qquad R\in\mathcal R,
	\end{equation}
the map $\pi$ being that defined in \eqref{eq:special_net}.
	\begin{proposition}\label{prop:sep_not_cyc}The vacuum vector $\Omega\in\Hilb_F$
		is separating but not cyclic for $\A F(\pi(R))$, for any $R\in\mathcal R$.
	\end{proposition}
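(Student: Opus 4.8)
The plan is to transport the whole question to the one-particle real subspace $K(\pi(R))\subset\Hilb$ and to invoke the standard second-quantization dictionary for free fields: the vacuum $\Omega\in\Hilb_F$ is cyclic for $\A F(\pi(R))$ if and only if $\overline{K(\pi(R))+iK(\pi(R))}=\Hilb$, and separating if and only if $K(\pi(R))\cap iK(\pi(R))=\{0\}$. Since the functions $\psi_\omega$ attached to optimal-localization states do not depend on the point $\sigma\in\Sigma_0$, it suffices to work in the standard representation associated with $\sstd$, the general case following by the covariance already established in Theorem \ref{thm:net_prop}.

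First I would make the one-particle vectors explicit. By \eqref{eq:psi_scale_cov} each generator of $K(\{x_q\}\times D)$ comes from a test function that factorizes as $\psi_{\omega_{f,\xi}}(x)=f(x_c)\,G_\xi(x_q)$, with $G_\xi$ the \emph{fixed} Gaussian in the quantum variables centred at $\xi=x_q$. Taking the four-dimensional Fourier transform and restricting to the forward cone $k_0=\norm{\bv k}$, with $\bv k=(k_1,k_2,k_3)$ and the splitting $k_q=(k_0,k_1)$, $k_c=(k_2,k_3)$ dictated by \eqref{eq:qsstd}, one obtains
\begin{equation*}
	(E\psi_{\omega_{f,\xi}})(\bv k)=\widehat f(k_c)\,h(\bv k),
\end{equation*}
where $h$ is the restriction to the cone of the Gaussian in the quantum momenta. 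The factor $h$ depends on $\bv k$ only through $k_0=\norm{\bv k}$ and $k_1$; hence it never vanishes, $|h|^2$ is even in $k_1$, and both $h$ and the cone (with its measure) are invariant under the reflection $k_c\mapsto-k_c$. Thus every vector of $K(\pi(R))$ carries the \emph{same} quantum profile $h$ and depends on $k_1$ only through it.

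Non-cyclicity is then immediate. Choosing any nonzero $\chi\in L^2(\IR)$ odd in $k_1$ and any $\widehat{f_0}$, put $u(\bv k)=\chi(k_1)\widehat{f_0}(k_c)\,h(\bv k)$. For every generator,
\begin{equation*}
	\langle E\psi_{\omega_{f,\xi}},u\rangle=\int_{\IR^2}\overline{\widehat f(k_c)}\,\widehat{f_0}(k_c)\left[\int_{\IR}\chi(k_1)\,|h(\bv k)|^2\,\frac{\de k_1}{2\norm{\bv k}}\right]\de^2k_c=0,
\end{equation*}
since the inner integrand is odd in $k_1$; the same vanishing holds with $iE\psi_{\omega_{f,\xi}}$ in place of $E\psi_{\omega_{f,\xi}}$. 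Hence $u\perp\bigl(K(\pi(R))+iK(\pi(R))\bigr)$, the sum is not dense, and $\Omega$ is not cyclic.

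Separating is the delicate point. Because $h$ never vanishes, $v=G_1(k_c)h\in K(\pi(R))$ together with $v=iG_2(k_c)h\in iK(\pi(R))$ forces $G_1=iG_2$ in the $L^2$ space of the $k_c$-plane carrying the weight obtained by integrating $|h|^2$ over $k_1$; that weight is even in $k_c$. So it is enough to show $V\cap iV=\{0\}$, where $V$ is the closed real span of the $\widehat f$. The crux is to read off reality through the \emph{right} reflection: the smearing functions $f$ are real, so each $\widehat f$ obeys the Hermitian symmetry $\widehat f(-k_c)=\overline{\widehat f(k_c)}$ under $k_c\mapsto-k_c$, which—unlike the space-time reflection $k\mapsto-k$ relating the forward and backward cones—maps the forward cone to itself, so the symmetry survives in the closed span $V$. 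Consequently every $G\in V$ is Hermitian and every $G\in iV$ is anti-Hermitian under this reflection, whence $G(-k_c)=\overline{G(k_c)}=-\overline{G(k_c)}$ gives $G=0$. Therefore $K(\pi(R))\cap iK(\pi(R))=\{0\}$ and $\Omega$ is separating. I expect this final step to be the main obstacle: one must recognise that the operative reality constraint is the forward-cone-preserving reflection $k_c\mapsto-k_c$ rather than $k\mapsto-k$, the latter being precisely what keeps the vacuum cyclic for the \emph{global} algebra.
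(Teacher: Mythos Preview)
Your argument is correct and uses the same reduction to the one-particle real subspace $K(\pi(R))$ via Lemma~\ref{lem:R_of_K} that the paper uses. The paper's own proof is much terser: for the separating property it says only that this ``follows immediately from the definition of $K(\pi(R))$'', and for non-cyclicity it asserts, without exhibiting anything, that one can choose an element $f\in\Hilb$ with $f(p)=\overline{f(-p)}$ that is not a limit of elements of $K(\pi(R))$ (the point being that if $K+iK$ were dense, every Hermitian one-particle vector would lie in $K$, whereas the fixed Gaussian profile prevents this).

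Your route differs in execution rather than in spirit. For non-cyclicity you produce an explicit vector orthogonal to $K+iK$ by exploiting the parity of $|h|^2/k_0$ in $k_1$; this is more concrete than the paper's existence claim and bypasses the ``Hermitian projection'' step. For separating you actually supply the argument the paper omits: recognising that the relevant reality constraint is the Hermitian symmetry under the forward-cone-preserving reflection $k_c\mapsto-k_c$ (with the weight $w(k_c)=\int|h|^2\,dk_1/2k_0$ even, so that this reflection is an isometry on the reduced $L^2$ space) is precisely what makes ``immediate from the definition'' honest. The isometry $G(k_c)\mapsto h(\bv k)G(k_c)$ you invoke to transport the question from $K$ to $V$ is the only point that deserves one more line of justification, but it is straightforward.
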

	\begin{proof} Thanks to lemma \eqref{lem:R_of_K} it suffices to show that $K(\pi(R))$
		is separating, $R\in\mathcal R$, in order to prove that $\Omega\in\Hilb_F$ is
		separating for $\A F(\pi(R))$. But this property follows immediately from the
		definition of $K(\pi(R))$, for any $R\in\mathcal R$.

		To prove that $\Omega\in\Hilb_F$ is not cyclic for $\A F(\pi(R))$ for any $R\in\mathcal R$,
		we recall that the functions $\psi_\omega$ associated with states $\omega\in\Omega(\pi(R))$,
		$R\in\mathcal R$ have the structure \eqref{eq:psi_struct}, i.e. the product of
		a function in $\D(D)$ and a Gaussian with fixed \emph{center}. Thus it is possible
		to choose an element $f\in\Hilb$ satisfying
			\begin{equation*}
				2f(p) = f(p)+\overline{f(-p)}
			\end{equation*}
		that is not the limit of any sequence of elements in $K(\pi(R))$, for any $R\in\mathcal R$.
	\end{proof}

The above result applies to the field algebra $\A F(\{x_q\}\times\IR^2)$ for any 
fixed $x_q\in\IR^2$ as well. But if fails whenever one considers the field algebra
generated by the union of $\A F(\{x_q\}\times D)$, with $x_q$ varying inside some
open (and possibly bounded) subset of $\IR^2$, say a neighborhood $D_q\in\mathcal N(\xi_q)$
of some point $\xi_q\in\IR^2$ of the real plane, the region $D\subset\IR^2$ being
kept fixed. This claim can be stated more precisely in the form of a theorem, which
is another of the main results of this paper, but before doing so we state and prove
three lemmas that will simplify the main proof.
	\begin{lemma}\label{lem:completeness} Let $g_\xi\in\Sch(\IR^2,\IR)$ be the Gaussian
		function centered at $\xi\in\IR^2$, i.e. $g_\xi=(2\pi)^{-1}e^{-\frac12\norm{x-\xi}_2^2}$,
		$D_q\in\Neigh(\xi_0)$ a neighborhood of $\xi_0\in\IR^2$, $\mathcal G(D_q)\subset\Sch(\IR^2,\IR)$
		the set of functions $\mathcal G(D_q):=\{g_\xi\ |\ \xi\in D_q\}$ and $\Sch(\IR^2\times D_c,\IR)
		:= \{f\in\Sch(\IR^4,\IR)\ |\ \supp f\subset \IR^2\times D_c\}$ for any non-empty
		open subset $D_c\subset\IR^2$. Then
			\begin{equation}
				\Sch(\IR^2\times D_c,\IR) \subset \overline{\Span_{\IR}\mathcal G(D_q)\otimes \D(D_c)}.
			\end{equation}
	\end{lemma}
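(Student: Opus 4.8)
The plan is to factor the approximation across the two $\IR^2$ factors of $\IR^4=\IR^2\oplus\IR^2$: in the \emph{quantum} variable $x_q$ I would exploit completeness of the Gaussian translates $\mathcal G(D_q)$, while in the \emph{classical} variable $x_c$ I would control the support so as to land inside $\D(D_c)$. Throughout I read the closure $\overline{\,\cdot\,}$ in the norm that the subsequent separating/non-cyclic argument actually consumes, namely the one-particle Hilbert norm into which the embedding $E$ maps continuously; since $E$ factors through $L^2(\IR^4)$, it suffices to prove $L^2$-density and then transport the statement by continuity of $E$.

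First I would isolate the key spanning fact: the real span of $\{g_\xi\mid\xi\in D_q\}$ is dense in $L^2(\IR^2,\IR)$. Suppose $\phi\in L^2(\IR^2,\IR)$ annihilates every $g_\xi$ with $\xi\in D_q$. Since $g_0$ is even, $\langle\phi,g_\xi\rangle=(\phi*g_0)(\xi)$, so the convolution $F:=\phi*g_0$ vanishes on the open set $D_q$. Now $\widehat F=\widehat\phi\,\widehat{g_0}$ with $\widehat{g_0}(k)=e^{-\frac12\norm{k}_2^2}$, and because $e^{R\norm{k}_2}\widehat{g_0}\in L^2$ for every $R>0$, the product $e^{R\norm{k}_2}\widehat F$ lies in $L^1$ by Cauchy--Schwarz; hence $F$ extends to an entire function on $\IC^2$. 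Vanishing on $D_q$ then forces $F\equiv0$ by the identity theorem, so $\widehat\phi\,\widehat{g_0}=0$, and since $\widehat{g_0}$ is nowhere zero we conclude $\widehat\phi=0$, i.e. $\phi=0$. By Hahn--Banach the real span of $\mathcal G(D_q)$ is dense in $L^2(\IR^2,\IR)$.

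Next I would reduce the four-dimensional problem to this fact by a cutoff-plus-tensor argument. Given $f\in\Sch(\IR^2\times D_c,\IR)$, whose support lies in the open set $\IR^2\times D_c$, I first choose $\chi\in\D(D_c)$ equal to $1$ on a large compact subset of $D_c$; along an exhaustion of $D_c$ by compacts, dominated convergence gives $f\cdot(1\otimes\chi)\to f$ in $L^2$, which reduces matters to the case where the $x_c$-support of $f$ is compactly contained in $D_c$. For such $f$, products of Hermite bases furnish finite sums $\sum_j u_j\otimes w_j\to f$ in $L^2(\IR^4)$ with $u_j,w_j\in\Sch(\IR^2)$; multiplying in the $x_c$-variable by a fixed $\chi'\in\D(D_c)$ that equals $1$ on the $x_c$-support of $f$ is continuous and fixes $f$, so $\sum_j u_j\otimes(\chi'w_j)\to f$ with each $\chi'w_j\in\D(D_c)$. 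Finally I approximate each $u_j\in\Sch(\IR^2)\subset L^2(\IR^2,\IR)$ by finite real combinations of $g_\xi$, $\xi\in D_q$, using the spanning fact; tensoring with the fixed $\chi'w_j$ is continuous, and assembling the double approximation exhibits $f$ as an $L^2$-limit of elements of $\Span_\IR\mathcal G(D_q)\otimes\D(D_c)$.

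The main obstacle is pinning down the topology so that the two approximations compose cleanly while the support constraint in $x_c$ is preserved; the Gaussian completeness is naturally and robustly an $L^2$ statement, since the analyticity/annihilator argument delivers exactly that, so I would run the whole argument in $L^2(\IR^4)$ and invoke continuity of $E$ rather than attempting a direct Schwartz-topology density. Should a genuine $\Sch$-topology closure be desired, the same engine still works after observing that $\xi$-difference quotients of $g_\xi$ converge in $\Sch(\IR^2)$ to Gaussian-times-polynomial functions, whose span is the (translated) Hermite system and hence dense in $\Sch(\IR^2)$; but for the present purpose the $L^2$ version is all that proposition \ref{prop:sep_not_cyc} requires.
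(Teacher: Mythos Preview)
Your argument is correct and rests on the same analytic core as the paper's proof, but the two routes differ in where the analyticity enters. The paper first passes from the translates $g_\xi$ to their $\xi$-derivatives at $\xi_0$, thereby placing all the Hermite-type functions $x^I g_0$ in the closure of $\Span_\IR\mathcal G(D_q)$, and then shows these are total in $L^2(\IR^2)$ by the moment-generating-function trick $\Phi(z)=(e^{\langle z,s\rangle}g_0,\phi)$; the four-variable statement is obtained by expanding $\psi$ along this total system with $x_c$-dependent coefficients $\psi_I\in\D(D_c)$. You instead bypass the Hermite intermediary and argue directly that $\xi\mapsto\langle\phi,g_\xi\rangle=(\phi*g_0)(\xi)$ extends to an entire function on $\IC^2$, so its vanishing on the open set $D_q$ forces $\phi=0$; your tensor step then proceeds by cutoffs and finite Hermite sums. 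Your version is a little cleaner on two counts: it avoids having to justify that difference quotients of $g_\xi$ converge in the relevant topology (which the paper's MacLaurin step tacitly uses), and it is explicit that the closure is taken in $L^2$, which is exactly what the embedding $E$ and the subsequent separating/non-cyclic argument consume. The paper's decomposition $\psi=\sum_I\psi_I\,h_I$ is slicker once the Hermite system is in hand, but your cutoff-and-approximate scheme is more transparent about why the $x_c$-factors land in $\D(D_c)$.
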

	\begin{proof} For simplicity let us assume that $\xi_0=0$. For any $f\in\D(D_c)$
		the set $\Span_{\IR}(\mathcal G(D_q)\otimes\{f\})$ is a vector subspace of
		$\Span_{\IR}\mathcal G(D_q)\otimes \D(D_c)$. Each translation $\xi\in D_q\smallsetminus\{0\}$
		of the Gaussian function $g_\xi$ may be expanded in a MacLaurin series, viz.
			\begin{equation*}
				g_\xi(x) = \sum_{k=0}^{\infty}\frac1{k!}\sum_{I\in\{1,2\}^k}
					\left(\pder{}{\eta^I}\At_{0}g_\eta(x)\right)\xi^I,
			\end{equation*}
	where $I$ is a multi-index. Introducing the family of multi-indices $\mathcal I
	:= \ds\bigcup_{k=0}^\infty\{1,2\}^k$ we thus have that $\Span_{\IR}\left(\mathcal G(D_q)
	\otimes\{f\}\right)$ contains all the functions of the form $s^Ig_0\otimes f$,
	$I\in\mathcal I$, where $s^i$ is the operator of multiplication by $x^i$, i.e. $(s^if)(x)=x^if(x)$.
	Now the set $\{h_I\ |\ h_I = s^Ig_0\}_{I\in\mathcal I}$ is total in $L^2(\IR^2)$,
	for if we take $\phi\in L^2(\IR^2)$ such that $(h_I,\phi)=0$ for all $I\in\mathcal I$,
	then the function
		\begin{align*}
			\Phi(z) &= (e^{\langle z,s\rangle}g_0,\phi)\\
				&= \sum_{k=0}^\infty\frac1{k!}\sum_{I\in\{1,2\}^k}(s^Ig_0,\phi)\bar z^I\\
				&= \sum_{k=0}^\infty\frac1{k!}\sum_{I\in\{1,2\}^k}(h_I,\phi)\bar z^I,
		\end{align*}
	would be equal to 0 everywhere on $\IC$. But for $z=iy$, with $y\in\IR^2$, $\Phi$
	becomes the Fourier transform of $\phi g_0$, and since $g_0$ is non vanishing it
	must be $\phi=0$. Thus for each $f\in\D(D_c)$ the vector space $\Span_{\IR}\mathcal G(D_q)
	\otimes \D(D_c)$ contains a total set $\{h_I\otimes f\}_{I\in\mathcal I}$ for
	$L^2(\IR^2)\otimes\{f\}$.

	Given $\psi\in\Sch(\IR^2\times D_c)$ we may consider the family of functions
	$\{\psi_I(x_c)\}_{I\in\mathcal I}$ given by
		\begin{equation*}
			\psi_I(x_c) := \int h_I(x_q)\psi(x_q,x_c)\de^2 x_q,\qquad I\in\mathcal I,
		\end{equation*}
	that are obviously such that $\{\psi_I(x_c)\}_{I\in\mathcal I}\subset\D(D_c)$.
	Besides, the previous discussion on completeness implies
		\begin{equation}
			\psi(x_q,x_c) = \sum_{I\in\mathcal I} \psi_I(x_c)h_I(x_q),\qquad\forall x_q
				\oplus x_c\in\IR^2\times D_c,
		\end{equation}
	and this shows that $\psi\in\overline{\Span_{\IR}\mathcal G(D_q)\otimes \D(D_c)}$.
	\end{proof}
	\begin{lemma}\label{lem:trivial_subspace} Let $D^{(3)}\subset\IR^3$ be a non-empty
		simply connected and bounded open subset of $\IR^3$ with regular boundary, and
		let $\mathcal N$ be the set of \emph{nice} regions in $\IR\times D^{(3)}$, i.e.
		the set of all the regions $O\subset \IR\times D^{(3)}$ satisfying the hypotheses
		of theorem \ref{thm:araki}. Then
			$$\bigcap_{O\in \mathcal N}O'=\varnothing.$$
	\end{lemma}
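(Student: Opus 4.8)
The plan is to read $O'$ as the spacelike (causal) complement of the region $O\subset\IR^4$, so that $\bigcap_{O\in\mathcal N}O'$ is precisely the set of events spacelike to \emph{every} nice region contained in the tube $\IR\times D^{(3)}$. To show this intersection is empty it suffices to exhibit, for each $p\in\IR^4$, a single nice region $O\in\mathcal N$ with $p\notin O'$; equivalently, a nice region $O$ to which $p$ is causally connected, since $p\notin O'$ means exactly that $p$ is \emph{not} spacelike to all of $O$. The whole argument then reduces to a covering of $\IR^4$ by future light-cones emanating toward the tube.

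First I would fix $p=(t_0,\bv x_0)\in\IR^4$ and choose any interior spatial point $\bv y\in D^{(3)}$. The worldline $\{(t,\bv y)\mid t\in\IR\}$ lies inside the tube and extends over all times, so I may pick $t_1>t_0+\norm{\bv x_0-\bv y}$, placing the event $(t_1,\bv y)$ strictly in the open future light-cone of $p$; note this works for \emph{every} $p$, however distant spatially, simply by taking $t_1$ large enough. Next I would invoke the openness of $D^{(3)}$ together with the openness of the light-cone to shrink a region around $(t_1,\bv y)$ — a small double cone or ball centred there, say — until it is simultaneously contained in $\IR\times D^{(3)}$ and in the open future cone of $p$. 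The point of the construction is that such a small region can be arranged to be a \emph{nice} region in the sense of the hypotheses of theorem \ref{thm:araki}, and hence to belong to $\mathcal N$.

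With $O$ so chosen, every event of $O$ lies in the open future cone of $p$ and is therefore not spacelike to $p$; consequently $p\notin O'$, and a fortiori $p\notin\bigcap_{O\in\mathcal N}O'$. Since $p\in\IR^4$ was arbitrary, the intersection is empty, which is the assertion.

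The hard part will be the verification, only sketched in the construction above, that arbitrarily small regions located at arbitrary times and centred at interior points of the tube genuinely qualify as nice regions of theorem \ref{thm:araki}. Concretely one must check that a small double cone (or ball) inherits the required features — boundedness, regular boundary, and the causal-completeness and simple-connectivity conditions — from the ambient data; here the regularity and simple connectivity of $D^{(3)}$, together with the fact that $\bv y$ is an \emph{interior} point, are exactly what allow the shrinking argument to terminate inside $\mathcal N$, so that the family is rich enough to realize the cone-covering of $\IR^4$ described above.
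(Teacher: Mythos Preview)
Your argument is correct, and it is the direct version of what the paper does by contradiction. The paper assumes $\bigcap_{O\in\mathcal N}O'\neq\varnothing$, tacitly uses the identity $\bigcap_{O\in\mathcal N}O'=\bigl(\bigcup_{O\in\mathcal N}O\bigr)'$, and infers that the union $\bigcup_{O\in\mathcal N}O$ must be bounded (an unbounded-in-time set with bounded spatial extent has empty causal complement); it then reaches a contradiction by noting that a bounded union could be strictly enclosed in a larger nice region inside the tube. You instead fix an arbitrary $p\in\IR^4$ and explicitly construct a nice region in its open future cone, which is cleaner and makes the geometric content --- that the tube $\IR\times D^{(3)}$ reaches every time slice --- completely transparent, whereas the paper's step ``unbounded $\Rightarrow$ empty causal complement'' leaves that observation implicit.

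The ``hard part'' you flag at the end is not a genuine obstacle: a sufficiently small open double cone (or Euclidean ball) centred at $(t_1,\bv y)$ is non-empty, bounded, simply connected, and has smooth boundary, hence satisfies the hypotheses of theorem \ref{thm:araki} outright; the regularity and simple connectivity of $D^{(3)}$ play no role beyond guaranteeing that $D^{(3)}$ is non-empty and open so that an interior $\bv y$ exists.
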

	\begin{proof} Suppose that $\bigcap_{O\in \mathcal N}O'\neq\varnothing$. Then
		$\bigcup_{O\in \mathcal N}O$ must be bounded, for if this were not the case,
		then $\left(\bigcup_{O\in \mathcal N}O\right)'=\varnothing$. Hence there must
		exist a nice region $R\in \mathcal N$ such that $\bigcup_{O\in \mathcal N}O\subsetneq R$,
		which is absurd.
	\end{proof}
	In \ref{app:vn} we have recalled the definition of a net of local von
	Neumann algebras indexed by regions of $\IR^4$. The net of local field algebras
	$O\mapsto\A R(O)$, $O\subset\IR^4$ for the free scalar neutral field is defined
	in terms of the net of real Hilbert subspaces given by \cite{araki64}
		\begin{equation}
			O\mapsto K(O) := \overline{E\D(O,\IR)}.
		\end{equation}
	Thus we set (cf. \eqref{eq:local_field_algebra})
		\begin{equation}
			\A R(O) := \A R(K(O)).
		\end{equation}

	\begin{lemma}\label{lem:time_irred} Let $D^{(3)}$ be as in the previous lemma.
		The field algebra $\A R(\IR\times D^{(3)})$ is irreducible.
	\end{lemma}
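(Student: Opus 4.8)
The plan is to compute the commutant $\A R(\IR\times D^{(3)})'$ by descending to the one-particle real subspace $K(\IR\times D^{(3)})\subset\Hilb$ and then feeding in the geometric input of Lemma \ref{lem:trivial_subspace}. Throughout, for a closed real subspace $K\subset\Hilb$ I write $K'$ for its symplectic complement $\{h\in\Hilb\ :\ \Im\langle h,k\rangle=0\ \ \forall k\in K\}$, and I use the second-quantization functor recalled in \ref{app:vn}, for which $\A R(O)=\A R(K(O))$ and, crucially, $\A R(K)'=\A R(K')$.

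First I would exhaust the tube from the inside by nice regions: choosing $O\in\mathcal N$ with $O\nearrow\IR\times D^{(3)}$, one has $\D(\IR\times D^{(3)},\IR)=\bigcup_{O\in\mathcal N}\D(O,\IR)$, so that
\[
K(\IR\times D^{(3)})=\overline{E\,\D(\IR\times D^{(3)},\IR)}=\overline{\textstyle\bigvee_{O\in\mathcal N}K(O)}.
\]
Passing to symplectic complements converts this closed real span into an intersection,
\[
K(\IR\times D^{(3)})'=\bigcap_{O\in\mathcal N}K(O)'.
\]

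The heart of the argument is to identify the right-hand side with $\{0\}$. Here I would invoke Araki duality for nice regions (theorem \ref{thm:araki}), $K(O)'=K(O')$ where $O'$ is the spacetime causal complement, together with the intersection property of the free-field one-particle net, $\bigcap_{O}K(O')=K\big(\bigcap_{O}O'\big)$. By Lemma \ref{lem:trivial_subspace} the causal complements of the nice regions filling the tube have empty intersection, $\bigcap_{O\in\mathcal N}O'=\varnothing$, whence
\[
K(\IR\times D^{(3)})'=\bigcap_{O\in\mathcal N}K(O')=K(\varnothing)=\{0\}.
\]
The functorial identity $\A R(K)'=\A R(K')$ then gives $\A R(\IR\times D^{(3)})'=\A R(\{0\})=\IC\,\Id$, i.e. the field algebra is irreducible.

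The step I expect to be the main obstacle is precisely the passage $\bigcap_{O}K(O')=\{0\}$: transporting the purely geometric emptiness of Lemma \ref{lem:trivial_subspace} to a statement about one-particle subspaces. This rests on two nontrivial inputs for the free scalar field, namely Haag/Araki duality at the level of nice regions and the intersection (additivity) property of the associated net of real subspaces, both of which must be quoted from the framework of \ref{app:vn} and \cite{araki64}; the remaining manipulations with symplectic complements and with the second-quantization functor are routine. Equivalently one could short-circuit the whole computation by appealing to the timelike-tube theorem to identify $\A R(\IR\times D^{(3)})$ with $\A R(\IR^4)=\Bound(\Hilb_F)$, but the route through Lemma \ref{lem:trivial_subspace} is the one prepared by the preceding lemmas.
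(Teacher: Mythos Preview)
Your argument is correct and runs parallel to the paper's, but organized at the one-particle level: you compute $K(\IR\times D^{(3)})'=\{0\}$ and then invoke $\A R(K)'=\A R(K')$ once at the end, whereas the paper stays at the von Neumann level, using isotony to get $\A R(\IR\times D^{(3)})'\subset\A R(O)'$ for each nice $O$, applying Haag duality $\A R(O)'=\A R(O')$ there, and then intersecting. Both routes rest on exactly the nontrivial input you flag, namely passing from $\bigcap_{O}O'=\varnothing$ to triviality of the corresponding (one-particle or von Neumann) intersection; the paper uses this step just as implicitly as you do.

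One point to tighten: you apply the identity $\A R(K)'=\A R(K')$ with $K=K(\IR\times D^{(3)})$, for which you have just shown $K'=\{0\}$; this $K$ is therefore cyclic but not separating, hence \emph{not} standard, and the version of the duality lemma recorded in the paper's appendix is stated only for standard subspaces. The unrestricted identity does hold for the free Bose field (Araki's lattice theorem in \cite{araki64}), so your step is valid, but you are quoting a stronger statement than the paper provides. The paper avoids this issue by applying duality only to the nice bounded regions $O$, where $K(O)$ is standard by Theorem~\ref{thm:araki}, and taking the intersection afterwards; you could do the same, or simply cite the general form of Araki's duality.
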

	\begin{proof} Let $\mathcal N$ be the set of \emph{nice} regions defined as in
		lemma \ref{lem:trivial_subspace}. For any $O\in \mathcal N$ isotony implies
			$$\A R(O)\subset \A R(\IR\times D^{(3)}),\qquad\forall O\in \mathcal N,$$
		and passing to the commutant one obtains
			$$\A R(\IR\times D^{(3)})'\subset \A R(O)',\qquad\forall O\in \mathcal N.$$
		Since $\A R(O) = \A R(K(O))$ and $K(O)$ is standard for any $O\in \mathcal N$,
		then lemma \ref{lem:araki_haag_duality} implies that $\A R(K(O))' = \A R(K(O)')
		= \A R(K(O'))$, i.e. $\A R(O)' = \A R(O')$. Moreover, for any region $O\subset\IR^4$
		the relation $\A R(O')\subset\A R(O)'$ holds, and therefore
			$$\A R((\IR\times D^{(3)})') \subset \A R(\IR\times D^{(3)})'\subset \A R(O'),
				\qquad\forall O\in \mathcal N,$$
		or equivalently
			$$\A R((\IR\times D^{(3)})') \subset \A R(\IR\times D^{(3)})'\subset
				\bigcap_{O\in \mathcal N}\A R(O'),$$
		and since $\A R(\varnothing) = \{\Id\}$ then lemma \ref{lem:trivial_subspace}
		implies
			$$\{\Id\}\subset\A R(\IR\times D^{(3)})'\subset \{\Id\}.$$
		Hence $\A R(\IR\times D^{(3)})$ is irreducible.
	\end{proof}

We are now ready to prove the following
	\begin{theorem}\label{thm:irred} Let $D_q$ be any neighborhood of a fixed point
		$x_q\in\IR^2$ and let $D_c$ be a non-empty simply connected and bounded open
		subset of $\IR^2$ with regular boundary. The field algebra $\A F(D_q\times D_c)$
		generated by
			$$\A F(D_q\times D_c) := \bigvee_{x_q\in D_q}\A F(\{x_q\}\times D_c)$$
		is irreducible.
	\end{theorem}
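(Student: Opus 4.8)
The plan is to trade the field-algebra language for Araki's real-subspace formalism (recalled in \ref{app:vn}) and then to piggyback on the irreducibility already secured in Lemma \ref{lem:time_irred}. By definition $\A F(D_q\times D_c)=\bigvee_{x_q\in D_q}\A F(\{x_q\}\times D_c)$ is the von Neumann algebra generated by all Weyl operators $W(\omega)$ with $\omega\in\Omega(\{x_q\}\times D_c)$ and $x_q\in D_q$. Each such $W(\omega)$ is a Weyl operator of the free field smeared against the test function $\psi_\omega$ (cf. \eqref{eq:smeared_cov}), so the whole algebra is $\A R(K)$ with
\[
K=\overline{E\,\Span_{\IR}\bigl\{\psi_\omega\ :\ \omega\in\Omega(\{x_q\}\times D_c),\ x_q\in D_q\bigr\}}.
\]
The product form \eqref{eq:psi_struct} of the optimal-localization test functions shows that this generating set is precisely $E$ applied to $\Span_{\IR}\bigl(\mathcal G(D_q)\otimes\D(D_c)\bigr)$, where $\mathcal G(D_q)$ is the family of unit Gaussians centered in the neighborhood $D_q$ (the quantum factor) and $\D(D_c)$ supplies the classical factor.

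The decisive step is to recognize that, although each generator is only ``localized'' at a point $x_q$ in the abuse-of-notation sense \eqref{eq:supp}, the Gaussians have full support in the two quantum directions, so that sweeping the center over the whole neighborhood $D_q$ fills up the entire tube $\IR^2\times D_c\subset\IR^4$. This is exactly the content of Lemma \ref{lem:completeness}: $\Span_{\IR}(\mathcal G(D_q)\otimes\D(D_c))$ is dense in $\Sch(\IR^2\times D_c,\IR)$. Using the continuity of the one-particle embedding $E$ together with the density of the compactly supported $\D(\IR^2\times D_c,\IR)$ in $\Sch(\IR^2\times D_c,\IR)$, I would conclude
\[
K=\overline{E\,\D(\IR^2\times D_c,\IR)}=K(\IR^2\times D_c),
\]
and hence $\A F(D_q\times D_c)=\A R(\IR^2\times D_c)$, the local field algebra of the unbounded tube.

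It remains to invoke isotony and Lemma \ref{lem:time_irred}. By \eqref{eq:qsstd} the two unbounded, non-commuting directions of the tube are the quantum coordinates $q_0^{\sstd}$ and $q_1^{\sstd}$, the former being the time coordinate; hence $\IR^2\times D_c$ contains a cylinder $\IR\times D^{(3)}$ with $\IR$ the time axis and $D^{(3)}$ a bounded, simply connected open ball with smooth boundary lying in the three-space spanned by $q_1^{\sstd}$ and $D_c$. Lemma \ref{lem:time_irred} then gives that $\A R(\IR\times D^{(3)})$ is irreducible, and isotony yields $\A R(\IR\times D^{(3)})\subset\A R(\IR^2\times D_c)=\A F(D_q\times D_c)$. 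Passing to commutants reverses the inclusion, so
\[
\A F(D_q\times D_c)'\subset\A R(\IR\times D^{(3)})'=\IC\,\Id,
\]
which forces $\A F(D_q\times D_c)'=\IC\,\Id$, i.e. irreducibility.

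The step I expect to be the real obstacle is the identification $K=K(\IR^2\times D_c)$. Lemma \ref{lem:completeness} delivers density only in the Schwartz topology, whereas the Araki subspace is built from the compactly supported $\D(\IR^2\times D_c,\IR)$ through the one-particle map $E$; one must therefore verify that the two procedures produce the same closed real subspace in the norm of $\Hilb_F$, i.e. that $E$ is continuous on the relevant spaces and that compactly supported functions stay dense among Schwartz functions supported in the \emph{unbounded} tube. A companion point to check carefully is that, in the coordinates in which $\phi$ is smeared against $\psi_\omega$, the cylinder $\IR\times D^{(3)}$ chosen inside $\IR^2\times D_c$ genuinely meets the hypotheses of Lemma \ref{lem:time_irred} — in particular that the time coordinate really lies in the quantum block of $q^{\sstd}$, so that an honestly time-extended, spatially bounded region is available. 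Once these bridges between the test-function picture and the one-particle geometry are in place, the remaining argument is purely formal.
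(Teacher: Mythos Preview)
Your argument is essentially the paper's own: use Lemma~\ref{lem:completeness} to see that the Weyl operators generating $\A F(D_q\times D_c)$ already produce all the Schwartz test functions supported in the tube $\IR^2\times D_c$, then pick a time-cylinder $\IR\times D^{(3)}$ inside that tube and invoke Lemma~\ref{lem:time_irred} plus isotony and commutants. The one place where you work harder than necessary is the identification $K=K(\IR^2\times D_c)$: the paper never claims equality, only the inclusion $\A R(\overline{E\,\Sch(\IR^2\times D_c,\IR)})\subset\A F(D_q\times D_c)$, which is all the commutant argument needs and which follows directly from the inclusion in Lemma~\ref{lem:completeness}. So the density/topology worry you flag at the end is real for the \emph{equality} you state, but disappears once you notice that one-sided containment suffices.
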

	\begin{proof} It is clear from the definition and lemma \ref{lem:completeness}
		that $\A F(D_q\times D_c)$ contains the field algebra $\A R$ associated with
		the real Hilbert subspace $\overline{E\Sch(\IR^2\times D_c,\IR)}$, where
		$\Sch(\IR^2\times D_c,\IR)$ is the set of functions defined in lemma \ref{lem:completeness}.
		Let $D^{(3)}$ be as in lemma \ref{lem:trivial_subspace}, but such that $D^{(3)}
		\subset\IR\times D_c$. Then the property of isotony requires that
			$$\A R(\IR\times D^{(3)}) \subset \A F(D_q\times D_c).$$
		Taking the commutant of both sides we obtain
			$$\A F(D_q\times D_c)'\subset \A R(\IR\times D^{(3)})',$$
		and since $\A R(\IR\times D^{(3)})$ is irreducible due to lemma \ref{lem:time_irred},
		so must be the field algebra $\A F(D_q\times D_c)$.
	\end{proof}

From a physical point of view, the above result is not surprising, and indeed quite
expected, for the actual supports of the test functions associated with states of
optimal localization are unbounded in the time direction. In fact, a similar behavior
was discovered to hold even in the case of the basic model, by Fredenhagen and Doplicher
in a unpublished joint work. We refrain from giving the proof of this result here,
for it is a simpler variant of the one given in this paper.

We conclude this section by briefly discussing how to index the above net by projections
in the Borel completion of the \calg\ describing the scale-covariant model of quantum
space-time, by merely applying the construction described in \cite{doplicher09}. To
any state $\omega\in\Omega(\{x_q\}\times D)$ (see equations \eqref{eq:states} for the definition
of $\Omega(\{x_q\}\times D)$) we associate the support projection $E_{\{x_q\}\times D}$
of the normal extension $\tilde\omega$ of $\omega$ to the the Borel completion of
the \calg, i.e. the minimal projection such that $\tilde\omega(E_{\{x_q\}\times D})=1$. 
Hence we can replace the set of all the regions of the form $\{x_q\}\times D$, i.e.
the image of the map $\pi$ defined by \eqref{eq:special_net}, with the family of these
projections.
Covariance then propagates from states, namely 
	\begin{equation}
		(g^{-1}\omega)(E_{\pi(gR)}) = 1,\qquad \forall \omega\in\Omega(\pi(R)),
	\end{equation}
for any $R\in\mathcal R$ (cf eq. \eqref{eq:regions}) and $g\in\mathscr G$, whence
\begin{equation}
	gE_{\pi(gR)} = E_{\pi(R)}.
\end{equation}
For a more detailed discussion we refer the reader to \cite{tornetta2011}.

	\section{Conclusions and outlook\label{chap: conclusions}}

In this paper we have investigated the quantum structure of a scale-covariant model
of quantum space-time. Quantum field theory on this model has provided a tool that
allowed us to carry out such an analysis. We conclude this paper by showing how the
massless scalar neutral field introduced throughout the exposition can be further
exploited in order to draw other insightful conclusions.

\subsection{Residual non-commutativity and CFTs in low dimensions}

The reader will recall that, on the scale-covariant model, two out of the four coordinates
of space-time are some of the generators of the center of the \calg. It is clear from
the analysis above that one of the two non-commuting coordinates is always time-like. We
might then restrict our attention to the non-singular part of the joint spectrum
of the commutators to obtain a model for a two-dimensional space-time. One
can then apply the results derived in this paper to the well-developed two-dimensional
theory of conformal fields and consider a free massless scalar neutral field $\phi$.
It is easily seen that it is possible to distinguish between the left and the
right chiral observables, for the compactification of the $1+1$ space-time splits into the product of two independent circles $S^1$. One may use the chiral fields to define nets of local
von Neumann field algebras on $S^1$, i.e. \cite{longo08, rehren01}
	\begin{equation}\label{eq:chiral_nets}
		I\mapsto\A A_L(I),\qquad J\mapsto\A A_R(J),
	\end{equation}
where $I$ and $J$ are any arcs in $S^1$. In general terms, if $I\mapsto\A A(I)$ is
a net of local von Neumann algebras on $S^1$ endowed with a strongly continuous unitary
representation $U_\phi$ of $PSL(2,\IR)$ subjected to the positive-energy spectrum
condition such that $(\Ad U_\phi)\A A(I) = \A A(\phi I)$, one may prove the \emph{Reeh-Schlieder}
property, i.e. the vacuum vector $\Omega_0$ is cyclic and separating for any arc
$I\subset S^1$. Besides, such nets satisfy the property of locality as well, i.e.
$I_1\cap I_2=\varnothing$ implies $\A A(I_1)\subset {\A A(I_2)}'$, provided that
$U$ is irreducible \cite{longo08}.

A net of local von Neumann algebras on the two-dimensional Minkowski space-time may
be constructed by taking the tensor product of the two nets \eqref{eq:chiral_nets}.
If $I,J$ are any two arcs in $S^1$, we define
	\begin{equation}
		I \times J := \{(x^0, x^1)\in\IR^2\ |\ u_-(x^0, x^1) \in I, u_+(x^0, x^1) \in J\},
	\end{equation}
and then set
	\begin{equation}
		O := I \times J,
	\end{equation}
which is a double cone in $\IR^2$. To $O$ we then associate the tensor product algebra given by
	\begin{equation}
		O\mapsto\A A_L(I)\otimes\A A_R(J),
	\end{equation}
and in this way we end up with a net of local algebras indexed by double cones in $\IR^2$.

To conclude this subsection we stress that, since the pair of coordinates we are left with do
not commute, we cannot use them to operate the usual construction of the chiral coordinates,
and obtain the very same local theory, for this would eventually yield a non-local
net of field algebras, as a consequence of what has been shown in this paper about
the non-locality of the scalar field on the scale-covariant model.

\subsection{Deformations of Quantum Spacetime and the full conformally-covariant model}

In this paper we limited our main discussion to the action
of the Poincar\'e and dilations groups, without going any further towards a full conformal symmetry. We will now argue about the reasons of such a limitation.

From the analysis of the quantum conditions in section \ref{sec:sc_quantum_cond}
it follows that the joint spectrum $\Sigma_0$ must satisfy almost the same constraints of the
electromagnetic tensor $F$ in the vacuum. Since the Maxwell equations are covariant
under the full conformal group (see \cite{bateman10} and references therein), it seems
legitimate to assume that the same behavior must be expected from the scale-covariant
model of quantum space-time. The problem to solve is then to find a way of defining the
relativistic ray inversion, i.e. the involutive diffeomorphism
	\begin{equation}\label{eq:inversion_diff}
		I(x)^\mu := \frac{x^\mu}{x^2},\qquad x^2=\eta_{\mu\nu}x^\mu x^\nu,
	\end{equation}
on the scale-covariant model of quantum space-time, since a generic special conformal
transformation is just the composition of a translation, an inversion and another
translation (see e.g. \cite{mirman05}). A moment's thought shows that there are no
trivial ways of defining a quantum equivalent of such a transformation, for the main
obstruction in this direction is the the ordering of the operators in generic functions of the coordinates, like the diffeomorphism given above.

We shall neglect this difficulty for the moment, and try to determine
a plausible automorphism on the \calg\ of the scale-covariant model of quantum space-time by exploiting the results on the free massless scalar neutral
field obtained in this paper along with the work of Hislop and Longo \cite[§2]{longo82}. For the sought
automorphism to be somehow linked to the diffeomorphism \eqref{eq:inversion_diff}
on classical space-time, we require the expression of $\phi(\omega)$ in terms of the
function $\psi_\omega(x)$ associated with $\omega$, namely
	\begin{equation}
		\phi(\omega)=\int\phi(x)\psi_\omega(x)\de^4x,
	\end{equation}
to behave as prescribed in \cite[§2]{longo82}. Thus we may tentatively introduce
the action $\rho$ on $\psi_\omega$ given by
	\begin{equation}
		(\rho\psi_\omega)(x) := \frac1{(x^2)^4}\psi_\omega\left(\frac x{x^2}\right),
	\end{equation}
but we cannot hope to choose $\omega$, even among the pure states, in order to avoid the singularity of \eqref{eq:inversion_diff} on the light-cone through the origin, for we see from
\eqref{eq:psi_scale_cov} that the support of $\psi_\omega$ is non-vanishing on the
boundary of the light-cone.

A way to overcome these obstructions seems to be the introduction of new generators of
the Lie algebra, i.e. to consider a deformation of the Weyl's algebra (cf. \cite[sect. 2]{doplicher96}). How arbitrarily
this can be done, however, depends on the requirement that the \emph{deformed}
model should lead to the desired uncertainty relations among the coordinates.

A second approach might consist of the quantization of the conformal compactification of
the classical Minkowski space-time (see e.g. \cite{jadczyk11} and references therein),
which in our view is not too different from the deformation approach,  for the compactified
Minkowski manifold requires the introduction of two additional conformal coordinates,
i.e. two new generators for the Lie algebra. The quantization of such a model would then proceed by considering the new
conformal coordinates, along with all the conformal invariants that may be constructed
with them, which are essential to define the quantum condition for the model.

Since it appears that a full conformally-covariant model requires an entirely different approach from the one used in this paper, the construction and the analysis of such a model will be pursued elsewhere.

	\section*{Acknowledgements}

Both the authors are deeply grateful to Professor Sergio Doplicher for proposing the idea of conducting a more thorough analysis of the scale-covariant model of quantum space-time and for the concrete opportunity to work on the subject. His constant help and abundant considerateness have also been crucial and must be acknowledged. The first named author wishes to thank Professor Sebastiano Carpi and Dr. Gerardo Morsella for their encouragement and technical help. Moreover, the second named author wishes to express his gratitude to Dr. Marcel Bischoff for his
technical support and fruitful comments on the topic of conformal field theories in lower dimensions.

	\appendix
	\section{Von Neumann algebras and real Hilbert subspaces\label{app:vn}}

In this appendix some basic definitions and results from the theory of von Neumann algebras and real subspaces of Hilbert spaces are collected for the reader's sake. Use of these tools has been made in the analysis of the von Neumann field algebras associated with the free scale-covariant scalar neutral field discussed in section \ref{sec:net}.

Let $\Hilb$ be a Hilbert space, and let $K\subset\Hilb$ be a closed real vector subspace of $\Hilb$. If $K \cap iK={0}$ then $K$ is said to be \emph{separating}; if $\overline{K+iK}=\Hilb$, then $K$ is said to be \emph{cyclic}.
\begin{definition}[Standard subspace] A closed real subspace $K$ of a Hilbert space $\Hilb$ is said to be \emph{standard} if it is both cyclic and separating.\end{definition}

We shall also recall the definition of \emph{standard von Neumann algebras}.
\begin{definition}[Standard von Neumann algebra] Let $\A R$ be a concrete von Neumann algebra on the Hilbert space $\Hilb$, and let $\Omega\in\Hilb$. If $\Omega$ is both cyclic and separating for $\A R$, then $\A R$ is said to be \emph{standard} with respect to $\Omega$.
\end{definition}
If $\A R$ and $\Omega$ are as in the above definition, then $(\A R,\Omega)$ is called a \emph{standard pair}.


\subsection{Local structure of the one-particle Hilbert space}

Let $E:\Sch(\IR^4,\IR) \to \Hilb$ be the embedding given by \cite{guido08,reed75}
\begin{equation}\label{eq:embedding}
	(Ef)(\bv p) := \sqrt{2\pi}\hat f(\sqrt{m^2 + \norm{\bv p}^2},\bv p),
\end{equation}
i.e. the restriction of the Fourier transform of $f\in\Sch(\IR^4,\IR)$ to the hyperboloid $\Omega_m^+$ of mass $m$ contained in the future light-cone, $\Hilb$ denoting the one-particle Hilbert space, i.e. $\Hilb=L^2(\Omega_m^+,\de\Omega_m^+)$. To each open and bounded region $O \subset \IR^4$ we may associate a closed real Hilbert subspace of $\Hilb$, namely \cite{araki64, guido08}
\begin{equation}\label{eq:net_rhs}
	K(O) := \overline{\{Ef\ |\ f\in\D(O,\IR)\}}.
\end{equation}
\begin{definition}[Causal complement] Let $O\subset\IR^4$ be an open region of Minkowski space-time. The \emph{causal complement} of $O$ is the open region $O'\subset\IR^4$ given by
	$$O' := \{x\in\IR^4\ |\ x-y\text{ is space-like for any }y\in O\}.$$
\end{definition}
Among all the regions in $\IR^4$, those that satisfies the property $O=O''$ are somewhat special and are called \emph{causally complete}. Some notable examples of such regions are the families of wedges $\mathcal W$ and double cones $\mathcal K$ \cite{haag96}.

\begin{definition}[Symplectic complement] Let $K$ be a closed real Hilbert subspace of the Hilbert space $\Hilb$. The \emph{symplectic complement} of $K$ is the vector subspace $K'\subset\Hilb$ given by
	$$K' := \{f\in\Hilb\ |\ \Im(f,h) = 0\quad\forall h\in K\}.$$
\end{definition}

A result that may be found in \cite{araki64} (cf. also \cite{guido08}) establishes some interesting properties concerning the net of real Hilbert subspace \eqref{eq:net_rhs}, namely

\begin{theorem} \label{thm:araki} Let $O$ be any non-empty simply connected open region of $\IR^4$ with regular boundary. Then
	\begin{itemize}
		\item $K(O')=K(O)'$;
		\item the real Hilbert subspace $K(O)$ of $\Hilb$ is standard.
	\end{itemize}
\end{theorem}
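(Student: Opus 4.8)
The plan is to establish the two assertions in their natural logical order: first the one-particle \emph{Reeh--Schlieder} property (cyclicity of $K(O)$ for every non-empty open $O$), then the geometric duality $K(O')=K(O)'$, and finally to deduce standardness from these. The point is that once cyclicity and duality are both in hand, standardness is almost immediate: for a closed real subspace the relation $K$ \emph{separating} $\iff$ $K'$ \emph{cyclic} is a general fact about symplectic complements, so $K(O)$ is separating precisely because $K(O)'=K(O')$ is cyclic by Reeh--Schlieder applied to the (non-empty, open) causal complement $O'$. Thus the whole weight of the theorem rests on two analytic inputs, cyclicity and duality.

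For cyclicity I would argue by analyticity. A vector $\xi\in\Hilb$ lies in the complex-orthogonal complement of $K(O)$ exactly when $(\xi,Ef)=0$ for every real $f\in\D(O)$, since $(\xi,iEf)=-i(\xi,Ef)$. Writing $(\xi,Ef)$ as an integral over the mass shell and unfolding $\hat f$ exhibits it as the real boundary value of a function $a\mapsto F_\xi(a)$ which, because the shell lies in the closed forward cone, extends holomorphically to the forward tube $\{a+ib\mid b\in V^+\}$ (the factor $e^{-b\cdot p}$ supplies convergence). A holomorphic function on a tube whose boundary value vanishes on the open set $O$ must vanish identically, by the edge-of-the-wedge theorem; hence $\xi=0$. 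This gives $\overline{K(O)+iK(O)}=\Hilb$, i.e. $K(O)$ is cyclic, and the same argument applied to $O'$ yields cyclicity of $K(O')$.

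One inclusion of the duality is cheap and encodes microcausality. The imaginary part of the one-particle inner product $\Im(Ef,Eg)$ coincides, up to a constant, with the Pauli--Jordan commutator distribution $\Delta(f,g)=\int\Delta(x-y)f(x)g(y)\,\de^4x\,\de^4y$; since $\Delta$ is supported on the closed light-cone, it annihilates pairs with $\supp f\subset O$ and $\supp g\subset O'$ spacelike separated. Hence $Eg\in K(O)'$ for every $g\in\D(O')$, giving $K(O')\subseteq K(O)'$. This, together with the cyclicity of $K(O')$ just proved, already yields the separating property of $K(O)$ and hence, with the cyclicity of $K(O)$, its standardness, independently of the full duality.

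The genuinely hard part is the reverse inclusion $K(O)'\subseteq K(O')$, i.e. Haag duality at the one-particle level, and this is where I expect the main obstacle. My preferred route is to prove it first for a wedge $W$, where modular theory makes it transparent: by the Bisognano--Wichmann identification the modular unitaries of the standard subspace $K(W)$ act as the boosts preserving $W$ while the modular conjugation acts as the reflection $x\mapsto -x$, a $PCT$-type map carrying $W$ onto $W'$, whence $K(W)'=K(W')$ follows from Tomita--Takesaki theory. One then passes to a general causally complete $O$ by realising it as an intersection of wedges, $K(O)=\bigcap_{W\supset O}K(W)$, and dualising; the hypotheses that $O$ be simply connected with regular boundary are exactly what guarantee $O''=O$ and make the intersection tight, so that no spurious element survives in $K(O)'$. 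An alternative, closer to Araki's original treatment, transports everything to Cauchy data on a spacelike slice: the Klein--Gordon evolution identifies $K(O)$ with the closure of the data $(u,\dot u)$ supported in the spatial shadow of $O$, and duality becomes the clean statement that the symplectic complement of data supported in a spatial region is the data supported in its complement. In either approach the delicate issue is the control of closures and of boundary behaviour --- ensuring that the regularity assumption on $\partial O$ suffices to exclude data concentrated on the boundary and to make the wedge-intersection (or shadow-complement) identities exact rather than merely approximate.
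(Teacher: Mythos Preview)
The paper does not give its own proof of this theorem: it is stated in the appendix as a known result and attributed to Araki's 1964 paper (with a pointer also to a later survey), with no argument supplied. There is therefore nothing in the paper to compare your proposal against.

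That said, your outline is a faithful sketch of how the cited literature establishes the result. The Reeh--Schlieder step via holomorphic extension into the forward tube and an identity-type theorem for boundary values is the standard mechanism for cyclicity; the inclusion $K(O')\subseteq K(O)'$ does indeed reduce to the causal support of the Pauli--Jordan function; and you correctly isolate the reverse inclusion as the substantive analytic work, offering both the modern Bisognano--Wichmann route through wedge duality and Araki's original Cauchy-data reduction. Your remark that standardness follows once cyclicity and the easy half of duality are in hand (so that full duality is not even needed for the second bullet) is also correct. One small quibble: in the cyclicity argument the relevant analytic object is the map $a\mapsto(\xi,E(\tau_a f))$ for a \emph{fixed} $f\in\D(O)$, which vanishes for all sufficiently small real $a$ because $\tau_a f$ remains supported in $O$; the vanishing then propagates by holomorphy in the tube. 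Your phrasing ``boundary value vanishes on the open set $O$'' slightly obscures this, and the tool needed is ordinary analytic continuation rather than the full edge-of-the-wedge theorem, but the substance is right.
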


\subsection{Second quantization and field algebras}

Let $\Hilb$ be the one-particle Hilbert space, and let $\Hilb_F=e^\Hilb$ be the bosonic Fock Hilbert space constructed from $\Hilb$. A \emph{coherent vector} is the element of $\Hilb_F$ given by \cite{guido08}
\begin{equation}
	e^{f}:=\bigoplus_{n=0}^\infty\frac{f^{\otimes n}}{\sqrt{n!}},\qquad f\in\Hilb.
\end{equation}
A \emph{Weyl operator} is a map $W:\Hilb\to\Bound(\Hilb_F)$ such that \cite{guido08, bogoliubov90}
\begin{subequations}
\begin{align}
	W(f)\Omega &= e^{-\frac12\norm f^2}e^{if},\qquad\forall f\in\Hilb\\
	W(f)W(g) &= e^{-\frac12\Im(f,g)}W(f+g),\qquad\forall f,g\in\Hilb,
\end{align}
\end{subequations}
where $\Omega\in\Hilb_F$ denotes the Fock vacuum vector. It can be shown \cite{guido08} that the system of coherent vectors is total in $\Hilb_F$, and hence each $W(f)$, $f\in\Hilb$ extends uniquely to a unitary operator on $\Hilb_F$.

Let now $K\subset\Hilb$ be a closed real Hilbert subspace of $\Hilb$. The \emph{von Neumann field algebra} associated with $K$ is the algebra $\A R(K)$ generated by
\begin{equation}\label{eq:vn_net}
	\A R(K):=\{W(f)\ |\ f\in K\}''.
\end{equation}
A couple of useful results concerning von Neumann field algebras associated with real Hilbert subspace are the following \cite{guido08}
\begin{lemma} \label{lem:R_of_K} Let $K\subset\Hilb$ be a real Hilbert subspace of the one-particle Hilbert space $\Hilb$. The vacuum vector $\Omega\in\Hilb_F$ is
\begin{enumerate}
	\item separating for $\A R(K)$ iff $K$ is separating;
	\item cyclic for $\A R(K)$ iff $K$ is cyclic.
\end{enumerate}
\end{lemma}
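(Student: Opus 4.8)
The plan is to treat cyclicity and separation separately, in each case reducing the Fock-space statement to a property of the complexification $K_\IC := \overline{K+iK}$, the smallest closed complex subspace of $\Hilb$ containing $K$, together with the totality of coherent vectors.

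\emph{Cyclicity.} First I would compute the orbit $\A R(K)\Omega$. From the defining relation $W(f)\Omega=e^{-\frac12\norm f^2}e^{f}$ and the Weyl relations, every product $W(f_1)\cdots W(f_n)\Omega$ with $f_j\in K$ is a scalar multiple of $e^{f_1+\cdots+f_n}$, and $f_1+\cdots+f_n\in K$ because $K$ is a real subspace; taking complex combinations gives $\Span_\IC\{e^{f}\ |\ f\in K\}\subseteq\A R(K)\Omega$. Conversely the same computation yields $W(f)e^{g}=(\text{scalar})\,e^{f+g}$, so the closed subspace $e^{K_\IC}:=\overline{\Span_\IC\{e^{g}\ |\ g\in K_\IC\}}$ is invariant under each $W(f)$, $f\in K$; since $W(f)^*=W(-f)$ with $-f\in K$, the orthogonal projection onto $e^{K_\IC}$ lies in $\A R(K)'$, and as $\Omega=e^{0}\in e^{K_\IC}$ this forces $\A R(K)\Omega\subseteq e^{K_\IC}$. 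Combining the two inclusions,
\[
	\overline{\A R(K)\Omega}=\overline{\Span_\IC\{e^{f}\ |\ f\in K\}}=e^{K_\IC},
\]
the last equality being the standard totality of coherent vectors, which I would justify by differentiating $t\mapsto e^{tf}$ ($t\in\IR$, $f\in K$) at $t=0$ to recover each symmetric power $f^{\otimes n}$ and then polarising to fill out the symmetric Fock space over $K_\IC$ (cf. \cite{guido08, reed75}). Hence $\Omega$ is cyclic for $\A R(K)$ iff $e^{K_\IC}=\Hilb_F$, i.e. iff $K_\IC=\Hilb$, i.e. iff $\overline{K+iK}=\Hilb$, which is precisely cyclicity of $K$.

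\emph{Separation.} Here I would pass to the commutant, using the elementary von Neumann algebra fact that $\Omega$ is separating for $\A R(K)$ iff it is cyclic for $\A R(K)'$. Writing the symplectic complement in the real Hilbert space $(\Hilb,\Re(\cdot,\cdot))$ one checks $K'=(iK)^{\perp_\IR}$ and $iK'=K^{\perp_\IR}$, whence $\overline{K'+iK'}=(K\cap iK)^{\perp_\IR}$; thus $K'$ is cyclic iff $K\cap iK=\{0\}$, i.e. iff $K$ is separating. It then remains to identify $\A R(K)'$ with $\A R(K')$: invoking the symplectic duality $\A R(K)'=\A R(K')$ (lemma \ref{lem:araki_haag_duality}, cf. \cite{guido08}) and the cyclicity criterion already proved, applied to $K'$, gives the chain: $\Omega$ separating for $\A R(K)\iff\Omega$ cyclic for $\A R(K')\iff K'$ cyclic $\iff K$ separating.

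I expect the hard inclusion $\A R(K)'\subseteq\A R(K')$ — Araki's duality for the second-quantised net — to be the only substantial input; the remaining steps are the coherent-vector computation and the elementary real geometry of symplectic complements. The easy inclusion $\A R(K')\subseteq\A R(K)'$ is immediate, since $\Im(f,g)=0$ for $f\in K'$, $g\in K$ makes the corresponding Weyl operators commute, and this already suffices for the implication $K$ separating $\Rightarrow\Omega$ separating. Should one wish to avoid the full duality for the converse, the direction $\Omega$ separating $\Rightarrow K$ separating can instead be argued by contraposition: if $0\neq f\in K\cap iK$ then both $f$ and $if$ lie in $K$, so the annihilation operator $a(f)$, being a complex-linear combination of the self-adjoint generators $\Phi(f)$ and $\Phi(if)$, is affiliated to $\A R(K)$ and satisfies $a(f)\Omega=0$; a nonzero bounded spectral projection of $a(f)^*a(f)$ then gives a nontrivial element of $\A R(K)$ annihilating $\Omega$, so $\Omega$ is not separating.
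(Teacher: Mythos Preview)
The paper does not actually prove this lemma: it is stated in the appendix as a known fact, attributed to \cite{guido08}, with no argument supplied. So there is no ``paper's proof'' to compare against, and your write-up stands on its own merits.

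Your outline is sound. Two remarks are worth making. First, a harmless slip of convention: the paper has $W(f)\Omega=e^{-\frac12\norm f^2}e^{if}$, not $e^{f}$, so the orbit $\A R(K)\Omega$ is the closed span of $\{e^{if}:f\in K\}=\{e^{g}:g\in iK\}$; since $\overline{iK+i(iK)}=\overline{K+iK}=K_\IC$, your identification $\overline{\A R(K)\Omega}=e^{K_\IC}$ survives unchanged. Second, and more substantively, watch the appeal to lemma~\ref{lem:araki_haag_duality}: as stated in the paper it assumes $K$ is already \emph{standard}, so using the full duality $\A R(K)'=\A R(K')$ to establish the separating equivalence would be circular. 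You clearly saw this, because you supplied the two pieces that avoid it --- the easy inclusion $\A R(K')\subseteq\A R(K)'$ (which, combined with cyclicity for $K'$, gives ``$K$ separating $\Rightarrow\Omega$ separating'') and the annihilation-operator contraposition for the converse. Those are the arguments that should be presented as the actual proof, not as a fallback; the reference to lemma~\ref{lem:araki_haag_duality} should be dropped from the separation half altogether.
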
\label{lem:araki_haag_duality}
\begin{lemma} Let $K$ be a standard subspace of the Hilbert space $\Hilb$. Then the von Neumann algebra $\A R(K)$ associated with $K$ is such that $\A R(K')=\A R(K)'$.
\end{lemma}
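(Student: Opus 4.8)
The plan is to establish the two inclusions $\A R(K')\subseteq\A R(K)'$ and $\A R(K)'\subseteq\A R(K')$ separately, the first being elementary and the second requiring modular theory. For the first, one reads off from the Weyl relations that $W(f)W(g)=e^{-\Im(f,g)}W(g)W(f)$, so that $W(f)$ and $W(g)$ commute exactly when $\Im(f,g)=0$. Thus if $f\in K'$ then $\Im(f,h)=0$ for every $h\in K$, and $W(f)$ belongs to the commutant of the generating family $\{W(h)\mid h\in K\}$, which is nothing but $\A R(K)'$. Since $\A R(K)'$ is a von Neumann algebra it absorbs the bicommutant $\{W(f)\mid f\in K'\}''=\A R(K')$, yielding $\A R(K')\subseteq\A R(K)'$.

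For the reverse and harder inclusion I would invoke Tomita--Takesaki theory. Because $K$ is standard, Lemma~\ref{lem:R_of_K} ensures that the Fock vacuum $\Omega$ is cyclic and separating for $\A R(K)$, so $(\A R(K),\Omega)$ is a standard pair admitting a modular operator $\Delta_{\A R(K)}$ and modular conjugation $J_{\A R(K)}$. At the one-particle level the standard subspace $K$ possesses its own Tomita operator $S_K=J_K\Delta_K^{1/2}$, the closed antilinear involution fixing $K$ pointwise on the dense domain $K+iK$; the duality for standard subspaces (Rieffel--Van Daele) then gives $S_{K'}=S_K^*$, hence $J_{K'}=J_K$ and $\Delta_{K'}=\Delta_K^{-1}$, and in particular the one-particle duality $J_KK=K'$.

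The decisive step is the functoriality of second quantization for the modular data: I would show that $J_{\A R(K)}$ and $\Delta_{\A R(K)}$ are exactly the second quantizations $\Gamma(J_K)$ and $\Gamma(\Delta_K)$ of the one-particle modular objects. Concretely this amounts to checking, on the total set of coherent vectors, that the antiunitary $\Gamma(J_K)$ together with the second-quantized $\Delta_K$ implements the modular (KMS) structure of $(\A R(K),\Omega)$. Granting this, Tomita--Takesaki yields $\A R(K)'=J_{\A R(K)}\A R(K)J_{\A R(K)}=\Gamma(J_K)\A R(K)\Gamma(J_K)$, and since $\Gamma(J_K)W(f)\Gamma(J_K)=W(J_Kf)$ we obtain
\begin{equation*}
	\A R(K)'=\{W(J_Kf)\mid f\in K\}''=\{W(g)\mid g\in J_KK\}''=\{W(g)\mid g\in K'\}''=\A R(K'),
\end{equation*}
the third equality using the one-particle duality $J_KK=K'$. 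Combined with the first inclusion this gives $\A R(K')=\A R(K)'$.

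The hardest part will be precisely this functoriality statement, i.e. identifying the modular conjugation of $\A R(K)$ with the antiunitary $\Gamma(J_K)$ induced on $\Hilb_F$ by $J_K$: this is where the passage from the geometric one-particle structure to the algebraic structure on Fock space takes place, and it must be justified either by an explicit computation on coherent vectors or by appealing to the Araki--Eckmann--Osterwalder second-quantization theorem, as done in \cite{guido08}. Everything else reduces to routine bookkeeping with the Weyl relations and the one-particle duality $J_KK=K'$.
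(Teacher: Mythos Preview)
The paper does not supply its own proof of this lemma: it is stated in the appendix as one of ``a couple of useful results'' quoted from \cite{guido08}, with no argument given. Your sketch is correct and is precisely the standard proof one finds in that reference---the easy inclusion from the Weyl relations, and the hard one by second-quantizing the one-particle modular data $(J_K,\Delta_K)$ and invoking Tomita's theorem together with $J_KK=K'$---so you have effectively reconstructed the argument the paper defers to.
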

Let $O\mapsto K(O)$, $O\subset\IR^4$, be a net of closed real Hilbert subspaces of $\Hilb$, and let $\A R(K)$ be the net of von Neumann algebras defined in \eqref{eq:vn_net}. A new net of von Neumann algebras, indexed by open regions of $\IR^4$ may then be constructed by
\begin{subequations}\label{eq:local_field_algebra}
	\begin{equation}
		O\mapsto \A R(O),
	\end{equation}
where
	\begin{equation}
		\A R(O) := \A R(K(O)).
	\end{equation}
\end{subequations}
If we restrict our attention to the family of double cones $\mathcal K$ in $\IR^4$, then the so-called \emph{Haag-duality} holds \cite{haag96}, namely
	\begin{equation}
		\A R (K') = \A R(K)',\qquad \forall K\in\mathcal K,
	\end{equation}
which is an extension of the causality requirement
	$$\A R(O')\subset \A R(O)'$$
for any open region $O\subset\IR^4$.

	\bibliography{refs}
	
\end{document}